\newtheorem{theorem}{Theorem}
\shorttitle{Hysteresis and non-Newtonian rheology in a sheared gas-solid suspension}
\title{Revisiting ignited-quenched transition and the non-Newtonian rheology of a sheared dilute gas-solid suspension}
\author{Saikat Saha\aff{1}
 \and Meheboob Alam \aff{1}\corresp{\email{meheboob@jncasr.ac.in}}}
\affiliation{\aff{1}Engineering Mechanics Unit, Jawaharlal Nehru Centre for Advanced Scientific Research,
Jakkur P.O., Bangalore 560064, India}
\begin{document}

\maketitle

\begin{abstract}
 The hydrodynamics and rheology of  a  sheared dilute gas-solid suspension, consisting of  inelastic hard-spheres  suspended in a gas,
 are analysed using  anisotropic Maxwellian as the single particle distribution function.
The closed-form solutions for    granular temperature ($T$) and three invariants of the second-moment tensor are obtained
  as functions of the Stokes number ($St$), the mean density ($\nu$)   and the restitution coefficient ($e$).
 Multiple states of high and low temperatures are found when the Stokes number is small,
 thus recovering the  ``ignited'' and ``quenched'' states, respectively, of Tsao \& Koch (J. Fluid Mech.,1995, vol. 296, pp. 211-246).
 The phase diagram is constructed  in the three-dimensional ($\nu, St, e$)-space  
that delineates the regions of ignited and quenched states and their coexistence.
Analytical expressions for the particle-phase shear viscosity and the normal stress differences are obtained,
along with related scaling relations on the quenched and ignited states.
 At any $e$, the shear-viscosity  undergoes a discontinuous jump with increasing shear rate (i.e.~ discontinuous shear-thickening)   at the ``quenched-ignited'' transition.
  The first (${\mathcal N}_1$) and second (${\mathcal N}_2$) normal-stress differences also undergo similar first-order transitions: 
 (i) ${\mathcal N}_1$ jumps from large to small positive values  and (ii) ${\mathcal N}_2$ from positive to negative values with increasing $St$,
 with the sign-change of ${\mathcal N}_2$ identified with the system making a transition from the quenched to ignited states.
 The superior prediction of the present theory over the standard Grad's method and the Chapman-Enskog solution
 is demonstrated via comparisons of transport coefficients with simulation data  for a range of Stokes number and restitution coefficient.
 \end{abstract}

\section{Introduction}
\label{sec:introduction}

During the last few decades, a lot of research has been done to understand  the behaviours of  rapid granular flows  \citep{SJ1981,LSJC1984,JR1985,Campbell1990,SG1998,BDKS1998,Goldhirsch2003,RN2008,FP2008}, 
a collection macroscopic inelastic (the restitution coefficient $e<1$) hard-particles for which the effect of the interstitial fluid is neglected,
and the tools from dense-gas kinetic theory have been successfully employed to understand its hydrodynamics and rheology. 
The closely related research-area  of  gas-solid suspensions~\citep{DH1963,AJ1968,Buyevich1971,Gidaspow1994,Jackson2000,GM2011},
in which the viscous drag due to interstitial fluid and other related hydrodynamic effects must be incorporated,
 has also been extensively studied over the last century due to its importance in fluidized-bed and FCC reactors ~\citep{DH1963,Gidaspow1994} 
 encountered in chemical and process industries.
For continuum models of  gas-solid suspensions, the kinetic-theory-based rheological  models have been suggested 
by considering elastically colliding particles~\citep{Koch1990,TK1995}
as well as for inelastic particles~\citep{LMJ1991,Sangani1996,LS2003} interacting in a bath of a Newtonian gas.

For the present problem of a sheared gas-solid suspension of inelastic particles, 
the energy input due to shear is compensated by two mechanisms, (i) inelastic inter-particle collisions, characterized by a coefficient of normal restitution ($e$)
and (ii) the  drag force which the surrounding fluid exerts on the particles. 
The volume fraction of the suspended particles (of diameter $\sigma$ and mass $m$) is assumed to be small, i.e.~ $\nu=\pi \sigma^3 n/6\ll 1$, representing a `dilute' suspension,
 along with the conditions of (ii) small Reynolds number $\Rey=\rho_g{\dot\gamma}\sigma^2/\mu_g\ll 1$
 (where $\rho_g$ and $\mu_g$ are the  gas density and its viscosity, respectively, and ${\dot\gamma}$ is the imposed shear rate on the suspension)
  and (iii) finite Stokes number 
\begin{equation}
  St={\dot\gamma}\tau_v,
  \quad
  \mbox{with}
  \quad
   \tau_v=m/(3\pi\mu_g \sigma) 
\end{equation}
being  the viscous relaxation time which  is a measure of the time a typical particle takes to relax back to the local fluid velocity. 
The  limit of $St\to\infty$ represents the `dry' granular gas~\citep{Campbell1990,Goldhirsch2003}.
Under the above assumptions, \cite{TK1995} analysed the hydrodynamics and the non-Newtonian rheology of a dilute suspension of elastic ($e=1$) hard-particles
employing the Grad's moment-expansion method (i.e. an expansion in terms of Hermite polynomials around a Maxwellian, ~\cite{Grad1949}). 
They discovered  two qualitatively different states, dubbed (i) ``quenched''  (low temperature) and (ii) ``ignited''  (high temperature) states, 
corresponding to the time intervals (i) $\tau_c\gg \tau_v \gg {\dot\gamma}^{-1}$ and (ii) $\tau_c \ll {\dot\gamma}^{-1} \ll \tau_v$, respectively,
where  $\tau_c$ is the collision time (i.e, the average time between two successive collisions).
They analytically determined two critical Stokes numbers $St_{c_1}$ and $St_{c_2}$ (with $St_{c_2}>St_{c_1}$), below and above which
 the flow remains in the quenched and ignited states, respectively.
 They also determined the shear viscosity and the first and second normal-stress differences, and compared their theory with DSMC (direct simulation Monte Carlo) data.

\cite{Sangani1996} extended the work of \cite{TK1995} 
to  (i) a `dense' gas-solid suspension of elastic ($e=1$) particles as well as to  (ii) a `dilute' suspension of inelastic ($e< 1$) particles.
The same Grad moment-expansion was used  to derive constitutive relations from the underlying Enskog-Boltzmann equation;
 but their analysis  is deficient in the sense that they found zero value for the second normal stress difference
as they did not incorporate certain non-linear terms (see \S5 in this work).
They briefly discussed about the lower limit of Stokes number $St_{c_1}$,
but a thorough analysis of the ``ignited-quenched'' transitions, identifying the regions  for the existence of different states,
 in terms of Stokes number ($St$), particle volume fraction ($\nu$) and the coefficient of restitution ($e$) has not been worked out till date.
The latter effect of the restitution coefficient is  important for dissipative particles
which forms one motivation of the present work.

In the current decade, \cite{PS2012} analysed a sheared gas-solid suspension  by considering a distribution function that sandwiches both the ignited and quenched states --
the resulting rheological fields are reasonably well-predicted over a range of density and Stokes number,
although quantitative mis-match with simulation data exists that increase with increasing  dissipation (i.e. at smaller  $e$).
A Navier-Stokes-order continuum model  has been developed by ~\cite{GTSH2012}  for a moderately-dense gas-solid suspension following dense-gas kinetic theory.
They  solved the underlying Enskog-Boltzmann equation using a Chapman-Enskog-like expansion around
a time-dependent homogeneous cooling state for a gas-solid suspension, and the particle motion has been modelled via a Langevin-type stochastic model with Stokesian drag.
The resulting transport coefficients for the particle-phase are found to have explicit dependence on the gas-phase parameters.
However, the prediction of the latter  model for the shear viscosity of a suspension indicates large discrepancies with simulation data in the dilute limit
of low-$St$ suspension, presumably due to the presence of order-one values of normal stress differences and other non-Newtonian effects.
A related work to uncover  the non-Newtonian rheology of a `dilute' gas-solid suspension has been done recently by ~\cite{CRG2015}.
They  followed the standard Grad's method to analyse the ignited state of a gas-solid suspension,
and the related predictions on the granular temperature and the non-Newtonian stress tensor are found to be quantitatively similar to the earlier work of \cite{TK1995};
for example, the suspension viscosity is over-predicted by the Grad's moment-theory at smaller values of $e$, although the discrepancy decreases with increasing Stokes number.
Collectively, the above literature review  points toward the need to go beyond the well-studied Newtonian rheology (of Navier-Stokes-order)
for both dry granular and gas-solid suspensions.

In this paper, we  revisit and extend  the work of \cite{TK1995} by considering  a dilute system of inelastic ($e\leq 1$) particles suspended in a bath of a Newtonian gas,
and interacting via (i) a Stokeian drag force and (ii) hard-core  inelastic collisions. 
Our work  differs from all previous works on gas-solid suspensions as we adopt  the anisotropic Maxwellian distribution function~\citep{GT1978,JR1988,Richman1989} 
to analyse the underlying Boltzmann equation under homogeneous shearing conditions.
The latter assumption is motivated from our recent work \citep{SA2014,SA2016,AS2017} on `dry' ($St\to\infty$) sheared granular fluid
which established that the  transport coefficients for highly inelastic system ($e\ll 1$) of a  sheared granular fluid (both dilute and dense)
can be accurately predicted by the anisotropic Maxwellian [in comparison to (i) the  standard Grad's moment
expansion (in terms of a truncated Hermite series around a Maxwellian) as well as (ii) the Burnett-order solutions obtained from Chapman-Enskog expansion].
Here we demonstrate the superiority of the former for the case of  a sheared gas-solid suspension via a one-to-one comparison 
of two theories with simulation data.   Another  focus of the present work is to analyse and quantify  the anisotropy of 
the second-moment, ${\mathsfb M}=\langle {\boldsymbol C}{\boldsymbol C} \rangle$, of fluctuation/peculiar  velocity,
 and subsequently  tie and explain the rheological/transport coefficients of  a sheared  gas-solid suspension in terms of the anisotropies of ${\mathsfb M}$.
 The underlying analysis utilizes the  geometric structure of  the eigen-basis of both the shear tensor and the second-moment tensor;
 this  provides geometric insight into the origin of normal stress differences as found for the case of a sheared granular fluid~\citep{SA2016}.
 It must be noted that the analysis of stress anisotropy in this form was  initiated in a seminal work by \cite{GT1978} 
 and subsequently by others  \citep{AT1986,Araki1988,Shukhman1984,JR1988,Richman1989} and the present effort is a continuation of the same legacy
 to the case of a sheared gas-solid suspension.

This paper is organized as follows.  A brief account of the problem and the governing equations for the gas and particle phases 
are given in \S \ref{sec:Problem_description}. 
The anisotropic-Maxwellian distribution function is introduced in \S2.1 which is employed to analyse the ``ignited'' state of
sheared gas-solid suspension; the second moment tensor for the uniform shear flow  is constructed in \S2.1.1 in terms of its eigen-basis.  
The source term of the second moment balance equation is calculated in \S2.1.2 and \S2.2 for the ignited and quenched states, respectively.
The  second-moment balance combining both ignited ($I$)  and quenched ($Q$) states is analysed in \S2.3.
The multi-stability and hysteresis transitions in granular temperature are analysed in detail in \S3, 
along with (i) the validation and superiority of the present analysis in \S3.1,
(ii) analytical solutions for temperatures in three states in \S3.2 and (iii) the critical Stokes numbers for ``$I\leftrightarrow Q$'' transitions in \S3.3.
The non-Newtonian rheology (shear-thickening, normal stress differences) is analysed in \S4.2 and \S4.3,
 in terms of the anisotropies of the second-moment tensor (\S4.1).
 The relative merits of the present theory over the standard Grad's moment-expansion  and Chapman-Enskog expansion
 are analysed  in \S5 via comparisons with available simulation data.
 The conclusions are given in \S6. The mathematical details of various analyses are relegated to Appendices A to F.

\section{Problem description and the kinetic-theory analysis}
\label{sec:Problem_description}

We examine the uniform shear flow of a dilute gas-solid suspension in the absence of gravity,
with a collection of smooth inelastic spheres of mass $m$ and diameter $\sigma$ being suspended in a gas;  
with $x$, $y$ and $z$ pointing the velocity, gradient and vorticity directions (see figure~\ref{fig:schematic}), respectively, 
the velocity field for the suspension is given by
\begin{equation}
   {\boldsymbol u} \equiv (u_x,u_y,u_z) = (\dot{\gamma}y, 0, 0),
 \label{eqn:homogeneous_shear_eqn}
\end{equation}
where $\dot{\gamma}$ is the overall shear rate. 
 We are interested in a steady state suspension where the fluid inertia is very small  but the particle inertia remains finite. 
 Under the assumptions of the smallness of particle Reynolds number,
 the gas-phase obeys    the  Stokes equations of motion 
 \begin{equation}
  \mu_g \nabla^2v_i=\nabla_i p_g,\qquad \nabla_i v_i=0,
  \label{eqn:Stokes_equations}
 \end{equation}
 where $\mu_g$ is the shear viscosity of the gas. The velocity profile  (\ref{eqn:homogeneous_shear_eqn})  satisfies (\ref{eqn:Stokes_equations}).

For the particle-phase, we adopt the kinetic theory of granular gases~\citep{CC1970,JR1985,SG1998,BDKS1998,BP2004}.
Any physical quantity at the macroscopic level is defined as the ensemble averaged value of the same at the particle level, using the single particle distribution $f({\boldsymbol c}, {\boldsymbol x}, t)$ function
 \begin{equation}
   \langle \psi({\boldsymbol c}) \rangle\equiv\frac{1}{n}\int \psi{\boldsymbol c} f({\boldsymbol c}, {\boldsymbol x}, t) d{\boldsymbol c},
  \label{eqn:mean_valie_defn}
 \end{equation}
 with $\psi({\boldsymbol c})$ being any particle-level quantity.
 Here $n\equiv n({\boldsymbol x}, t)$ denotes the number density and $\rho({\boldsymbol x}, t)=mn\equiv \rho_p\nu$ is the mass-density of the particle-phase,
 with $\nu$ being the volume fraction of particles and $\rho_p=m/(\pi\sigma^3/6)$ is its intrinsic/material density.
The macroscopic/hydrodynamic velocity ${\boldsymbol u}=\langle{{\boldsymbol c}}\rangle$, the granular temperature $T=\langle{{\boldsymbol C}^2/3}\rangle$ and 
the particle-phase stress tensor ${\boldsymbol P}=\langle{m{\boldsymbol C \boldsymbol C}}\rangle$ are obtained by substituting $\psi={\boldsymbol c},\;\frac{1}{3}{\boldsymbol C}^2\;{\rm and}\;m{\boldsymbol CC}$, respectively, in (\ref{eqn:mean_valie_defn}), where ${\boldsymbol C}={\boldsymbol c}-{\boldsymbol u}$, is the peculiar velocity.

 For a dilute suspension ($\nu \ll 1$),
 the evolution of the single particle distribution function $(f({\boldsymbol c}, {\boldsymbol x}, t))$ follows the celebrated Boltzmann equation \citep{CC1970}
\begin{equation}
\left(\frac{\partial}{\partial t}+ {\boldsymbol c}\cdot\nabla\right)f+ \nabla_{\boldsymbol c}\cdot\left(f \frac{{\rm d}{\boldsymbol c}}{{\rm d}t}\right) = \Big(\frac{\partial f}{\partial t}\Big)_{coll},
\label{eqn:Boltzmann1}
\end{equation}
where $\nabla_{\boldsymbol c}$ is divergence operator in the velocity space;
  the acceleration of the particles is assumed to follow the Stokes's linear drag law:
\begin{equation}
   \frac{{\rm d}{\boldsymbol c}}{{\rm d}t} = - \frac{{\boldsymbol c} - {\boldsymbol v}}{\tau_v},
   \label{eqn:stokesDrag1}
\end{equation}
 with $\tau_v=m/(3\pi\mu_g\sigma)$ being the viscous relaxation time of the particles.
 Equation (\ref{eqn:stokesDrag1}) holds  if the particle Reynolds number  and the density-ratio ($\rho_f/\rho_p$) are  very small; for large Reynolds numbers, a nonlinear form of the drag-law would be necessary.
The hydrodynamic interactions have been neglected throughout the present analysis -- the particles are assumed to follow the fluid velocity, i.e., there is no 
slip ($ {\boldsymbol v}= {\boldsymbol u}$).
 These additional effects and a complete analysis of the particle-phase rheology in the dense limit (based on Enskog equation) will be considered in a future work.

For the present problem of the steady homogeneous shear flow, the mass-density $\rho$, the velocity gradient ${\boldsymbol \nabla u}\propto {\dot\gamma}$ 
and the stress tensor ${\mathsfb P}$ are constants and the heat flux vanishes. 
In this case the balance equations for mass and linear momentum are  identically satisfied
and the balance of the second moment of fluctuation velocity, ${\mathsfb M}=\langle{{\boldsymbol C \boldsymbol C}}\rangle$, reduces to
\begin{equation}
 {\mathsfb P}\cdot\nabla{\boldsymbol u}+
    \left({\mathsfb P}\cdot\nabla{\boldsymbol u}\right)^T +\frac{2{\dot\gamma}}{St} {\mathsfb P} = {\boldsymbol\aleph},
\label{eqn:balance_second_moment_usf}
\end{equation}
where $St=\dot\gamma\tau_v$ is the Stokes number and ${\boldsymbol\aleph}$ is the source (collisional production) of second moment, given by~\citep{JR1985,SA2014}
\begin{eqnarray}
 {\boldsymbol\aleph} &=& \int m{\boldsymbol C}{\boldsymbol C} \Big(\frac{\partial f}{\partial t}\Big)_{coll} d{\boldsymbol C}
                    =\frac{\sigma^2}{2}\int \Delta\Big(m{\boldsymbol C}{\boldsymbol C}\Big)f({\boldsymbol C}_1)f({\boldsymbol C}_2)  d{\boldsymbol C}_1d{\boldsymbol C}_2,
 \label{eqn:source_second_moment_usf}                    
\end{eqnarray}
with
\begin{equation}
   \Delta\Big(m{\boldsymbol C}{\boldsymbol C}\Big) =
   -\frac{m}{2}(1+e)({\boldsymbol g}\cdot{\boldsymbol  k})\left[ (1-e)({\boldsymbol g}\cdot{\boldsymbol k}){\boldsymbol k}{\boldsymbol k}
      + ({\boldsymbol j}{\boldsymbol k} + {\boldsymbol k}{\boldsymbol j}){\boldsymbol g}{\boldsymbol j} \right],
\end{equation}
where ${\boldsymbol g}={\boldsymbol c}_1-{\boldsymbol c}_2$ is the relative velocity between two colliding particles $1$ and $2$;
${\boldsymbol k}\equiv{\boldsymbol k}_{12}=({\boldsymbol x}_1-{\boldsymbol x}_2)/|{\boldsymbol x}_1-{\boldsymbol x}_2|$ is the unit contact vector 
joining the center of particle-$1$ to that particle-$2$, and ${\boldsymbol j}$ is its normal.

With an appropriate choice of the distribution function $f({\boldsymbol c}, {\boldsymbol x}, t)$, the collision integral (\ref{eqn:source_second_moment_usf}) can be evaluated,
which will be plugged into (\ref{eqn:balance_second_moment_usf}) to carry out the analysis for the particle-phase rheology and hydrodynamics
of a sheared gas-solid suspension.

\subsection{Analysis in the ignited sate}
\label{sec:ignited_state}

The ``{\it ignited}'' state~\citep{TK1995} represents the hydrodynamic state of fluidized-particles in rapid granular flow~\citep{Goldhirsch2003}, 
where the particles fly around randomly in between two collisions without getting much affected by the viscous drag  of the interstitial fluid.
A typical particle encounters successive collisions with other particles again and again before it can relax back to the local fluid velocity and hence 
the collision time is much smaller than the viscous relaxation time ($\tau_c\ll\tau_v$). 
 In this state, the particles have  strong velocity fluctuations,  resulting in  $T/\dot\gamma \sigma \gg 1$.

As in our recent work \citep{SA2014,SA2016},  the distribution function in the ignited state of a sheared suspension is assumed  to be an anisotropic Maxwellian,
\begin{equation}
 f({\boldsymbol c}, {\boldsymbol x}, t) =\frac{n}{(8\pi^3 |{\mathsfb M}|)^{1/2}}{\rm exp}\Big(-\frac{1}{2} {\boldsymbol C}\cdot{\mathsfb M}\cdot{\boldsymbol C}\Big),
 \label{eqn:anisotropic_maxwellian}
\end{equation}
 where $|{\mathsfb M}|=det({\mathsfb M})$. 
 This form of the distribution function has been used previously  in studying the  velocity dispersions in Saturn's rings \citep{GT1978,Shukhman1984,AT1986,Araki1988}
 as well as to analyse the  shear flow  of dry rapid granular flows \citep{JR1988,Richman1989,Lutsko2004}.
 
In the isotropic limit,  (\ref{eqn:anisotropic_maxwellian}) reduces to the  Maxwellian distribution function,
and an Hermite expansion of the form
\begin{eqnarray}
  f({\boldsymbol c}, {\boldsymbol x}, t) &=& \frac{n}{(2\pi T)^{3/2}}{\rm exp}\Big(-C^2/2T\Big)\sum_{i}a^{(i)}{\mathcal H}^{(i)} \nonumber \\
  &=&
  \frac{n}{(2\pi T)^{3/2}}{\rm exp}\Big(-C^2/2T\Big)\Big\{1+\frac{1}{2\rho T^2}P_{\langle \alpha\beta\rangle}C_\alpha C_\beta\Big\} + HOT,
 \label{eqn:10_moment_distribution}
\end{eqnarray}
represents the well-known Grad's moment expansion (GME)~\citep{Grad1949} --
such moment expansion has subsequently been employed to solve the Boltzmann equation for molecular gases \citep{HH1982,Kremer2010},
granular gases~\citep{JR1985,KM2011} and gas-solid suspensions \citep{TK1995,Sangani1996,CRG2015}.
Equation (\ref{eqn:10_moment_distribution}) with leading-order term 
($P_{\langle \alpha\beta\rangle}=\rho M_{\alpha\beta}-T\delta_{\alpha\beta}$ is the stress deviator)
yields the 10-moment system of \cite{Grad1949}, with density, velocity, temperature
and stress-deviator constituting the extended set of ten hydrodynamic fields~\citep{SA2016}.

\begin{figure}
\begin{center}
\includegraphics[scale=0.3]{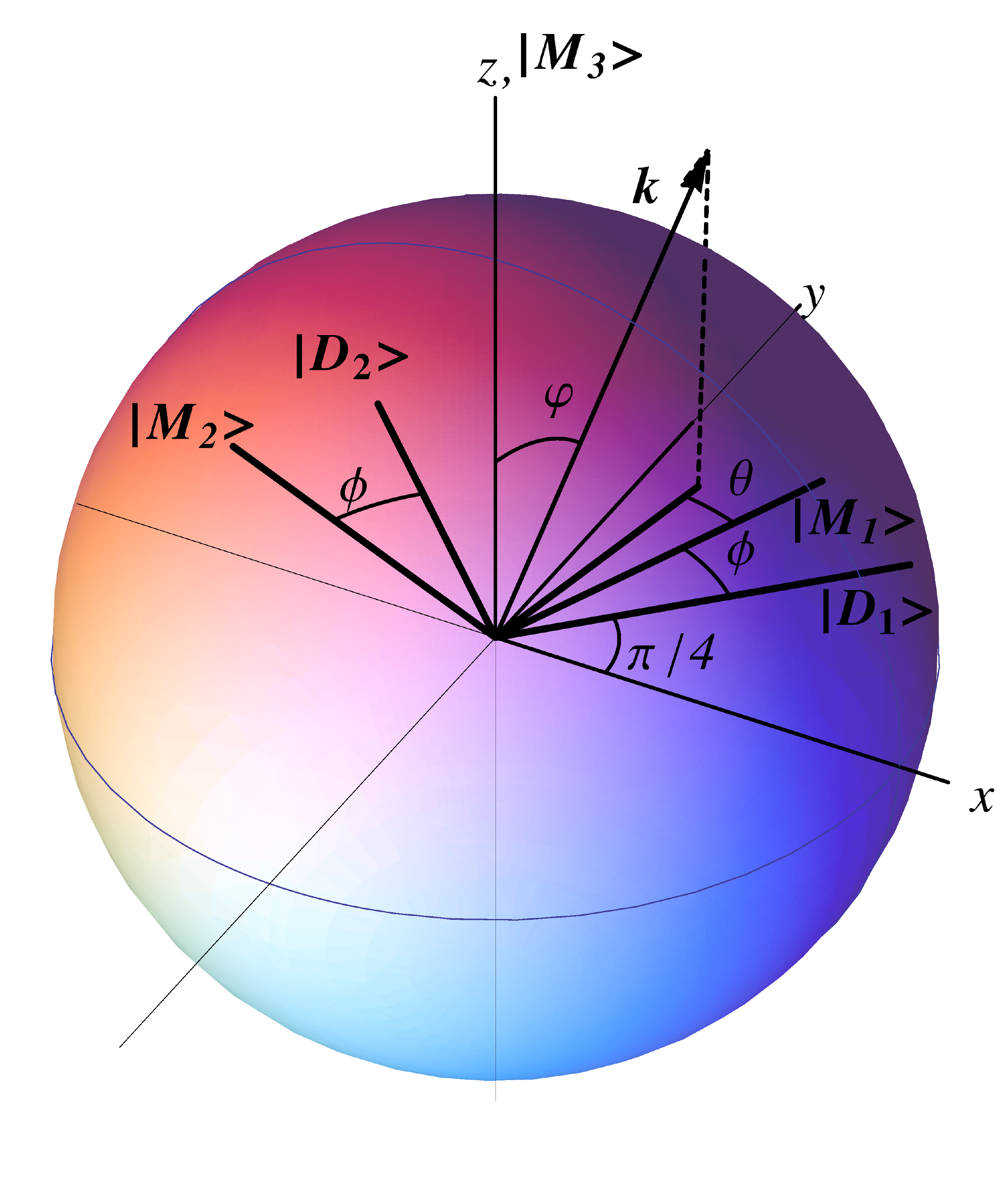}
\end{center}
 \caption{
 Schematic of the co-ordinate system and the eigen-basis for analysis;
the eigen-directions of the shear tensor ${\mathsfb D}$ and the second moment tensor ${\mathsfb M}$ are decpicted.
The uniform shear flow, ${\boldsymbol u}=({\dot\gamma}y, 0,0)$, is directed along the $x$-direction,
with the velocity gradient along the $y$-direction and the mean-vorticity along the $z$-direction.
 }
\label{fig:schematic}
\end{figure}

\subsubsection{Uniform shear flow (USF) and the second moment tensor}

The analysis in this section closely follows the theoretical  framework introduced  by \citep{GT1978,Shukhman1984,AT1986,JR1988,Richman1989}.
For the uniform shear flow, the velocity gradient tensor can be decomposed as
\begin{equation}
   \bnabla{{\boldsymbol u}} =
 {\mathsfb D} + {\mathsfb W}\equiv
 \left[
\begin{array}{ccc} 
 0 & \dot\gamma/2 & 0\\
 \dot\gamma/2 & 0 & 0 \\
 0 & 0 & 0
\end{array}
\right]
\quad
+
\quad 
\left[
\begin{array}{ccc} 
 0 & \dot\gamma/2 & 0\\
 -\dot\gamma/2 & 0 & 0\\
 0 & 0 & 0
\end{array}
\right] , 
\label{eqn:VelGrad1}
\end{equation}
where $\mathsfb D$ and $\mathsfb W$ are the shear and spin tensors, respectively.
Referring to figure~\ref{fig:schematic}, the ($x, y$)-plane is dubbed the {\it shear plane} and the $z$-direction is the {\it vorticity direction}. 
The eigenvalues of ${\mathsfb{D}}$ are $\dot{\gamma}/2$, $-\dot{\gamma}/2$ and $0$, with the corresponding orthonormal eigenvectors, respectively,
\begin{equation}
 \label{eigenvectorsofD}
 |D_1\rangle =
\left[
\begin{array}{ccc} 
 \cos\frac{\pi}{4}\\
 \sin\frac{\pi}{4}\\
 0
\end{array}
\right],
\quad
|D_2\rangle =
\left[
\begin{array}{ccc} 
 -\sin\frac{\pi}{4}\\
 \:\:\:\cos\frac{\pi}{4}\\
  0
\end{array}
\right] 
\quad
\mbox{and}
\quad
|D_3\rangle =
\left[
\begin{array}{ccc} 
 0\\
 0\\
  1
\end{array}
\right], 
\end{equation}
that are sketched in figure~\ref{fig:schematic}.
While $|D_3\rangle$ is directed along the $z$-axis, the shear-plane eigenvectors
$|D_1\rangle$ and $|D_2\rangle$ are rotated by $45\,^{\circ}$ anticlockwise from the $xy$-axes.

Since the granular temperature $T=M_{\alpha\alpha}/3$ is the isotropic measure  of the second moment tensor $\mathsfb{M}=\langle {\boldsymbol C}{\boldsymbol C}\rangle$,
we can decompose it as $ {\mathsfb{M}}/{T}= {\mathsfb I} + {\widehat{\mathsfb{M}}}/{T}$,
where $\widehat{\mathsfb{M}}/T$ is the dimensionless counterpart of its deviatoric/traceless tensor.
The eigenvalues of ${\mathsfb{M}}$ are denoted by $T(1+\xi)$, $T(1+\varsigma)$ and $T(1+\zeta)$,
with  $\xi$, $\varsigma$ and $\zeta$  being the eigenvalues of $\widehat{\mathsfb{M}}/T$ such that
\begin{equation}
   \xi+ \varsigma+\zeta=0.
\end{equation}
 The corresponding orthonormal set of eigen-directions
are assumed to be $|M_1\rangle$, $|M_2\rangle$ and $|M_3\rangle$, respectively,
as depicted in  figure~\ref{fig:schematic}.
Therefore, the second-moment tensor ${\mathsfb{M}}$ can be written in terms of its eigen-basis:
\begin{equation}
 \mathsfb{M} = 
    T(1+\xi)|M_1\rangle\langle M_1|+T(1+\varsigma)|M_2\rangle\langle M_2|+T(1+\zeta)|M_3\rangle\langle M_3| .
\label{eqn:M}
\end{equation}

Referring to figure~\ref{fig:schematic}, we assume that the shear-plane eigenvectors $|M_1\rangle$ and
$|M_2\rangle$ can be obtained by rotating the system of axes at an angle $(\pi/4+\phi)$,
with $\phi$ being unknown, 
in the anti-clockwise sense about the $z$-axis which coincides with $|M_3\rangle$:
\begin{equation}
 \label{eigenvectorsofM}
 |M_1\rangle =
\left[
\begin{array}{ccc} 
 \cos\left(\phi + \frac{\pi}{4}\right)\\
 \sin\left(\phi + \frac{\pi}{4}\right)\\
 0
\end{array}
\right],
\quad
|M_2\rangle =
\left[
\begin{array}{ccc} 
 - \sin\left(\phi + \frac{\pi}{4}\right)\\
 \:\:\:\cos\left(\phi + \frac{\pi}{4}\right)\\
  0
\end{array}
\right] 
\quad
\mbox{and}
\quad
|M_3\rangle =
\left[
\begin{array}{ccc} 
 0\\
 0\\
  1
\end{array}
\right]. 
\end{equation}
We further assume that the contact vector ${{\boldsymbol k}}$ makes an angle $\varphi$ with $|M_3\rangle$, 
and $\theta$ is the angle between $|M_1\rangle$ and 
${\boldsymbol{k}}-({\boldsymbol{k}}\bcdot{\boldsymbol{z}}){\boldsymbol{z}}$,
the  projection of ${\boldsymbol{k}}$ on the shear plane, as shown in figure~\ref{fig:schematic}.
Inserting (\ref{eigenvectorsofM}) into (\ref{eqn:M}), we obtain the
following expression for the second moment tensor
 \begin{equation}
 \mathsfb{M} = 
 T [\delta_{\alpha\beta}] + \widehat{\mathsfb M},
\label{eqn:tensorM}
\end{equation}
with its deviatoric part being given by  
\begin{equation}
 \widehat{\mathsfb M} = T
\left[
\begin{array}{ccc}
  \lambda^2 + \eta\sin2\phi & - \eta\cos2\phi & 0\\
 -\eta\cos2\phi & \lambda^2 - \eta\sin2\phi & 0\\
  0 & 0 & - 2\lambda^2
\end{array}
\right].
\label{eqn:hatM}
\end{equation}
Here we have introduced the following notations
\begin{equation}
  \eta \equiv \frac{1}{2}(\varsigma-\xi) \geq  0
  \quad
  \mbox{and}
  \quad
  \lambda^2 \equiv \frac{1}{2}(\varsigma+\xi) = -\frac{\zeta}{2} \geq  0,
\label{eqn:eta0}
\end{equation}
such
that the eigenvalues in the shear-plane can be expressed 
in terms of $\eta$ and $\lambda$ via
\begin{equation}
\xi =  \lambda^2 -\eta
\quad
\mbox{and}
\quad
\varsigma = \lambda^2 + \eta > \xi,
\label{eqn:M1M2}
\end{equation}
with the eigenvalue, $\zeta$, along the vorticity direction ($z$), being given by (\ref{eqn:eta0}).

Since $\phi=0$ implies that the shear tensor (${\mathsfb D}$) and the second-moment tensor (${\mathsfb M}$) have same principal directions,
a non-zero value of $\phi$ is a measure of  the {\it non-coaxiality} angle between the principal directions of ${\mathsfb D}$ and ${\mathsfb M}$.
It is straightforward to show that $\eta\sim (T_x - T_y)$ is proportional to the difference between two temperatures $T_x$ and $T_y$ on the shear-plane ($x,y$),
and hence $\eta\neq 0$ is indicative of  the degree  of {\it temperature-anisotropy} on the shear plane.
On the other hand, a non-zero value of $\lambda^2$ is a measure of the {\it excess temperature}~\citep{SA2016},
\begin{equation}
  T^{ex}_z = (T-T_z) = 2\lambda^2 T
  \quad
  \Rightarrow\;\;
  \lambda^2 = \frac{T^{ex}_z}{2T},
  \label{eqn:Texcess1}
\end{equation}
along the mean vorticity direction. In summary, the anisotropy of ${\mathsfb M}$ is quantified in terms of three
dimensionless quantities: (i) $\eta\propto(T_x-T_y) \neq 0$, or, $\phi\neq 0$
and (ii) $\lambda^2\propto T_z^{ex} \neq 0$.

The second-moment tensor (\ref{eqn:tensorM}-\ref{eqn:hatM}) in the USF of suspension, constructed from its eigen-basis,
is therefore completely determined when $T$, $\eta$, $\phi$ and $\lambda^2$  are specified;
the dependence on the Stokes number $St$ and the particle volume fraction ($\nu$) is implicit as will be made clear below.

 \subsubsection{Source term in the  ignited state}

Employing (\ref{eqn:anisotropic_maxwellian}), the collisional production term (\ref{eqn:source_second_moment_usf}) for the ignited state has been evaluated
\begin{eqnarray}
& &  \aleph_{\alpha\beta}^{is}=-\frac{6(1+e)\rho_p \nu^2}{\upi^\frac{3}{2}\sigma}\Big\{(1-e)\int k_\alpha k_\beta 
 (\boldsymbol{k}\cdot {\mathsfb M} \cdot \boldsymbol{k})^\frac{3}{2} d\boldsymbol{k} \nonumber\\
   & & \hspace*{4.0cm} + 2\int (k_\alpha j_\beta+j_\alpha k_\beta) 
           (\boldsymbol{k}\cdot {\mathsfb M} \cdot \boldsymbol{k})^\frac{1}{2} (\boldsymbol{k}\cdot {\mathsfb M} \cdot \boldsymbol{j}) d\boldsymbol{k}\Big\}.
    \label{eqn:source_ignited}        \\
 && \qquad =  -\frac{4(1+e)\rho_p\nu^2 T^{3/2}}{35\sigma\sqrt{\pi}} \Big\{(1-e)\times\nonumber\\
 &&
 \left[
 \begin{array}{ccc}
  70+9\eta^2+42\lambda^2+42\eta\sin2\phi & -42\eta\cos2\phi & 0\\
  -42\eta\cos2\phi & 70+9\eta^2+42\lambda^2-42\eta\sin2\phi & 0\\
   0 & 0 & 70+3\eta^2-84\lambda^2
 \end{array}
\right]\nonumber\\
&& \quad + \; 4\left[
 \begin{array}{ccc}
  \eta^2+21\lambda^2+21\eta\sin2\phi & -21\eta\cos2\phi & 0\\
  -21\eta\cos2\phi & \eta^2+21\lambda^2-21\eta\sin2\phi & 0\\
   0 & 0 & -2(\eta^2+21\lambda^2)
 \end{array}
\right] \Big\} ,
\label{eqn:source_ignited_explicit}
\end{eqnarray}
which is a function of $\nu$, $e$, $T$, $\eta$, $\phi$ and $\lambda^2$.
In the final expression (\ref{eqn:source_ignited_explicit}), we have retained  terms that are up-to second-order in $\eta$, $\sin\phi$ and $\lambda$
-- we shall show in the end that this is sufficient to yield accurate predictions of transport coefficients 
of  a sheared  dilute suspension.for a wide range of (i) restitution coefficient $e$ and (ii) Stokes number $St$.

\subsection{Analysis in the quenched sate}
\label{sec:quenched_state}

\cite{TK1995} envisaged a scenario of a dilute gas-solid suspension in which  the particle inertia is  very low  such that the particles tend to align with fluid streamlines after a collision. 
Most of the particles will be having their individual velocity equal to the fluid velocity $({\boldsymbol c}\approx{\boldsymbol u})$ which implies 
that the peculiar velocity ${\boldsymbol C}\approx 0$ and therefore  the particle agitation is very small  ($T/ \dot\gamma \sigma \ll 1$) -- this is dubbed the {\it quenched} state.
The collisions in this state are mainly shear-induced with some occasional variance-driven collisions and the particles relax back to the local fluid velocity after such a collision before they encounter a second collision and therefore the viscous relaxation time is much smaller than the collision time $\tau_v \ll \tau_c$. 
The velocity distribution function of the quenched state is taken to be a delta function
\begin{equation}
  f=n\delta({\boldsymbol C}) ,
 \label{eqn:distribution_quenched}
\end{equation}
which is a solution of the Boltzmann equation.
Using (\ref{eqn:distribution_quenched}), the collisional production term at second-order can be evaluated as
\begin{eqnarray}
 \aleph_{\alpha\beta}^{qs} &=& -\rho{\dot\gamma}^3 \sigma^2\frac{3(1+e)^2\nu}{2\upi}\int_{k_x, k_y> 0}(k_xk_y)^3k_\alpha k_\beta d\boldsymbol{k}, \nonumber \\
 &=& \rho_p{\dot\gamma}^3 \sigma^2\frac{(1+e)^2\nu^2}{16}
\left[
 \begin{array}{ccc}
  \frac{512}{315\upi} & -\frac{16}{35} & 0\\
  -\frac{16}{35} & \frac{512}{315\upi} & 0\\
   0 & 0 & \frac{128}{315\upi}
 \end{array}
\right].
\label{eqn:source_quenched}
\end{eqnarray}
Note that this expression differs from that of \cite{TK1995} by a numerical-factor $2$ which was also noted previously~\citep{PS2012}.

\subsection{Second moment balance combining quenched and ignited states}
\label{sec:quenched_to_ignited}

Combining the ignited and quenched states, the second-order moment balance equation (\ref{eqn:balance_second_moment_usf})
for a `dilute' gas-solid suspension undergoing uniform shear flow is
\begin{equation}
    P_{\delta\beta}u_{\alpha,\delta}+P_{\delta\alpha}u_{\beta,\delta} +\frac{2\dot\gamma}{St} P_{\alpha\beta} =\aleph_{\alpha\beta}
       \equiv  \aleph_{\alpha\beta}^{qs}+ \aleph_{\alpha\beta}^{is},
 \label{eqn:balance_second_moment}
\end{equation}
where the superscripts $qs$ and $is$ stand for the source of second moment in quenched and ignited states, respectively. 
Following (\ref{eqn:tensorM}-\ref{eqn:hatM}), the expression for the stress tensor can be written as
\begin{equation}
 {\mathsfb P} = \rho{\mathsfb M} =\rho_p\nu T
\left[
\begin{array}{ccc}
 1+\lambda^2+\eta\sin2\phi & -\eta\cos2\phi & 0\\
 -\eta\cos2\phi & 1+\lambda^2-\eta\sin2\phi & 0\\
  0 & 0 & 1-2\lambda^2
\end{array}
\right] .
\label{eqn:tensorP}
\end{equation}

Substituting (\ref{eqn:source_ignited}), (\ref{eqn:source_quenched}) and (\ref{eqn:tensorP})
into  (\ref{eqn:balance_second_moment}), we obtain the following four independent equations:
\begin{equation}
\left.
\begin{array}{rcl}
 -2T\eta\cos2\phi +\frac{2}{St}T (1+\lambda^2+\eta\sin2\phi) &=& \Big[-\frac{2(1-e^2)\nu T^\frac{3}{2}}{35\sqrt{\upi}}(70+9\eta^2+42\lambda^2+42\eta\sin2\phi)\\
 &&-\frac{8(1+e)\nu T^\frac{3}{2}}{35\sqrt{\upi}}(\eta^2+21\lambda^2+21\eta\sin2\phi)\Big]\\
 &&+\Big[\frac{128(1+e)^2\nu}{315\pi}\Big],\\
 \frac{2}{St}T (1+\lambda^2-\eta\sin2\phi) &=& \Big[-\frac{2(1-e^2)\nu T^\frac{3}{2}}{35\sqrt{\upi}}(70+9\eta^2+42\lambda^2-42\eta\sin2\phi)\\
 &&-\frac{8(1+e)\nu T^\frac{3}{2}}{35\sqrt{\upi}}(\eta^2+21\lambda^2-21\eta\sin2\phi)\Big]\\
 &&+\Big[\frac{128(1+e)^2\nu}{315\pi}\Big],\\
 \frac{2}{St}T (1-2\lambda^2) &=& \Big[-\frac{2(1-e^2)\nu T^\frac{3}{2}}{35\sqrt{\upi}}(70+3\eta^2-84\lambda^2)\\
 &&+\frac{16(1+e)\nu T^\frac{3}{2}}{35\sqrt{\upi}}(\eta^2+21\lambda^2)\Big]+\Big[\frac{32(1+e)^2\nu}{315\pi}\Big],\\
 T(1+\lambda^2-\eta\sin2\phi)-\frac{2}{St}T\eta\cos2\phi &=& \Big[\frac{12(1-e)(3-e)\nu T^\frac{3}{2}}{5\sqrt{\upi}}\eta\cos2\phi -\frac{4(1+e)^2\nu}{35\pi}\Big].
 \end{array}
\right\},
\label{eqn:components_balance_second_moment}
\end{equation}
Note that  the terms involving the Stokes number ($St$) on the left-hand sides of  (\ref{eqn:components_balance_second_moment})
vanish in the limit of $St\to\infty$, thereby recovering the second-moment balance for the shear flow of a `dry' granular gas~\citep{SA2016}.

In (\ref{eqn:components_balance_second_moment}), we have made temperature dimensionless via $T=T/(\dot\gamma\sigma/2)^2$.
The coupled system of equations (\ref{eqn:components_balance_second_moment}) must be solved to determine $\eta$, $\lambda$, $\phi$ and $T$ for specified values of 
(i) particle volume fraction ($\nu$), (ii) Stokes number ($St$) and (iii) restitution coefficient ($e$). 
Analytical progress can be made to solve (\ref{eqn:components_balance_second_moment}) as discussed in \S3 and \S4.

Before proceeding further, it may be noted that the analysis of the second moment balance (\ref{eqn:balance_second_moment})
or (\ref{eqn:components_balance_second_moment})
 in the ignited state (i.e.~with $\aleph_{\alpha\beta}^{qs}=0$)
is considerably simplified for elastically-colliding ($e=1$) particles, see Appendix A.  The related  analytical results on the temperature field 
provide a lower-bound on the Stokes number for the existence of the ignited state (and consequently on the multiple states and hysteresis, \S3.2)
in a dilute gas-solid suspension.

\begin{figure}
 \begin{center}
 (a)
  \includegraphics[scale=0.38]{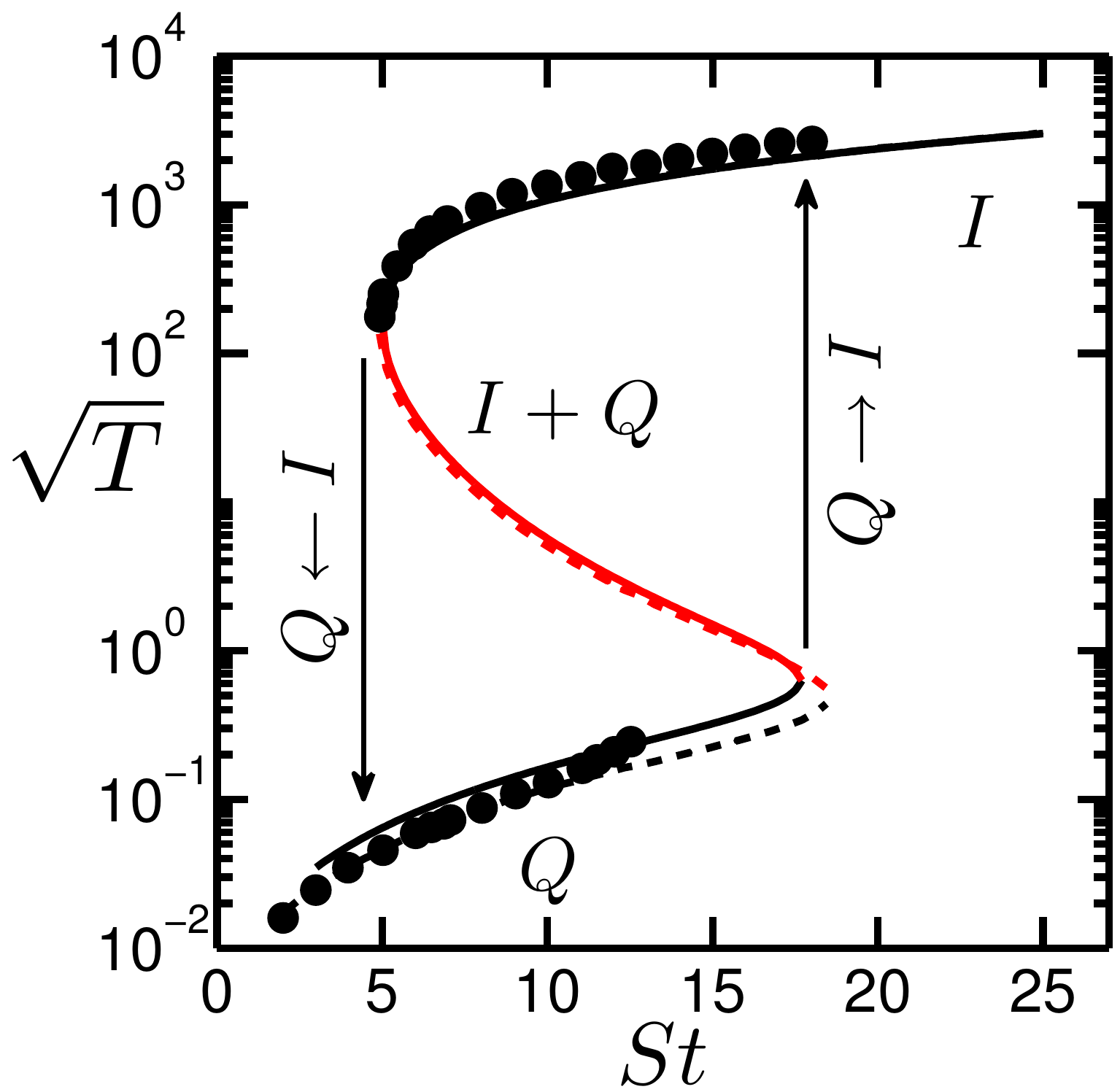}
  (b)
  \includegraphics[scale=0.38]{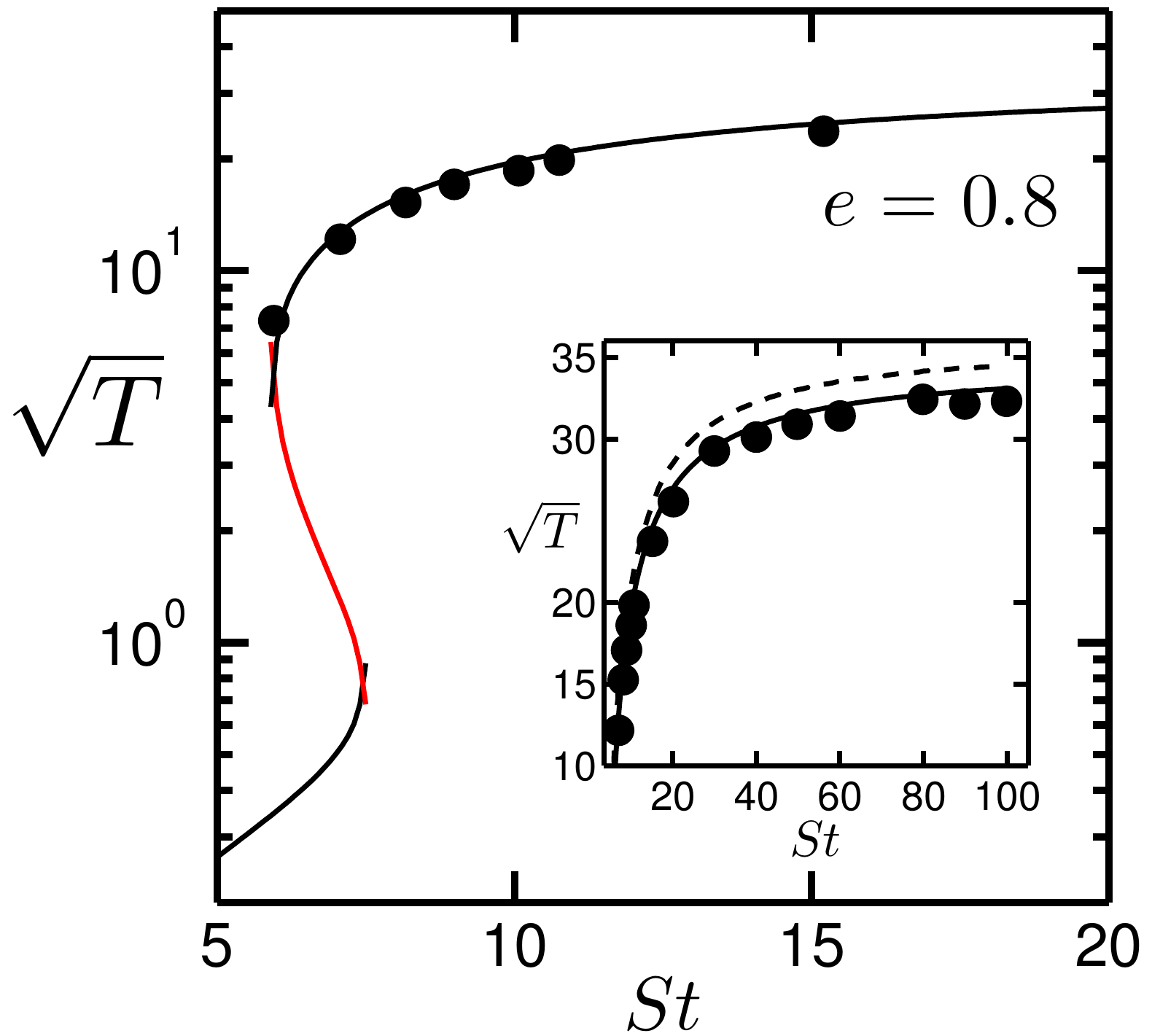}
  (c)
  \includegraphics[scale=0.38]{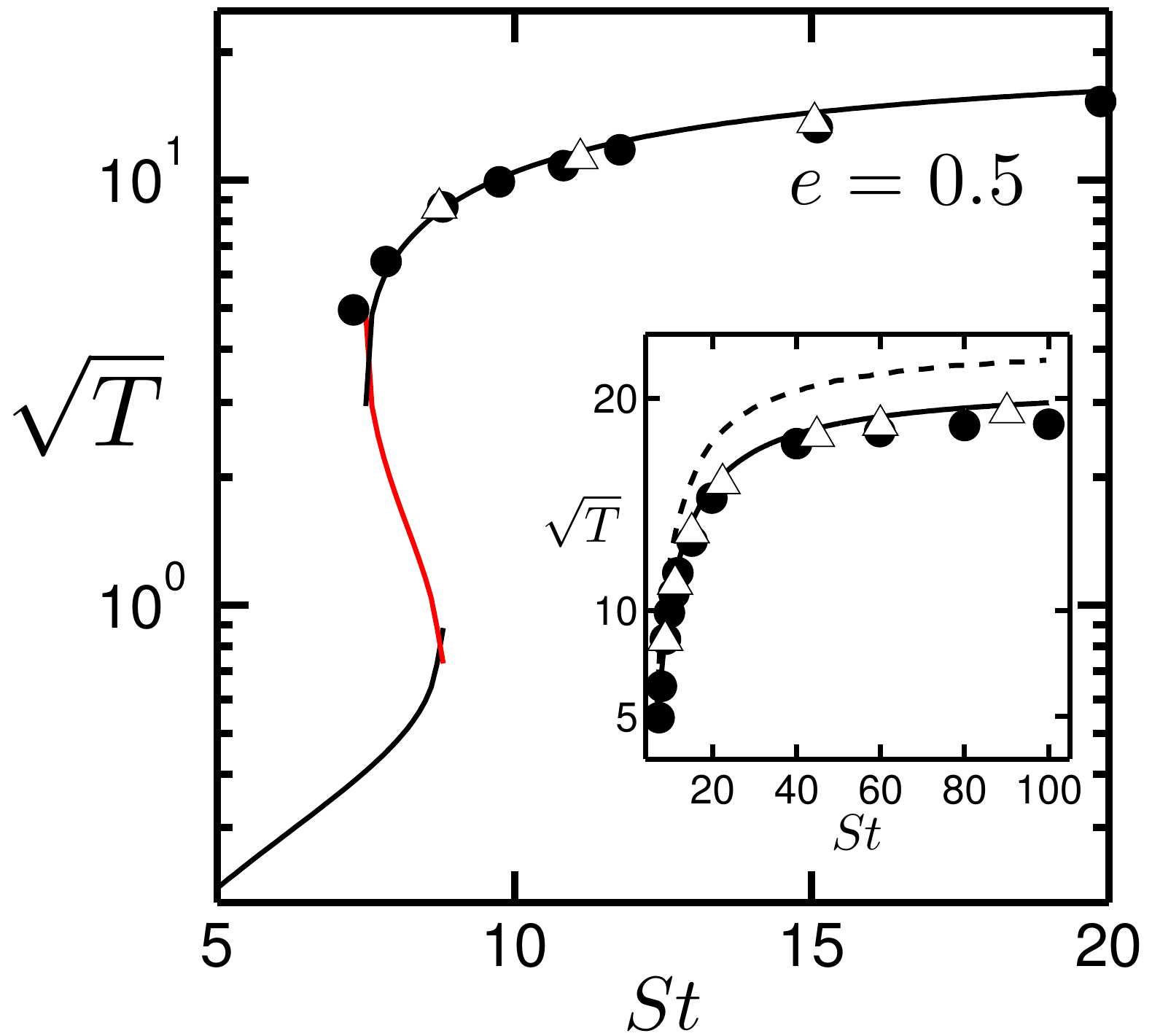}
  (d)
  \includegraphics[scale=0.38]{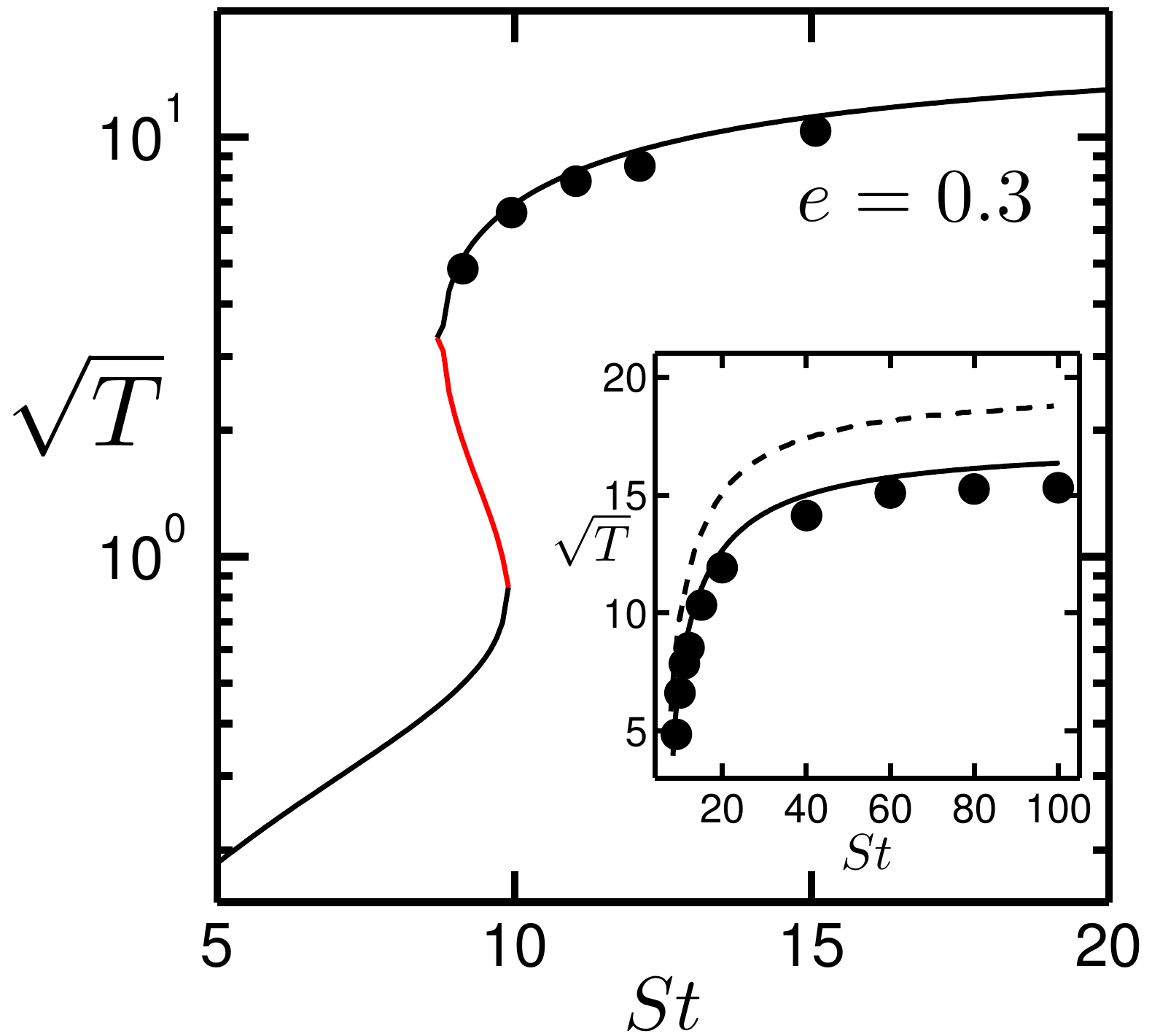}
\end{center}
\caption{
Hysteretic/first-order  transitions of granular temperature for
(a) $e=1$ and $\nu=5\times 10^{-4}$; (b) $e=0.8$, (c) $e=0.5$ and (d) $e=0.3$ with $\nu=0.01$.
The solid and dashed (inset) lines denote the present anisotropic-Maxwellian theory and the Maxwellian theory \citep{TK1995,Sangani1996}, respectively.
The filled-circles  represent the DSMC data of  \cite{Sangani1996}; the open-triangles in panel $c$ denote the DSMC data of \cite{CRG2015}.
In each panel, the black and red lines represent stable and unstable solutions, respectively, of Eq. (3.1).
}
\label{fig:fig2}
\end{figure}

\section{Granular temperature: Multi-stability and ignited-to-quenched state transitions}

After some tedious algebra, we found that   (\ref{eqn:components_balance_second_moment}) can be decoupled to yield
 a $10$-th degree polynomial for granular temperature $\xi=\sqrt{T}$:
\begin{equation}
 \mathcal{G}(\xi) \equiv a_{10}\xi^{10}+a_{9}\xi^{9}+a_{8}\xi^{8}+a_{7}\xi^{7}+a_{6}\xi^{6}+a_{5}\xi^{5}+a_{4}\xi^{4}+a_{3}\xi^{3}+a_{2}\xi^{2}+a_{1}\xi+a_{0}= 0,
 \label{eqn:energy_balance}
\end{equation}
the explicit expressions of the coefficients $a_i$  are given in Appendix B.
It is straightforward to verify  that for the case of elastically colliding particles ($e=1$), $a_{10}=0=a_9=a_8$ and hence (\ref{eqn:energy_balance})
reduces to a polynomial of 7th-degree; in fact these three roots vapourize to $-\infty$ at $e=1$ and remain negative for $e<1$ and hence unphysical.
It has been verified numerically (as well as via an ordering analysis, see Appendix B) that at most three roots of (\ref{eqn:energy_balance}) are real positive, 
depending on the values of $\nu$, $St$ and $e$, and the remaining roots are negative and/or complex.

\subsection{Validation of present anisotropic-Maxwellian theory}

First, we  solve the temperature equation (\ref{eqn:energy_balance}) numerically
and compare it with simulation data in order to validate the present theory.

Figure \ref{fig:fig2}(a,b,c,d) shows the variations of the granular temperature with Stokes number $(St)$ at particle volume fractions 
of $(a)$ $\nu=5\times10^{-4}$ and $(b,c,d)$ $\nu=0.01$, with  different values of the restitution coefficient (a) $e=1$, (b) $e=0.8$, (c) $e=0.5$ and $e=0.3$. 
In each panel and inset, the symbols represent the DSMC (direct simulation Monte Carlo) data of \cite{Sangani1996} 
which are compared with the (i) present anisotrpic-Maxwellian theory (solid line)
and (ii) the standard moment expansion (dashed line) of Tsao \& Koch (1995, for $e=1$) and Sangani et al. (1996, for $e\neq 1$),
Figure~\ref{fig:fig2}($a$) indicates  that for the case of elastically colliding particles, the present theory is on par with Tsao-Koch theory.
On the other hand, for inelastic particles ($e<1$), the insets of figure~\ref{fig:fig2}(b,c,d) confirm that the present theory is able to better predict the temperature-variation with $St$; 
however, the agreement with Tsao-Koch theory worsens with increasing dissipation.
In panel $c$, the recent DSMC data (open triangles) of \cite{CRG2015} for $e=0.5$ also agree quantitatively  with the present theory.

Overall, the moment theory with anisotropic-Maxwellian as the leading term 
seems   better suited for a dilute gas-solid suspension of inelastic particles undergoing shear flow for a large range of $e<1$ at small and moderate values of Stokes number.
It may be noted that a similar analysis~\citep{SA2014,SA2016} for a sheared granular gas ($St=\infty$) 
provides excellent predictions for temperature and rheological quantities for highly dissipative particles.
The same conclusions seem to  carry over to the limit of small Stokes numbers of a sheared gas-solid suspension too 
-- this issue is further discussed in \S5 (with respect to predictions for viscosity and normal stress differences).

\subsection{Analytical solution for three temperatures: hysteresis and multi-stability}

Returning to figure~\ref{fig:fig2}, we note that the temperature is a multi-valued function of Stokes number
for a range of $St$ over which  there are three possible solutions;
there are  hysteretic/discontinuous jumps  in temperature  from the low/high temperature branches with increasing/decreasing $St$.
For a better understanding of this hysteresis phenomenon, 
equation (\ref{eqn:energy_balance}) has been solved in the asymptotic limit $\nu\ll1$, $St\gg1$, and $St^3\nu\ll1$ via an ordering analysis,
the details of which are given in Appendix C.
Three real solutions have been found,
\begin{eqnarray}
 \sqrt{T_{is}} &=& \frac{5(1+e)^{-1}(1691+539e-1223e^2+337e^3)\sqrt{\upi}}{48(3-e)(12607-19952e+10099e^2-1746e^3)}\left(\frac{St}{\nu}\right)
 \;\stackrel{e=1}{\equiv} \; \frac{5\sqrt{\pi}}{144}\frac{St}{\nu} ,
 \label{eqn:temperature_ignited}
 \\
 \sqrt{T_{qs}} &=& \sqrt{\frac{32(1+e)^2}{945\upi}} St^{3/2}\nu^{1/2} 
  \;\stackrel{e=1}{\equiv} \; \frac{8\sqrt{2}}{3\sqrt{105\pi}} St^{3/2}\nu^{1/2} ,
 \label{eqn:temperature_quenched}
\\
 \sqrt{T_{us}} &=& \frac{840\sqrt{\upi}}{(1+e)(107+193e)}\left(\frac{1}{St^3\nu}\right)
  \;\stackrel{e=1}{\equiv} \; \frac{7\sqrt{\pi}}{5}\left(\frac{1}{St^3\nu}\right) ,
 \label{eqn:temperature_unstable}
\end{eqnarray}
which correspond to the temperatures in the ignited ($T_{is}$), quenched ($T_{qs}$) and unstable ($T_{us}$) states, respectively.
These three solutions (\ref{eqn:temperature_ignited}-\ref{eqn:temperature_unstable}) 
 can be identified in  figure~\ref{fig:fig2} as the high-, low-, and intermediate-temperature branches, respectively;
the red-colored solution branch in each panel of figure~\ref{fig:fig2}  represent $T_{us}$ which is of course {\it unstable} from stability viewpoint (see \S4.2 for related discussions).

It is clear from (\ref{eqn:temperature_ignited}) that $T_{is}$ increases with increasing Stokes number $St$, but decreases with increasing  particle volume fraction $\nu$.
On the other hand,  the quenched-state temperature (\ref{eqn:temperature_quenched}) increases with increasing  $St$ and $\nu$,
whereas  the unstable temperature (\ref{eqn:temperature_unstable}) decreases with  increasing $St$ and $\nu$.
These overall predictions are verified in  figure~\ref{fig:fig3} which display
the variations of  granular temperature as functions of ($\nu, e$) for two values of  Stokes number (a) $St=10$ 
and (b) $St=20$. In each panel, the upper-most branch corresponds to the ignited-state of high temperature $T_{is}$;
 the middle and the lower-most planes  represent the unstable and quenched states, respectively.
 The latter two states are connected via a  line of turning-points,
 resulting in   saddle-node bifurcations (jump-transitions) from ``$Q\to I$'' with increasing $\nu$, above which  the ignited state is the only solution.
 The critical density $\nu=\nu_c(St, e)$ for this transition increases with increasing inelasticity
but decreases with increasing $St$ (see panel $b$).
The corresponding Stokes number for ``$Q\to I$''-transition is denoted by $St_{c_2}(\nu,e)$
which can also be identified with the right limit-point in figure~\ref{fig:fig2}.

\begin{figure}
(a)
\includegraphics[scale=0.32]{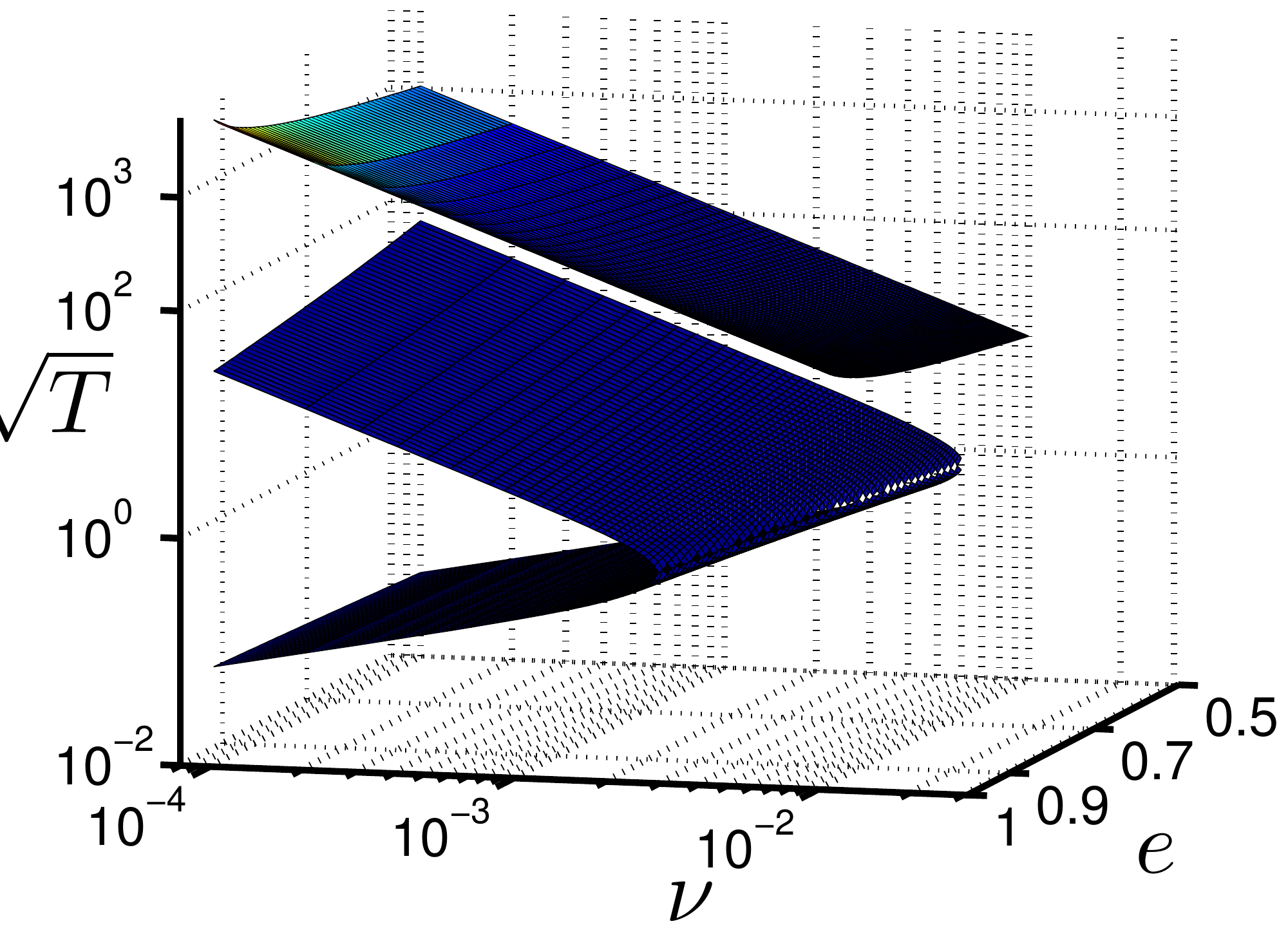}
  (b)
  \includegraphics[scale=0.32]{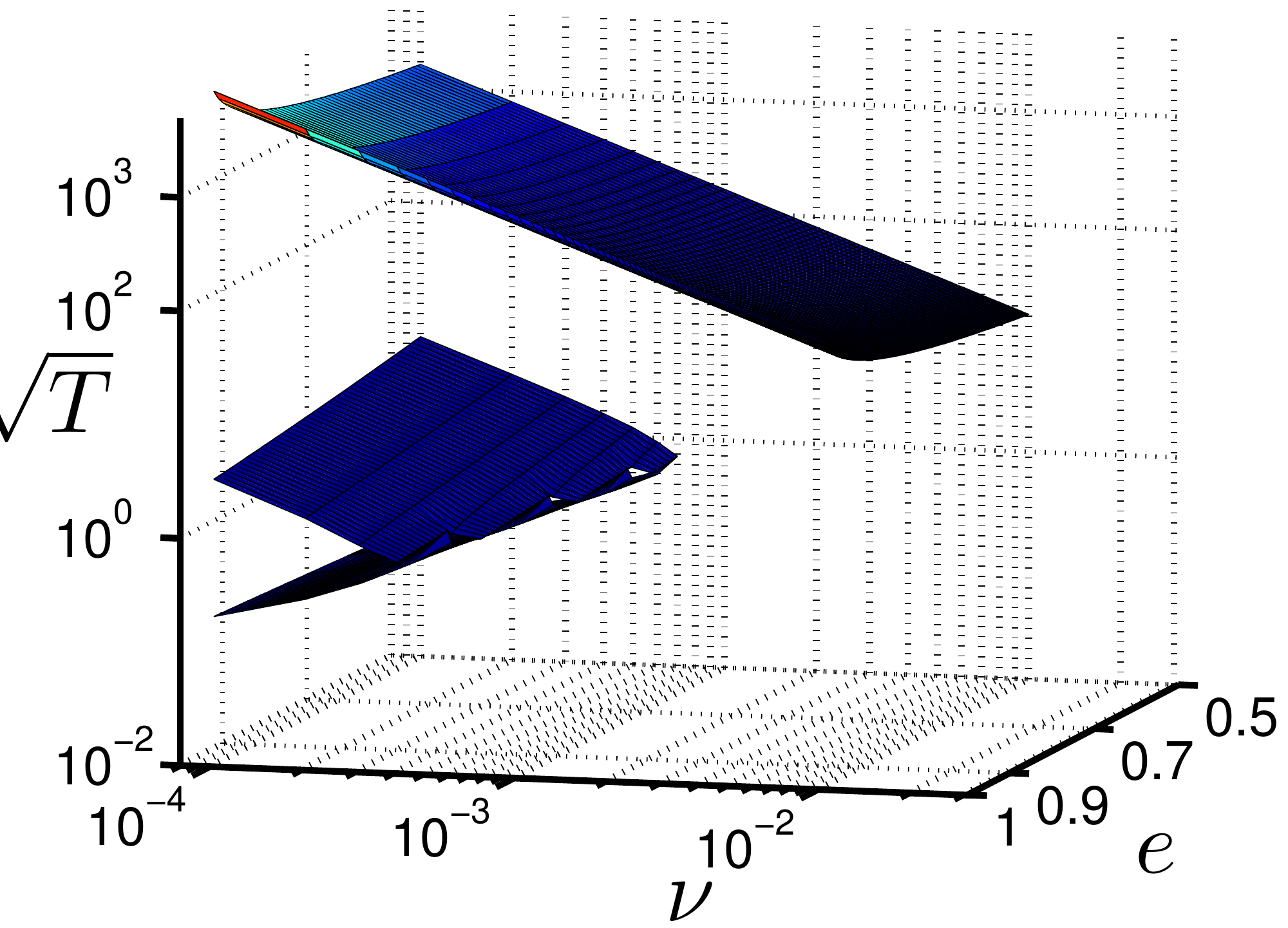}
  \caption{
   Multiple states of granular temperature as functions of the mean volume fraction $\nu$ and restitution coefficient $e$ for (a) $St=10$ and (b) $St=20$.
    }
\label{fig:fig3}
\end{figure}

A noteworthy feature of figure~\ref{fig:fig3} is that the ignited branch [$T\propto \nu^{-2}$, see (\ref{eqn:temperature_ignited})]
is disconnected from the quenched and unstable branches, and 
therefore there is no jump-transitions (on decreasing $\nu$) from $I\to Q$ at $St=10$ (panel $a$) and $20$ (panel $b$).
However, on further decreasing the Stokes number (below $St=10$), the ignited state solution  disappears below a minimum $St$ --
how this process occurs is explained in figures~\ref{fig:fig4}(a,b,c) for $e=1$, $0.8$ and $0.5$, respectively.
In particular, at any $e$, the unstable branch (red line) and the ignited-branch come closer  with decreasing $St$
and merge with each other at some minimum $St$ below which only the quenched-state solution [$T\propto \nu$, see (\ref{eqn:temperature_quenched})] survives.
Similarly, by fixing the Stokes number at $St=6$ but increasing the inelasticity (decreasing $e$) also
results in the disappearance of the ignited state solution, see figure~\ref{fig:fig4}(d).
Therefore, the quenched state is the only possible solution below a minimum Stokes number $St=St_{c_1}(e,\nu)$ --
this can  be identified with the left limit-point in figure~\ref{fig:fig2} for ``$I\to Q$'' transition.

\begin{figure}
(a)
\includegraphics[scale=0.38]{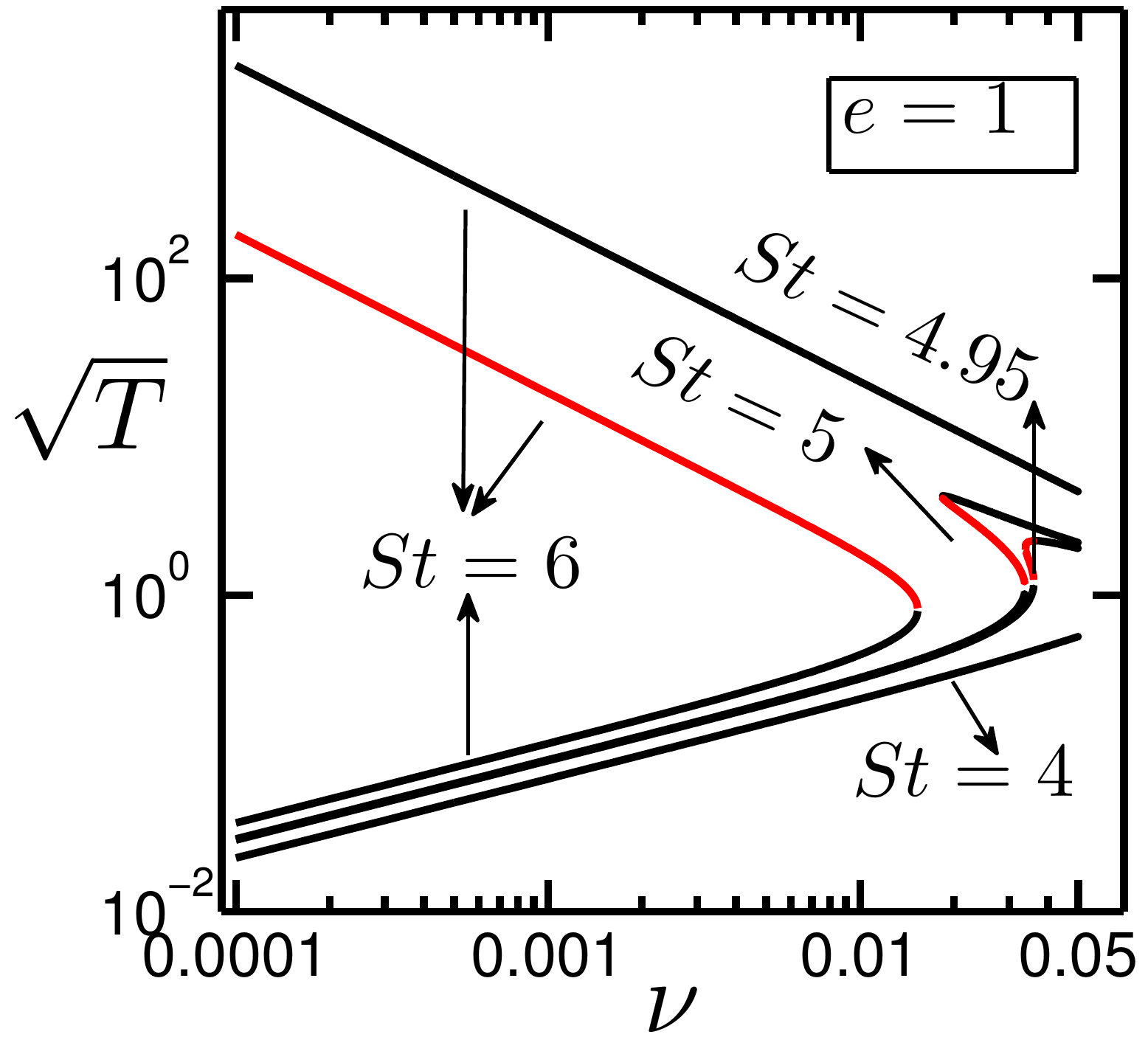}
    (b)
   \includegraphics[scale=0.38]{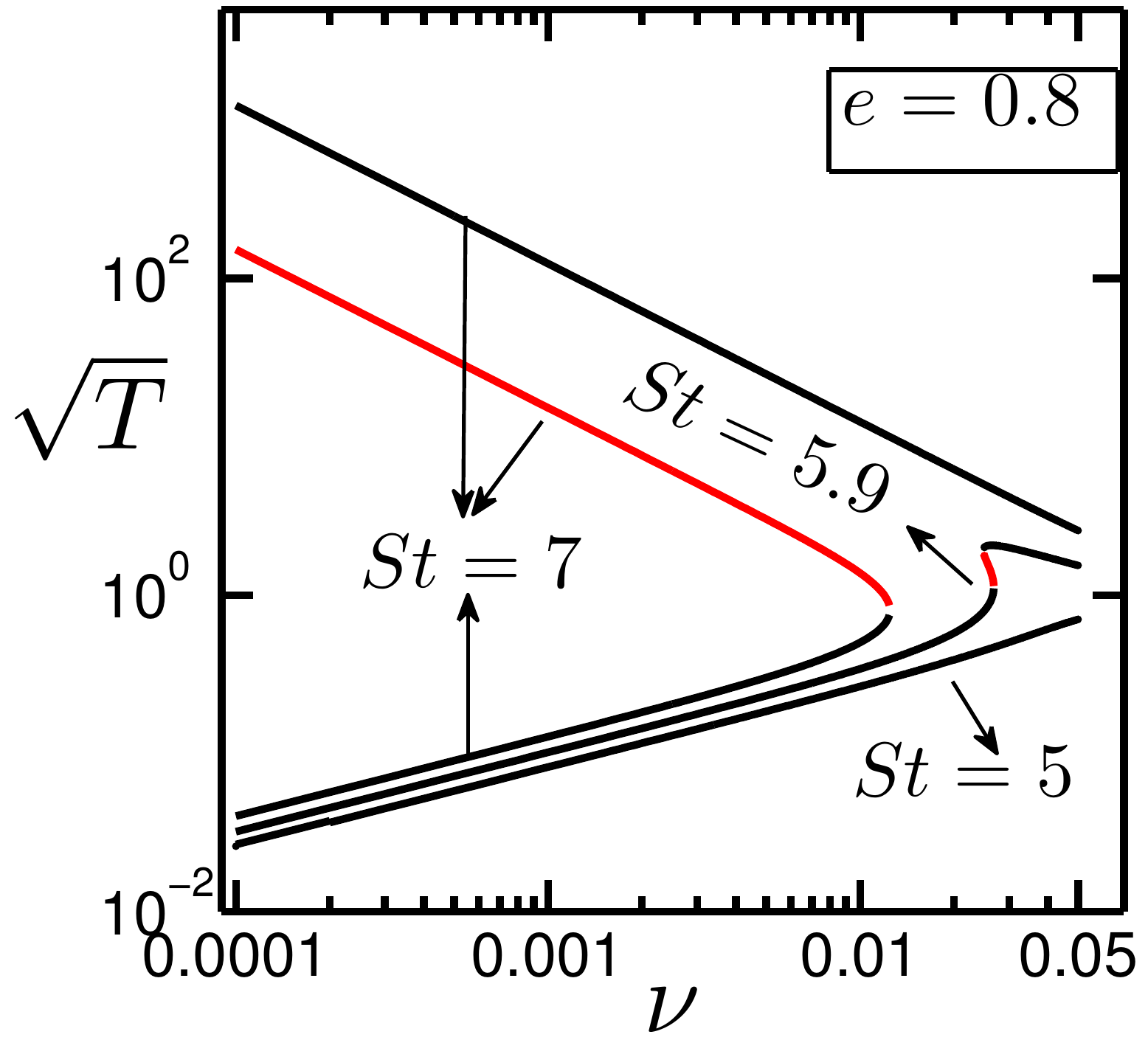}\\
     (c)
    \includegraphics[scale=0.38]{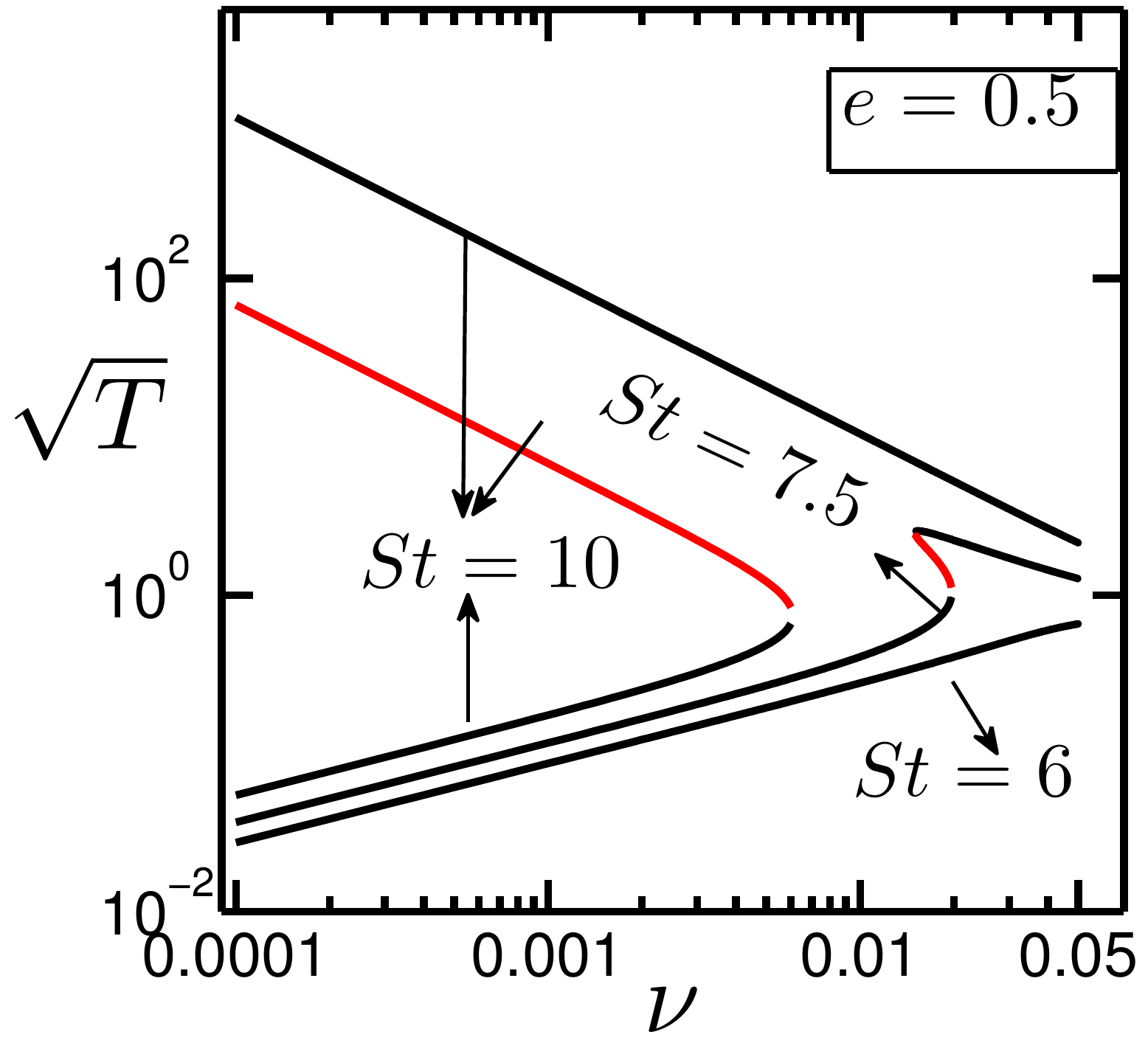}
    (d)
    \includegraphics[scale=0.38]{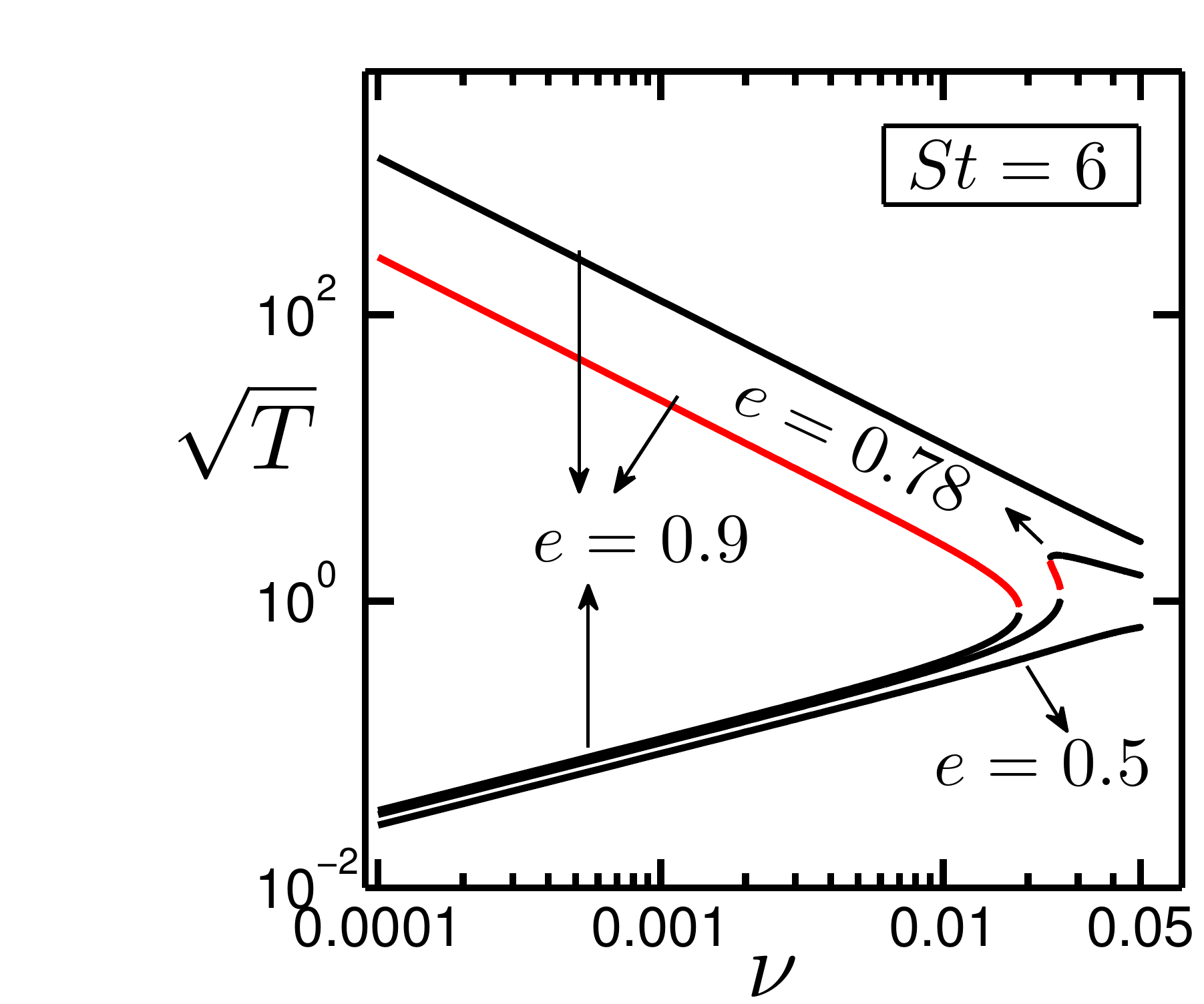}
  \caption{
  Disappearance of the ignited-state branch with (a,b,c) decreasing Stokes number at (a) $e=1$, (b) $e=0.8$ and (c) $e=0.5$,
  and (d) the same with decreasing restitution coefficient  at  $St=6$.
    }
\label{fig:fig4}
\end{figure}

\subsection{Critical Stokes numbers ($St_{c_1}, St_{c_2}$) and the master phase-diagram}
\label{sec:critical_points}

Referring to figure~\ref{fig:fig2},   two critical/limit points (at $St=St_{c_1}$ and $St_{c_2}$, with $St_{c_2}>St_{c_1}$) 
correspond to the double roots of (\ref{eqn:energy_balance}) at which  the following conditions must be satisfied:
\begin{equation}
 \mathcal{G}(\xi_c)=0
 \quad
 \mbox{and}
 \quad
 \mathcal{G}'(\xi_c)=0.
 \label{eqn:limitpoint1}
\end{equation}
This implies that  two solution branches, corresponding to two different states [(i) ignited $(T_{is})$, (ii) quenched $(T_{qs})$ and (iii) unstable $(T_{us})$]
meet at $\xi=\xi_c$, leading to saddle-node bifurcations  from one stable state to another stable state.

The  discontinuous ``$Q\to I$'' transition corresponds to a limit point ($St=St_{c_2}$, viz.~figure~\ref{fig:fig2}$a$)
at which the quenched and unstable solution branches meet.
Carrying out the asymptotic analysis of (\ref{eqn:energy_balance}) with $T_{qs}=T_{us}$ and satisfying (\ref{eqn:limitpoint1})  (see Appendix D for details),
we obtain the following  relation
\begin{equation}
 St_{c_2}^3\nu_c=\Bigg(\frac{3087000 \pi^2}{(1+e)^4 (107+193e)^2}\Bigg)^\frac{1}{3},
 \label{eqn:final_St_c}
\end{equation}
that represents a {\it critical-surface} in the ($\nu, St, e$)-plane, above which only the ignited state exists.
Equation (\ref{eqn:final_St_c}) is depicted in figure~\ref{fig:fig_PD1} as a blue-surface.
In the elastic limit of $e=1$, (\ref{eqn:final_St_c}) reduces to $St_{c_2}^3\nu_c=2.7685$
 which differs from  the prediction $(\approx 3.23)$ of \cite{TK1995}.

The critical Stokes number, $St_{c_1}$, for the ``$I\to Q$'' transition  (on decreasing $St$)
corresponds to the limit point at which $T_{is}=T_{us}$.
The asymptotic analysis of (\ref{eqn:energy_balance}) yields the following expression for  $St_{c_1}$ (see Appendix D for details):
\begin{equation}
 St_{c_1} \approx  9.9 - 4.91e,
 \label{eqn:critical_stokes_1}
\end{equation}
which  is marked as a brown-shaded plane in figure~\ref{fig:fig_PD1}, to the left of which only the quenched state exists.
For elastically colliding particles ($e=1$), we have  $St_{c_1}\approx 4.99$ which is close to our numerical solution of $4.94...$;
both are close to the result of $\sqrt{169.5/7}\approx 4.92$ obtained by \cite{TK1995}.
Note  that (\ref{eqn:critical_stokes_1})  depends  only  on the restitution coefficient, and therefore
the minimum value of Stokes number $(St_{c_1})$, below which only the quenched-state exists, is independent of the volume fraction 
 for a dilute gas-solid suspension.

\begin{figure}
 \begin{center}
  \includegraphics[scale=1.0]{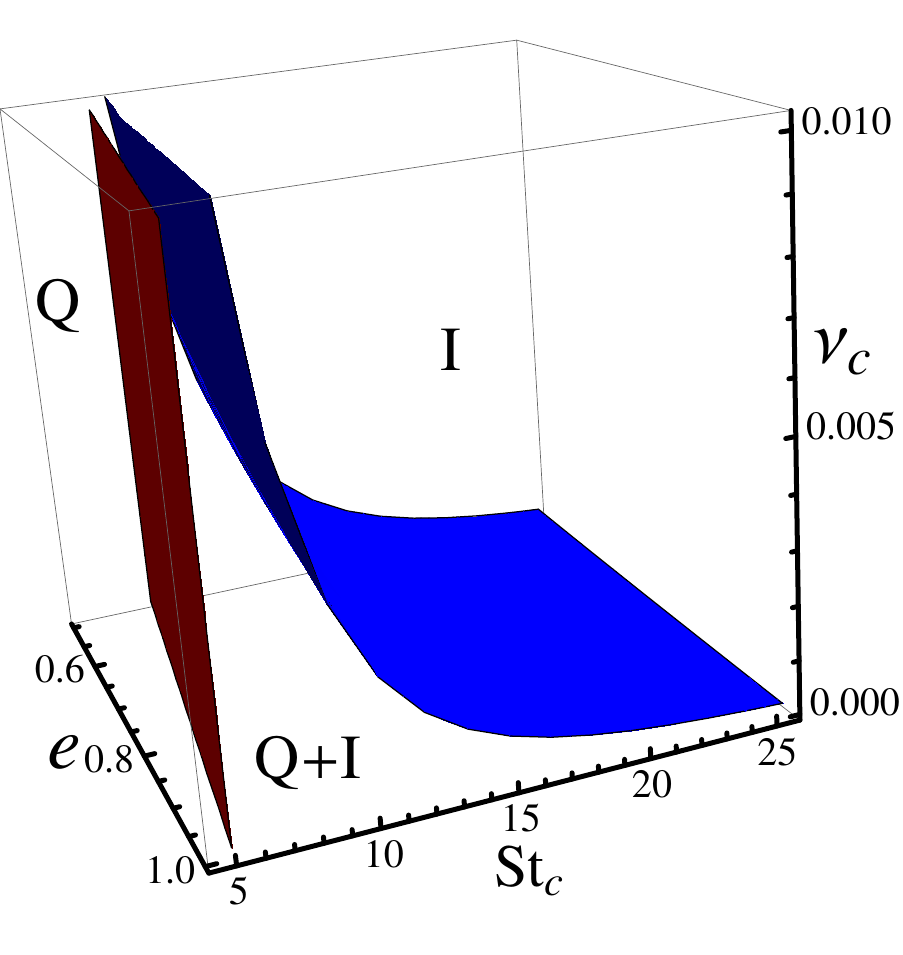}\\
 \end{center}
\caption{
Complete phase diagram of different states [``ignited'' (I), ``quenched'' (Q) and their coexistence (Q+I)] in the ($\nu, St, e$)-plane.
 The blue- and brown-colored  planes, above and below which, respectively, the ignited and quenched states  exist,
have been determined analytically from an ordering analysis of (3.1) in the dilute limit; for details, see the text  in \S3.3 and Appendix D.
}
\label{fig:fig_PD1}
\end{figure}

The master phase-diagram in figure~\ref{fig:fig_PD1} summarizes  all possible states  in the ($\nu, St, e$)-plane:
(i) the ignited state ($I$) exists above the blue-surface, (ii) the quenched state ($Q$) is the only solution to the left of the brown surface
and (iii) the coexistence of ignited and quenched ($I+Q$) states occurs for parameter values lying between the blue and brown surfaces.
Two critical surfaces in figure~\ref{fig:fig_PD1} would meet along a curve, 
thus acting as an upper bound for the  existence of the unstable state ($T_{us}$) solution (and hence the existence of the mixed state $I+Q$).
By equating $St_{c_1}=St_{c_2}$, the equation of this  curve is obtained as
\begin{equation}
     \nu_{us}^l(e)=\Bigg(\frac{3087000 \pi^2}{(1+e)^4 (107+193e)^2}\Bigg)^\frac{1}{3}/{(9.9-4.91e)^3},
 \label{eqn:nuc_Tus}
\end{equation}
which  is  a decreasing function of the restitution coefficient.
Note that (\ref{eqn:nuc_Tus}) is not a critical point, rather it represents an upper-bound on density below which the phase-coexistence [$I+Q$] occurs
in the  small-$St$ regime of a sheared gas-solid suspension.

It is clear from from (\ref{eqn:critical_stokes_1}) and (\ref{eqn:final_St_c}) that
the critical Stokes numbers $St_{c_1}$ and $St_{c_2}$ increase with decreasing $e$ (i.e.~ increasing inelasticity) at a fixed volume fraction $\nu<\nu_{us}^l$, 
When dissipative particles ($e\ll 1$)  collide with each other they loose more energy and hence loose more of their inertia; in that case the recovery time $(\tau_v)$ reduces and the adjustment with the local fluid velocity becomes faster, leading to the quenched state. On the other hand, for  nearly elastic ($e\sim 1$) collisions,
the particles lose very little  kinetic energy during collisions and take much more time to come back to the bulk flow and hence the recovery process becomes slow. 
Therefore, at higher values of $e$, both  ignited and quenched states exist but only the quenched state is possible if we increase inelasticity of the system, 
leading to the behaviour of $St_{c_1}$ as in (\ref{eqn:critical_stokes_1}).  Similar argument holds for the variation of $St_{c_2}$ with inelasticity as well.

\section{Non-Newtonian rheology:  second-moment anisotropy, discontinuous  shear-thickening and normal stress differences}

Once the temperature field is solved from  (\ref{eqn:energy_balance}) for specified values of $\nu$, $St$ and $e$,
the non-coaxiality  angle $\phi$, the temperature-anisotropy $\eta$ and
the  excess temperature $\lambda^2$ can be calculated from the remaining equations of (\ref{eqn:components_balance_second_moment}) --
these are amenable to analytical solutions as described in \S4.1. The behaviour of shear viscosity and  normal stress differences are analysed in \S4.2 and \S4.3, respectively.

\subsection{Anisotropies of second-moment tensor: analytical solution for  $\phi$, $\eta$ and $\lambda^2$}

After some algebra and rearrangement of terms in (\ref{eqn:components_balance_second_moment}),
the closed-form solutions for $\phi$, $\eta^2$ and $\lambda^2$ have been found:
\begin{eqnarray}
   \phi &=& \frac{1}{2}\tan^{-1}\left( \frac{2}{St} + \frac{12(1+e)(3-e)\nu \sqrt{T}}{5\sqrt{\pi}}\right)^{-1}, 
   \label{eqn:phi1}
   \\
   \eta^2 & =& -\frac{\mathfrak b}{2\mathfrak a} - \frac{1}{2\mathfrak a}\sqrt{{\mathfrak b}^2 - 4{\mathfrak a}{\mathfrak c}},
\label{eqn:eta1}
\\
\lambda^2  
&=& \frac{  \frac{5\sqrt{\pi}}{2St}T + (1+e)\nu T^{3/2}[5(1-e) - (5+3e)\frac{\eta^2}{14}] - \frac{8(1+e)^2\nu}{63\sqrt{\pi}} }
         {  \left( \frac{5\sqrt{\pi}}{St}T +{6(1+e)(3-e)\nu T^{3/2}} \right)},
\label{eqn:lambda1}
\end{eqnarray}
with $T$ being calculated from (\ref{eqn:energy_balance}) for specified values of $St$, $\nu$ and $e$.
The solution for the temperature-anisotropy $\eta$ follows from
the quadratic equation ${\mathfrak a}\eta^4 + {\mathfrak b}\eta^2 + {\mathfrak c} = 0$,
where
\begin{equation}
\left.
\begin{array}{lcl}
{\mathfrak a} &=& \frac{9(1-e^2)^2\nu^2 T^{3}}{25{\pi}}  >0
\\
{\mathfrak b} &=& \frac{6(1-e^2)\nu T^{3/2}}{5\sqrt{\pi}}\left(\frac{3}{St}T - \frac{16\nu(1+e)^2}{35\pi} +\frac{6(1-e^2)\nu T^{3/2}}{\sqrt{\pi}}\right) -  T^2 \cos^2{2\phi}
\\
{\mathfrak c} &=&\left(\frac{3}{St}T - \frac{16\nu(1+e)^2}{35\pi} +\frac{6(1-e^2)\nu T^{3/2}}{\sqrt{\pi}}\right)^2 >0
\end{array}
\right\} .
\end{equation}
For a suspension of elastically colliding particles ($e=1$, with finite $St$),  
we have ${\mathfrak a}=0$ and ${\mathfrak b}= - T^2 \cos^2{2\phi}$, and hence the above solutions  (\ref{eqn:phi1}-\ref{eqn:lambda1}) simplify to
\begin{equation}
\left.
\begin{array}{rcl}
   \phi(e=1) &=& \frac{1}{2}\tan^{-1}\left( \frac{2}{St} + \frac{48\nu \sqrt{T}}{5\sqrt{\pi}}\right)^{-1} > 0,
\label{eqn:phi_e1}
   \\
   \eta^2(e=1) &=& -\frac{{\mathfrak c}}{\mathfrak b} \equiv \left(\frac{3}{St}T - \frac{64\nu}{35\pi} \right)^2 T^{-2} \sec^2{2\phi} >0 ,
\label{eqn_eta_e1}
   \\
   \lambda^2(e=1)  &=&   \frac{  \frac{5\sqrt{\pi}}{2St}T  - \frac{8}{7}\nu T^{3/2} \eta^2 - \frac{32\nu}{63\sqrt{\pi}} }
                  {  \left( \frac{5\sqrt{\pi}}{St}T +{24\nu T^{3/2}} \right) } >0.
\label{eqn:lambda_e1}
\end{array}
\right\}
\end{equation}
Recall from (2.17)  that the non-zero values of  ($\phi, \eta, \lambda^2$) quantify the degree of anisotropy of the second-moment tensor $\mathsfb{M}$ (and hence
is a measure of the anisotropy of the kinetic stress tensor, $\mathsfb{P}=\langle \rho{\boldsymbol C}{\boldsymbol C}\rangle = \rho \mathsfb{M}$, too). 

\begin{figure}
 \begin{center}
 (a)
 \includegraphics[scale=0.26]{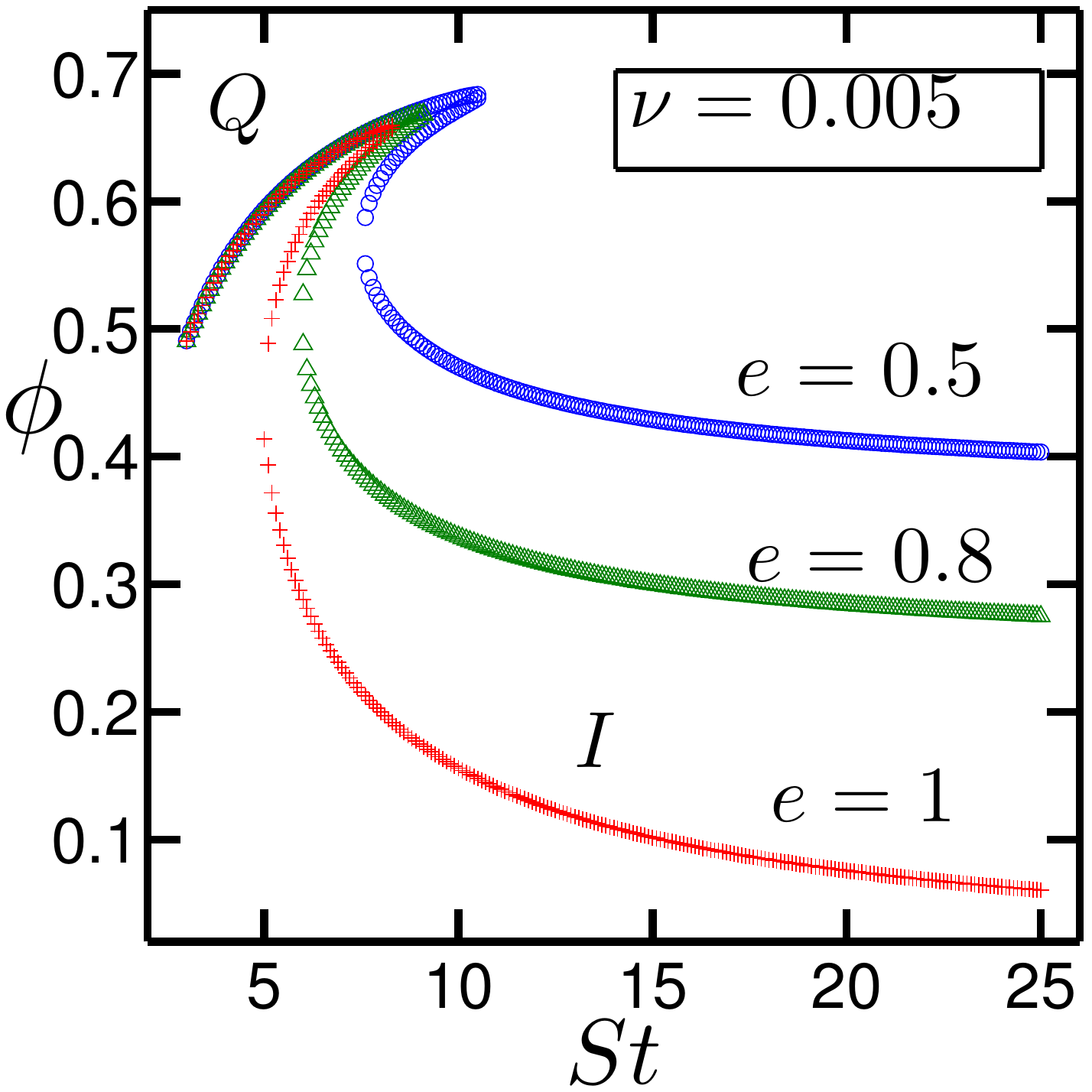}
  (b)
  \includegraphics[scale=0.26]{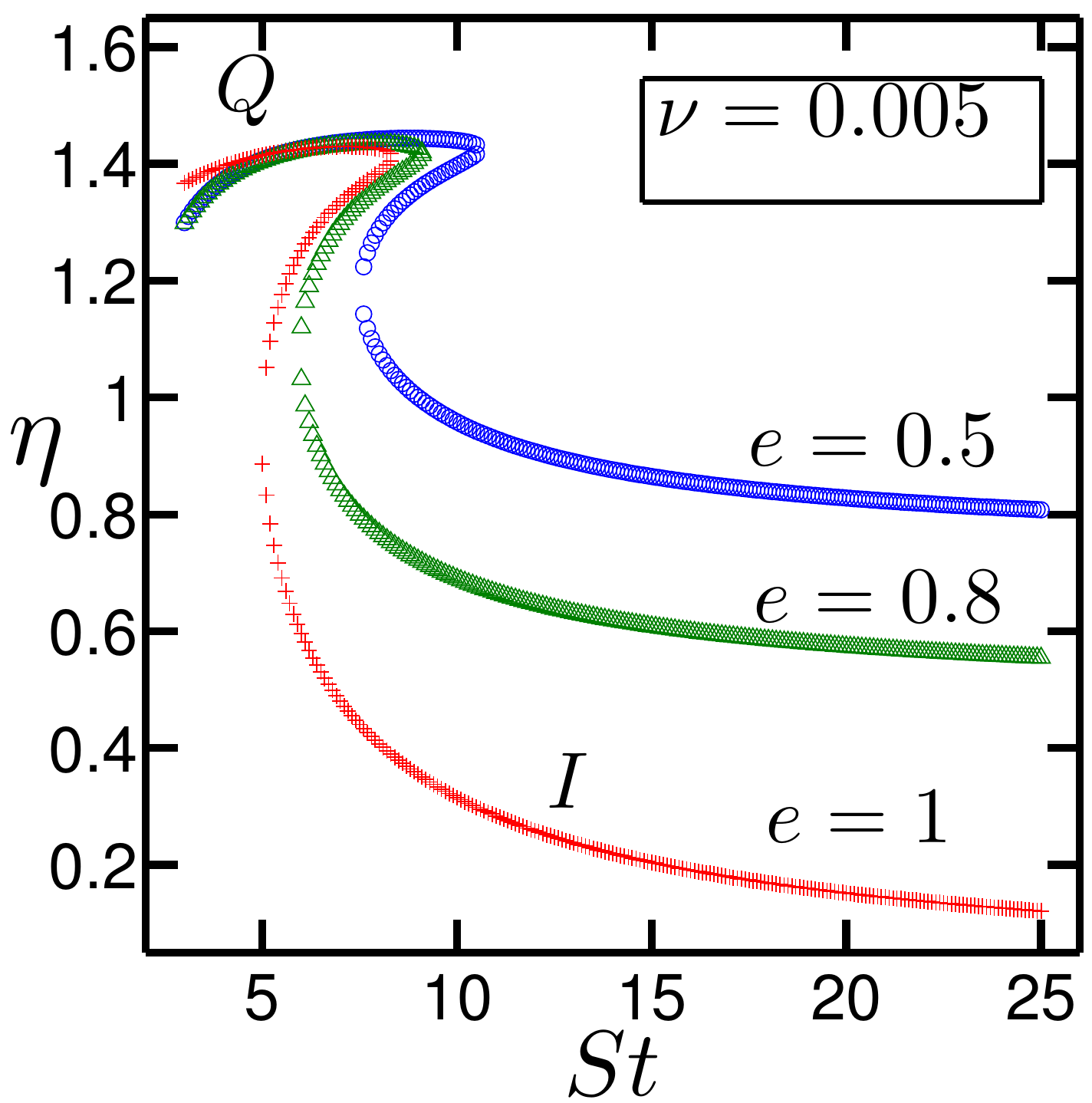}
  (c)
  \includegraphics[scale=0.26]{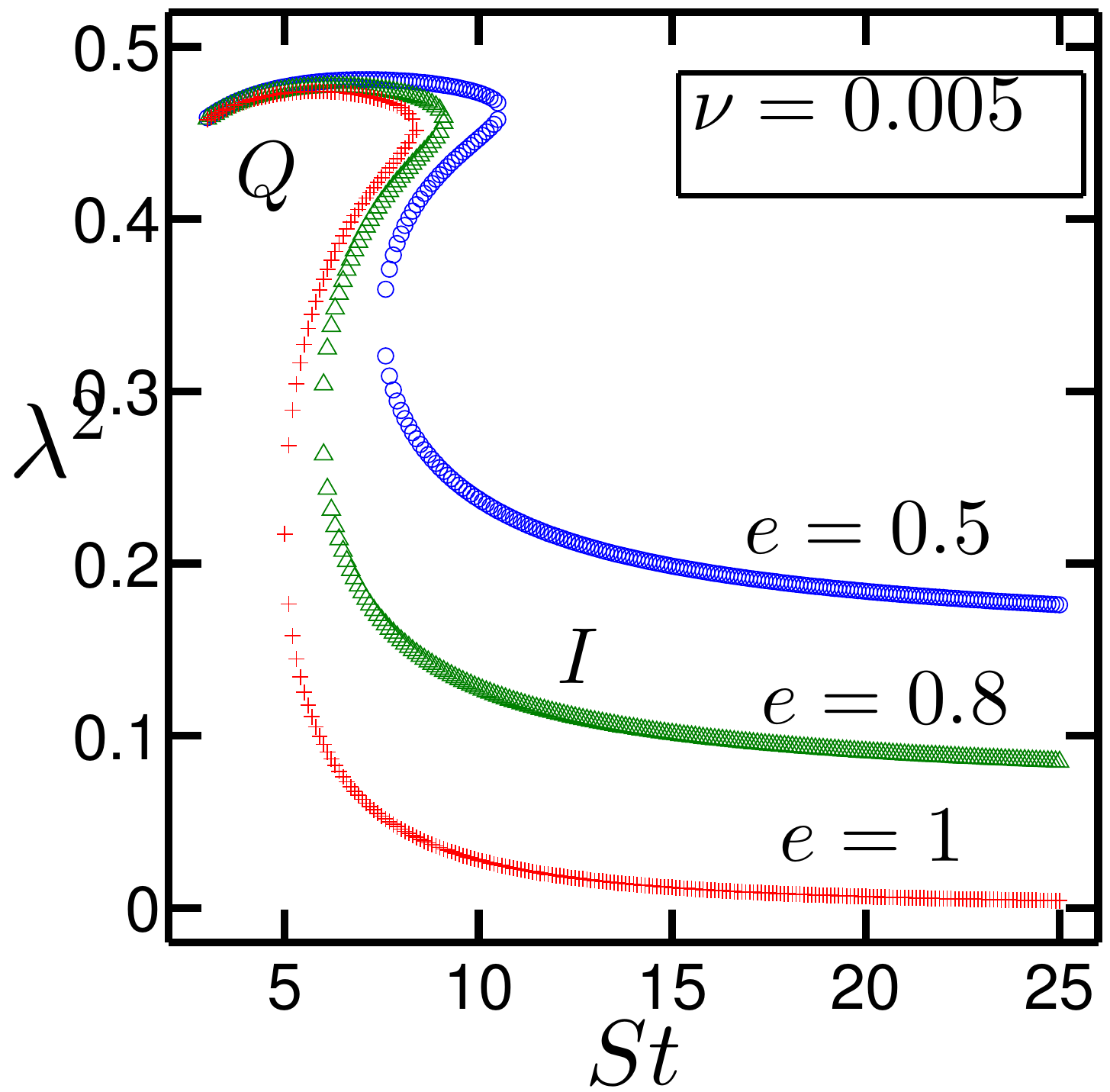}
\end{center}
\caption{
Variations of (a) the non-coaxiality angle $\phi$, (b) the shear-plane anisotropy $\eta$
and (c) the excess temperature $\lambda^2$ with Stokes number for different $e$.
 The mean volume fraction is set  to $\nu=0.005$.
  }
\label{fig:fig_phieta1}
\end{figure}

The positivity of (\ref{eqn:phi1}-\ref{eqn:lambda1})  is verified in  figures~\ref{fig:fig_phieta1}(a), \ref{fig:fig_phieta1}(b) and \ref{fig:fig_phieta1}(c), respectively,
which display the variations of $\phi$, $\eta$ and $\lambda^2$
with Stokes number for different values of  the  restitution  coefficient $e\leq 1$, at a mean volume fraction of $\nu=0.005$ -- the
results look qualitatively similar at other values of $\nu <\nu_{us}^{l}$ (\ref{eqn:nuc_Tus}).
It is seen from figure~\ref{fig:fig_phieta1} that  the increasing inelasticity markedly increases the values of ($\phi, \eta, \lambda^2$) on the ignited state,
thereby enhancing the anisotropy of the second-moment tensor.
In contrast, the inelasticity does not noticeably affect ($\phi, \eta, \lambda^2$)  on the quenched state
in which the particle collisions are rare and the dynamics is primarily dictated by fluid inertia.
Interestingly, increasing shear  makes the second-moment tensor more anisotropic on the quenched branch --
this can be understood by considering   the scaling relations of  ($\phi, \eta, \lambda^2$) at $St\sim 0$ as follows.
Using the closed-form solutions for three temperatures (3.2-3.4), the non-coaxiality angle for $e=1$ can be rewritten as
\begin{equation}
  \tan{2\phi_{qs}} = \frac{St}{2 +\frac{128\sqrt{2}}{5\sqrt{105}\pi}\nu^{3/2} St^{5/2}} \sim St/2 \quad \mbox{at} \quad St\sim 0.
\label{eqn:phiscaling-e1}
\end{equation}
Therefore, in the limit of small $St$, the inertia enhances the non-coaxiality angle in the quenched  state.
On the other hand, increasing $St$ decreases $\phi$ in the ignited state, reaching some asymptotic value  (depending on $e$)
at large enough $St$ as seen in figure~\ref{fig:fig_phieta1}(a).
This can be explained from an  analysis of the ignited branch solution, leading to:
\begin{equation}
\tan{2\phi_{is}} = \frac{3St}{6 + St^2} \sim \frac{3}{St},  \quad \mbox{for} \quad St\gg 1 .
\label{eqn:phiscaling-e2}
\end{equation}
Similar scalings (\ref{eqn:phiscaling-e1}-\ref{eqn:phiscaling-e2}) hold for the temperature anisotropy $\eta$ and the excess temperature $\lambda^2$ too,
that explain the observed behaviour in figures~\ref{fig:fig_phieta1}(b) and \ref{fig:fig_phieta1}(c), respectively.
In summary, the degree of anisotropy of  the second-moment tensor in the quenched and ignited states is
 primarily dictated by the background shear and inelasticity, respectively.
The latter effect of inelasticty can be understood from following scaling arguments.

It may be  noted that the scaling relation (\ref{eqn:phiscaling-e2}) is  not strictly valid at $St\to\infty$
since the double-limit of $e\to1$ and $St\to\infty$ leads to a singular behaviour of temperature $T\to\infty$
(and hence a thermostat is necessary to achieve a steady shearing state of elastically colliding particles in the absence of fluid drag).
The case of a sheared granular gas ($St=\infty$ at $e\neq 1$) has been analysed previously~\citep{JR1988,Richman1989,SA2014,SA2016}; 
it can be verified that the above solutions (\ref{eqn:phi1}-\ref{eqn:lambda1}) for the ignited-branch reduce to the low-density solution of  \cite{SA2016}:
\begin{equation}
\left.
\begin{array}{rcl}
   \lambda^2 &\approx&  \frac{1}{48e}\left(168 + 53(1-e)\right) \left[  \sqrt{1 + 5760e(1-e)(168+53(1-e))^{-2}} -1  \right] \\
    &\approx& \frac{5}{14}(1-e)\left(1+\frac{53}{168}(1-e)\right)\left(1- \frac{53}{84}(1-e)\right) \\
    \eta^2 &=& \frac{3\lambda^2(7+6\lambda^2)}{6+\lambda^2} \approx \frac{7}{2}\lambda^2 = \frac{5}{4}(1-e)\left(1+\frac{53}{168}(1-e)\right)\left(1- \frac{53}{84}(1-e)\right)  \\
     \sin{2\phi} &=& \frac{\eta}{1+\lambda^2} \approx \eta \sim \sqrt{1-e}\\
  \sqrt{T} &=& \frac{5\sqrt{\pi}\eta \cos{\phi}}{3(1-e^2)\nu (10+\eta^2) } 
     \approx \frac{\sqrt{\pi}}{6(1-e^2)\nu  } \eta (1-\eta^2/10)(1-\eta^2/2)\\
     &\approx& \frac{\sqrt{\pi}}{6(1-e^2)\nu  } \eta (1-\frac{3}{5}\eta^2) \sim (1-e)^{-1/2}
  \end{array}
  \right\}
  \label{eqn:scaling1}
\end{equation}
Therefore, in the limit  ($St\to\infty$) of a granular gas, $\eta\sim\lambda\sim \sin{2\phi}\sim \sqrt(1-e)$,
with the granular temperature diverging  like $T\sim(1-e)^{-1}$ -- the latter finding rules out the possibility of the quenched-state solution in a sheared granular gas.
The scaling relations (\ref{eqn:scaling1}) hold  at leading-order in $\sqrt{1-e}$ for $St\gg 1$,
and therefore we conclude that the inelasticty enhances the degree of anisotropy of ${\mathsfb M}$ on the ignited branch, see figure~\ref{fig:fig_phieta1}.

\begin{figure}
 \begin{center}
(a)
\includegraphics[scale=0.6]{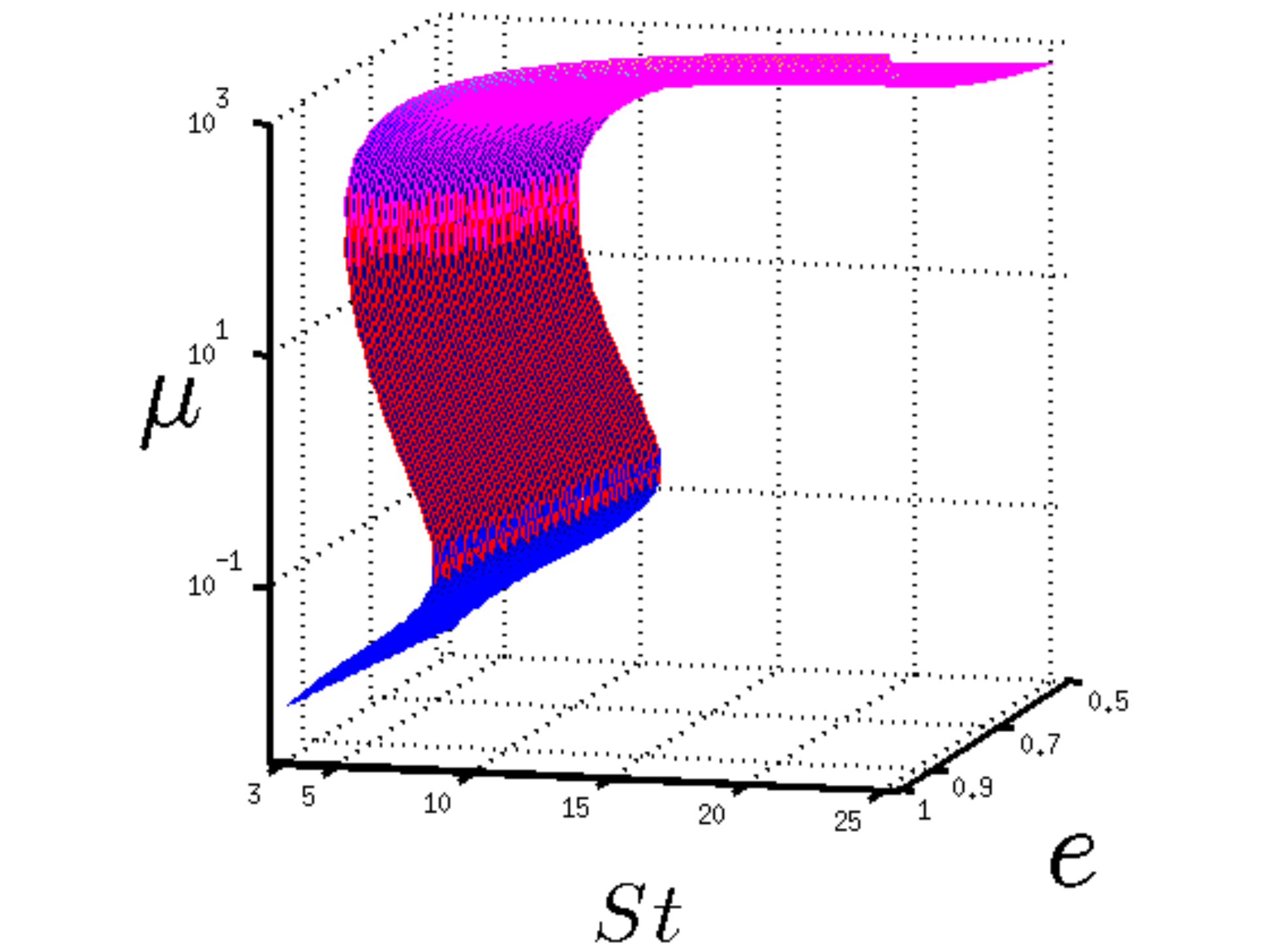}\\
 (b)
 \includegraphics[scale=0.38]{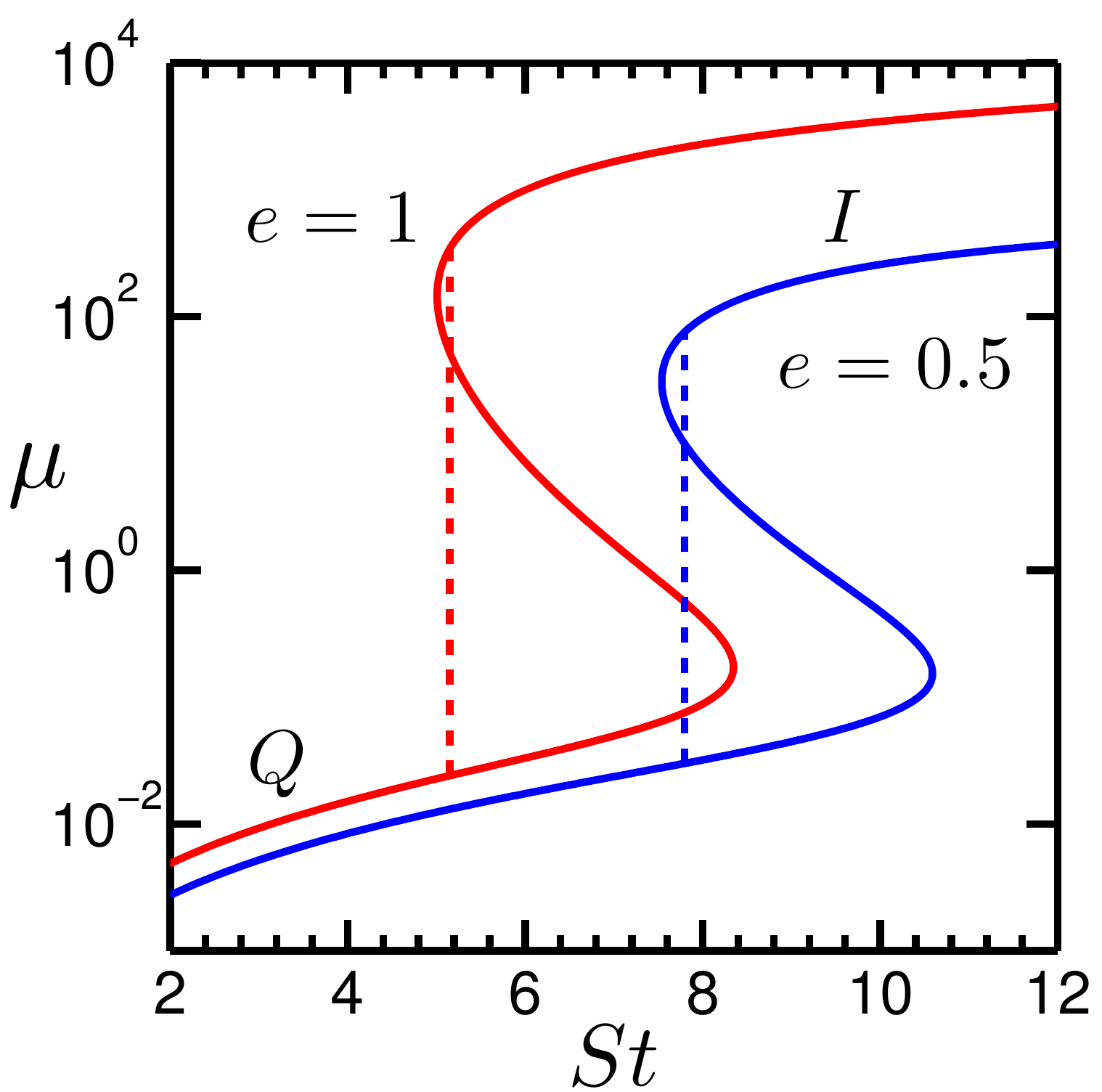}
(c)
\includegraphics[scale=0.38]{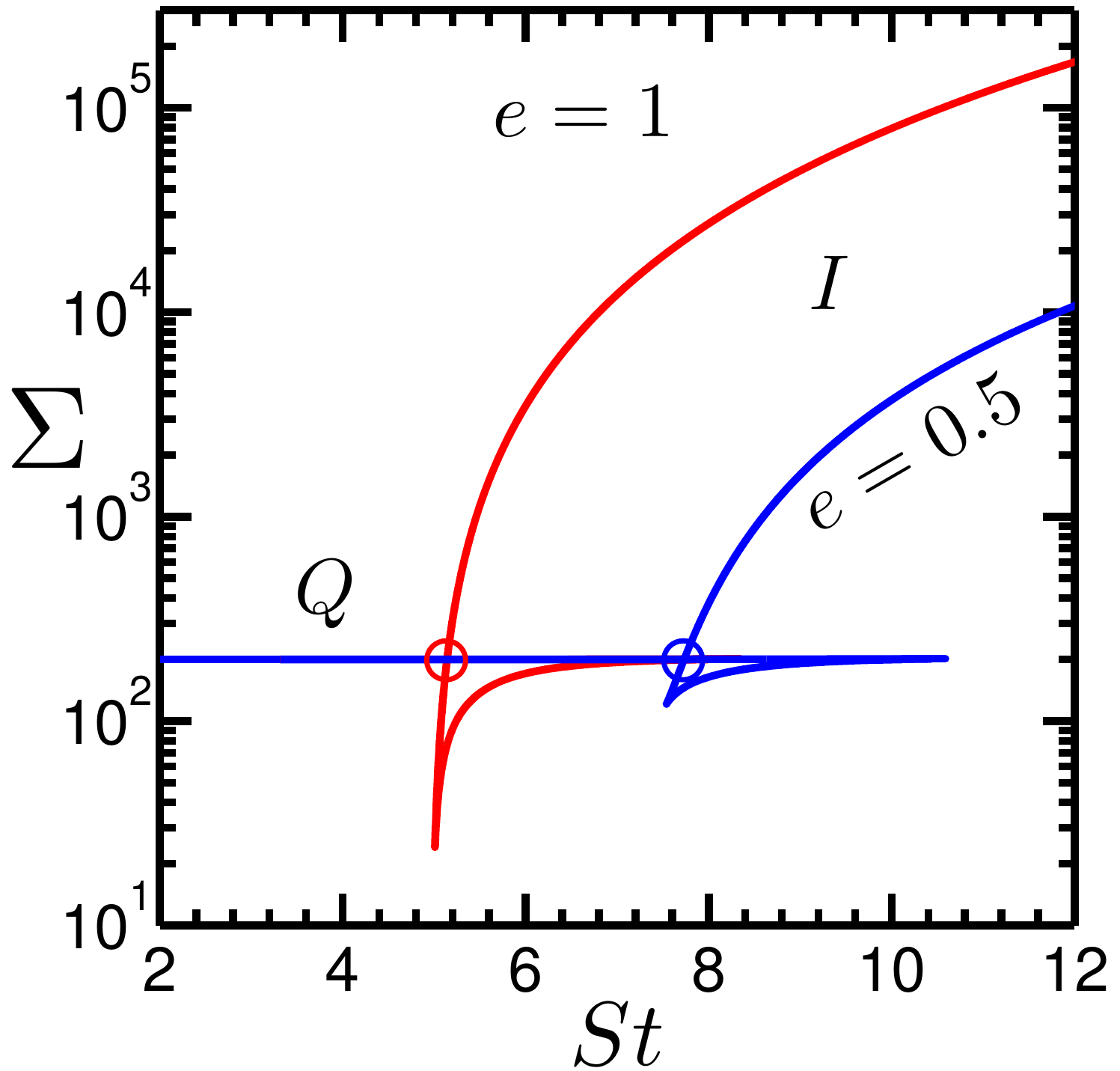}
\end{center}
\caption{
 (a)  Hysteretic behaviour of particle-phase  viscosity ($\mu$) as functions of ($St, e$)  for a volume fraction of  $\nu=0.005$;
  this represents DST (discontinuous shear-thickening) behaviour for any $e$ at  $\nu<\nu_{us}^l$ [(3.8)].
 (b) Viscosity versus $St$ for $e=1$ (red line) and $e=0.5$ (blue line); the vertical dotted lines represent the coexistence-point,
  marked by circles in panel $c$,  at which two states $I$ and $Q$ coexists with each other.
 (c) An effective Massieu function (\ref{eqn:Sigma1}), with parameter values as in panel $b$; see the text for details.
  }
\label{fig:fig_mu}
\end{figure}

\subsection{Shear viscosity: continuous and discontinuous shear-thickening (DST)}

The dimensionless shear viscosity for the particle phase is given by
\begin{eqnarray}
 \mu & =& -\frac{P_{xy}}{\rho_p\nu(\dot\gamma \sigma/2)^2}=\eta\cos(2\phi)T \nonumber\\
   &\equiv& \frac{3}{St}T - \frac{16\nu(1+e)^2}{35\pi} + \frac{3(1-e^2)\nu T^{3/2}}{5\sqrt{\pi}} (10 + \eta^2),
     \label{eqn:viscosity1}
   \\
   &\stackrel{St\to\infty}{\equiv} &- \frac{16\nu(1+e)^2}{35\pi} + \frac{3(1-e^2)\nu T^{3/2}}{5\sqrt{\pi}} (10 + \eta^2) > 0,
   \qquad \forall \quad e<1 .
   \label{eqn:viscosity2}
\end{eqnarray}
For  the ignited-state solution only (i.e.~$\aleph\equiv \aleph^{is}$), it can be verified that the shear viscosity for elastically colliding particles ($e=1$)
is  $ \mu = {3T}/{St}$ which represents the first term in (\ref{eqn:viscosity1}).

The variation of (\ref{eqn:viscosity1}) as functions of ($St, e$) is depicted in figure~\ref{fig:fig_mu}(a) for  particle volume fraction of $\nu=0.005$.
Similar to  granular temperature, the shear viscosity undergoes hysteretic jumps at $St=St_{c_2}$ (``$Q\to I$'') and $St_{c_1}$ (``$I\to Q$'')
on increasing and decreasing $St$, respectively.
The effect of dissipation ($e<1$) is to reduce the viscosity of the particle-phase in each state, see figure~\ref{fig:fig_mu}(b).
On the other hand, the effect of Stokes number can be understood by considering the  viscosity of elastically colliding ($e=1$) particles 
as given by
\begin{eqnarray}
\mu_{is} &\approx&   \frac{75\pi}{20736}\frac{St}{\nu^2} ,
\quad
\mu_{qs} \approx  \frac{384}{945\pi}\nu St^2 ,
\quad
\mbox{and} 
\quad
\mu_{us} \approx \frac{147\pi}{25}\nu^{-2} St^{-7},
\end{eqnarray}
in the ignited, quenched and unstable states, respectively.
Clearly, two shear-thickening branches ($Q$ and $I$) are connected via a shear-thinning  branch.

The `discontinuous shear thickening' (DST) behaviour,  such as in figure~\ref{fig:fig_mu}(a,b),
occurs only in the small Stokes-number limit of a dilute gas-solid suspension at $\nu<\nu_{us}^l$, (\ref{eqn:nuc_Tus}), for  any restitution coefficient. 
The middle-branch in figure~\ref{fig:fig_mu}(a,b), over which $\mu$ decreases with increasing $St$ (i.e.~the shear-thinning branch), is unstable.
This is a thermodynamic/constitutive instability which can be understood from a phenomenological viewpoint.
Let us calculate the following quantity,
\begin{equation}
  \Sigma(\dot\gamma) = \int_{\dot\gamma_R}^{\dot\gamma} \mu(\dot\gamma) \dot\gamma {\rm d}\dot\gamma + \Sigma_R,
  \label{eqn:Sigma1}
\end{equation}
which is a measure of the stress work and the reference value $\Sigma_R$ is added to make $\Sigma(\dot\gamma)$ positive definite.
The variation of (\ref{eqn:Sigma1}) is plotted against $St$ in figure~\ref{fig:fig_mu}(c) for $e=1$ (red line) and $0.5$ (blue line).
For each case, the upper-most envelope in figure~\ref{fig:fig_mu}(c) represents the stable solution, and the intersection between the ignited and quenched
branches represent the coexistence point at which both states coexist with each other.
The latter point is marked by vertical dashed lines in figure~\ref{fig:fig_mu}(b) -- this also follows from the well-known Maxwell's equal-area rule.
For the present problem, the effective shear work (\ref{eqn:Sigma1}) behaves like a Massieu  function~\citep{Callen1985} for the selection of  the 
`coexisting'  solution branch, although it must be noted that the choice of (\ref{eqn:Sigma1}) is not unique.
For example, if we choose to probe the jump in dynamic friction, $\mu/p$,
the location of the coexisting branch gets slightly shifted (not shown).
A proper  identification of a Massieu/entropy  function, or, a thermodynamic potential for the present
sheared suspension may require a stability analysis of the underlying moment equations subject to uniform shear flow, which is left to a future work.

\begin{figure}
  \begin{center}
  (a)
  \includegraphics[scale=0.35]{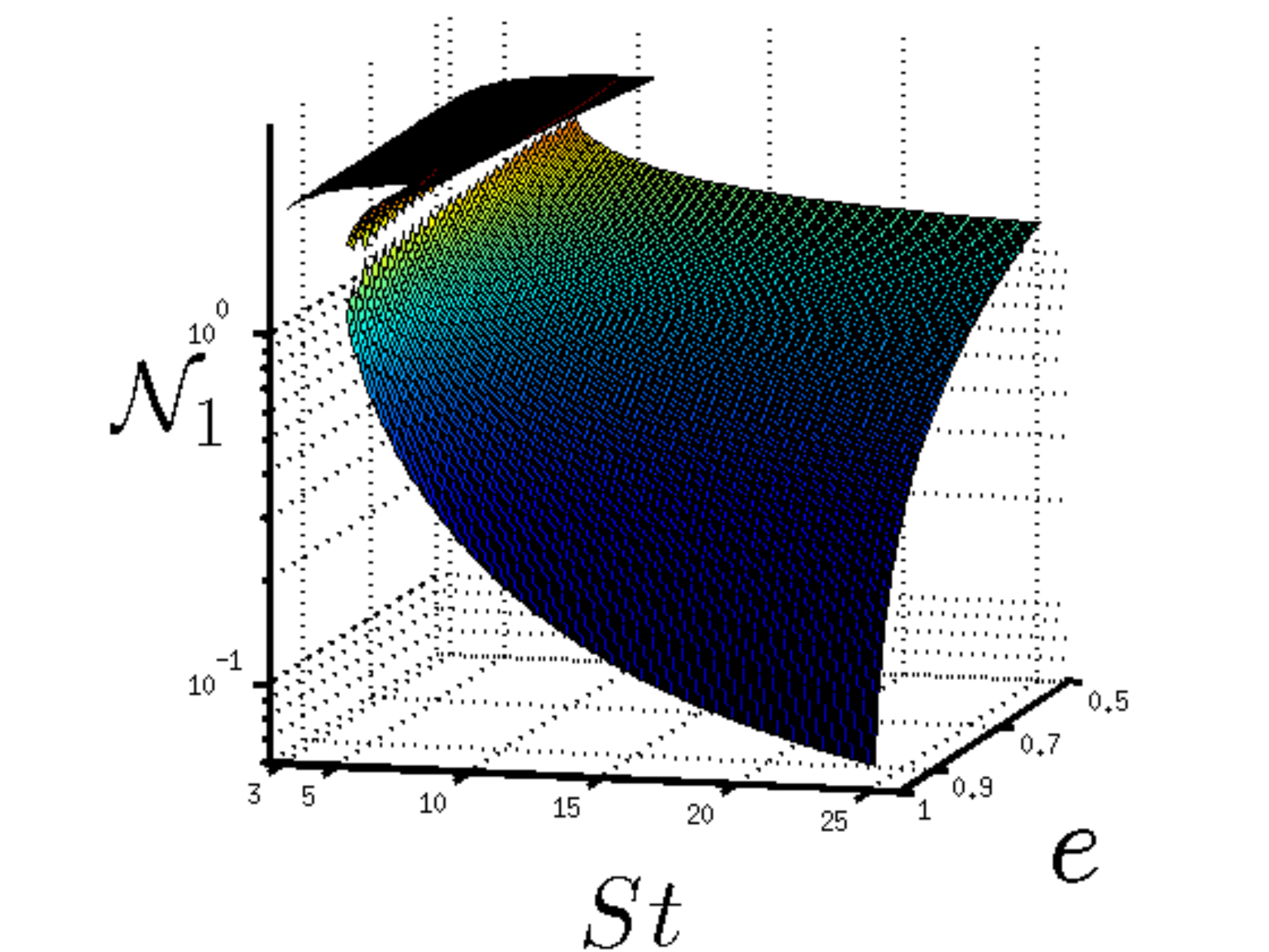}
  (b)
  \includegraphics[scale=0.36]{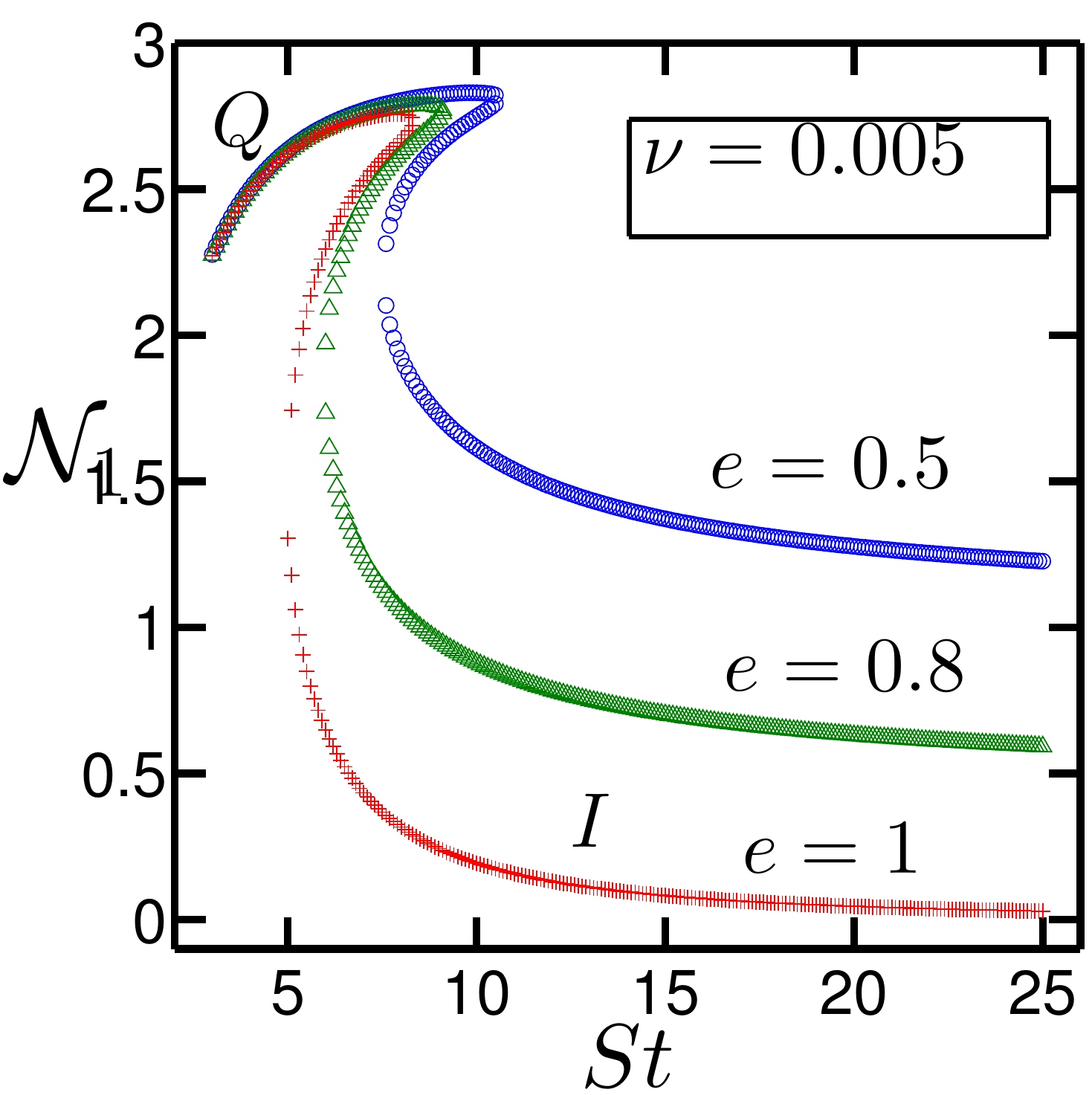}
\end{center}
  \caption{
  Variations of the first $({\mathcal N}_1)$ normal stress differences against Stokes number ($St$) and restitution coefficient ($e$) at  $\nu=0.005$.
   In panel $b$  the projection of panel $a$ is displayed for  different $e$.
   }
\label{fig:fig_N1}
\end{figure}

In the area of liquid-solid suspensions, the shear-thickening and its discontinuous analog 
are well-known since the original work of \cite{Hoffman1972}. There have been a renewed research activity to understand  the origin of
DST in the ``dense'' regime of colloidal and non-colloidal suspensions as well as in dense granular media~\citep{BJ2014,DM2014}.
Extending the present theoretical formalism to the dense regime of suspensions, by incorporating frictional interactions and
 related physics~\citep{Seto2013,Fernandez2013,WC2014,CBMF2017}, would be an interesting future work.
We became aware of a recent work that uses  gas kinetic theory~\citep{HT2016} in the context of a dilute ``thermerlized'' granular gas,
and  their finding on DST as a ``saddle-node'' bifurcation is similar to the present findings~\citep{SA2016a} -- however, they did not refer
to the work of \cite{TK1995} from which the present work follows. How a thermalized granular gas is related to present system of a gas-solid suspension needs 
to be investigated.

\subsection{First and second normal stress differences}

The expression for the first normal stress difference is
\begin{eqnarray}
{\mathcal N}_1 &=& \frac{P_{xx}-P_{yy}}{p} = 2\eta\sin{2\phi} ,
 \label{eqn:N1}
 \end{eqnarray}
 which has been `scaled' by the mean pressure $p=(P_{xx} + P_{yy} + P_{zz})/3$;
 in (\ref{eqn:N1}), $\phi$ and $\eta$ are calculated   from (\ref{eqn:phi1}) and (\ref{eqn:eta1}), respectively.
 The variation of (\ref{eqn:N1}) as functions of ($St, e$) is displayed in figure~\ref{fig:fig_N1}(a,b).
 The quenched-branch ${\mathcal N}_1$ remains unaffected by inelasticity (see panel b), however, on the ignited branch, increasing inelasticity increases ${\mathcal N}_1$;
 the effect of the gas-phase (i.e. decreasing $St$) also increases the ignited branch ${\mathcal N}_1$.
 On the whole, the dependence of ${\mathcal N}_1$ on both $St$ and $e$ mirrors that of the non-coaxiality angle ($\phi$) and the
shear-plane  temperature-anisotropy ($\eta$), compare  figure~\ref{fig:fig_N1}(b) with figure~\ref{fig:fig_phieta1}(a,b).
It is clear from (\ref{eqn:N1}) that the origin of the first normal stress difference is tied to 
the shear-plane anisotropies ($\eta$ and $\phi$) of the second-moment tensor as in the case of a sheared granular gas~\citep{JR1988,SA2016} --
the dependence of $St$ on its origin remains implicit via two anisotropy parameters ($\phi, \eta$).
 
\begin{figure}
  \begin{center}
  (a)
  \includegraphics[scale=0.4]{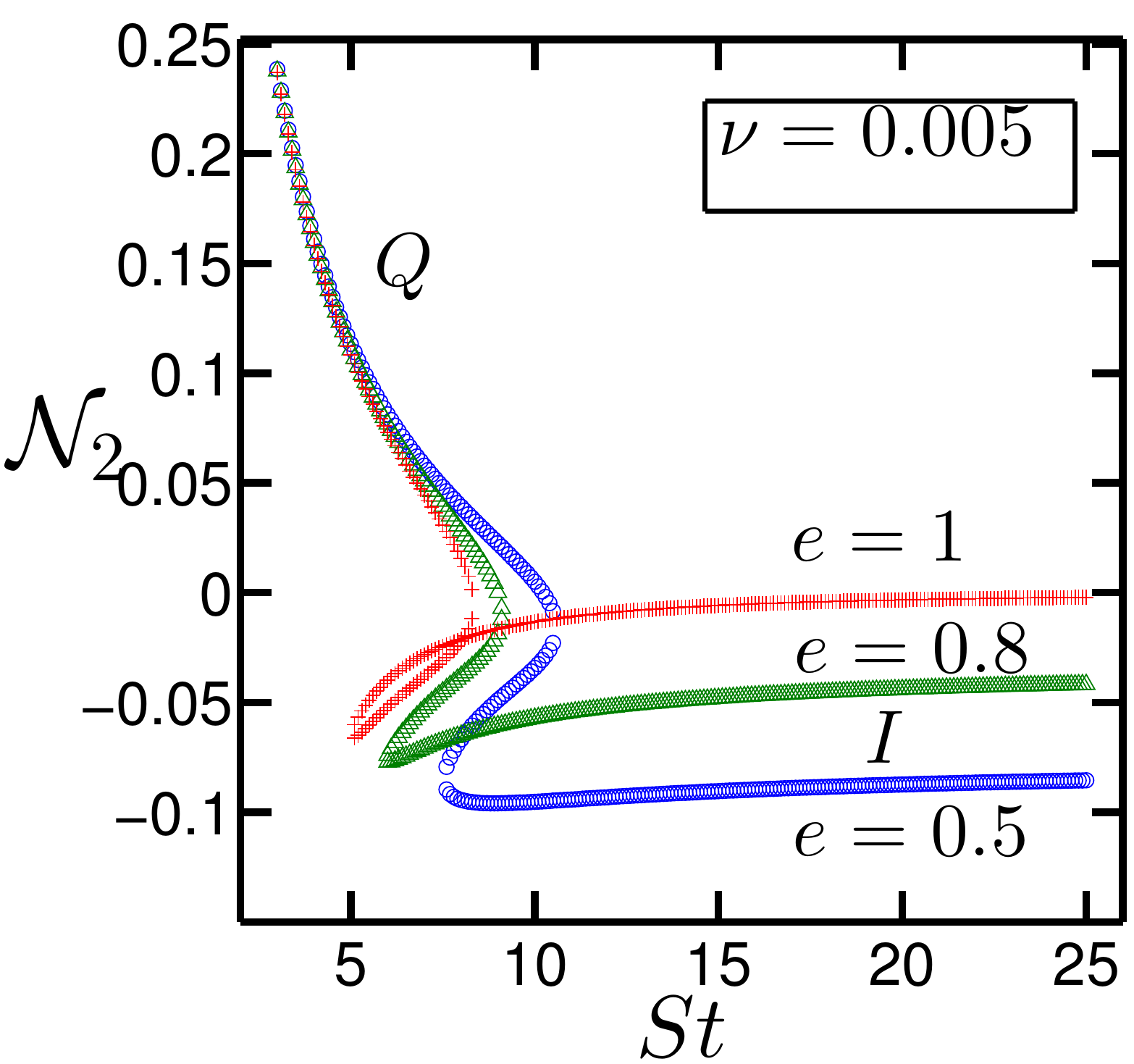}
   (b)
  \includegraphics[scale=0.4]{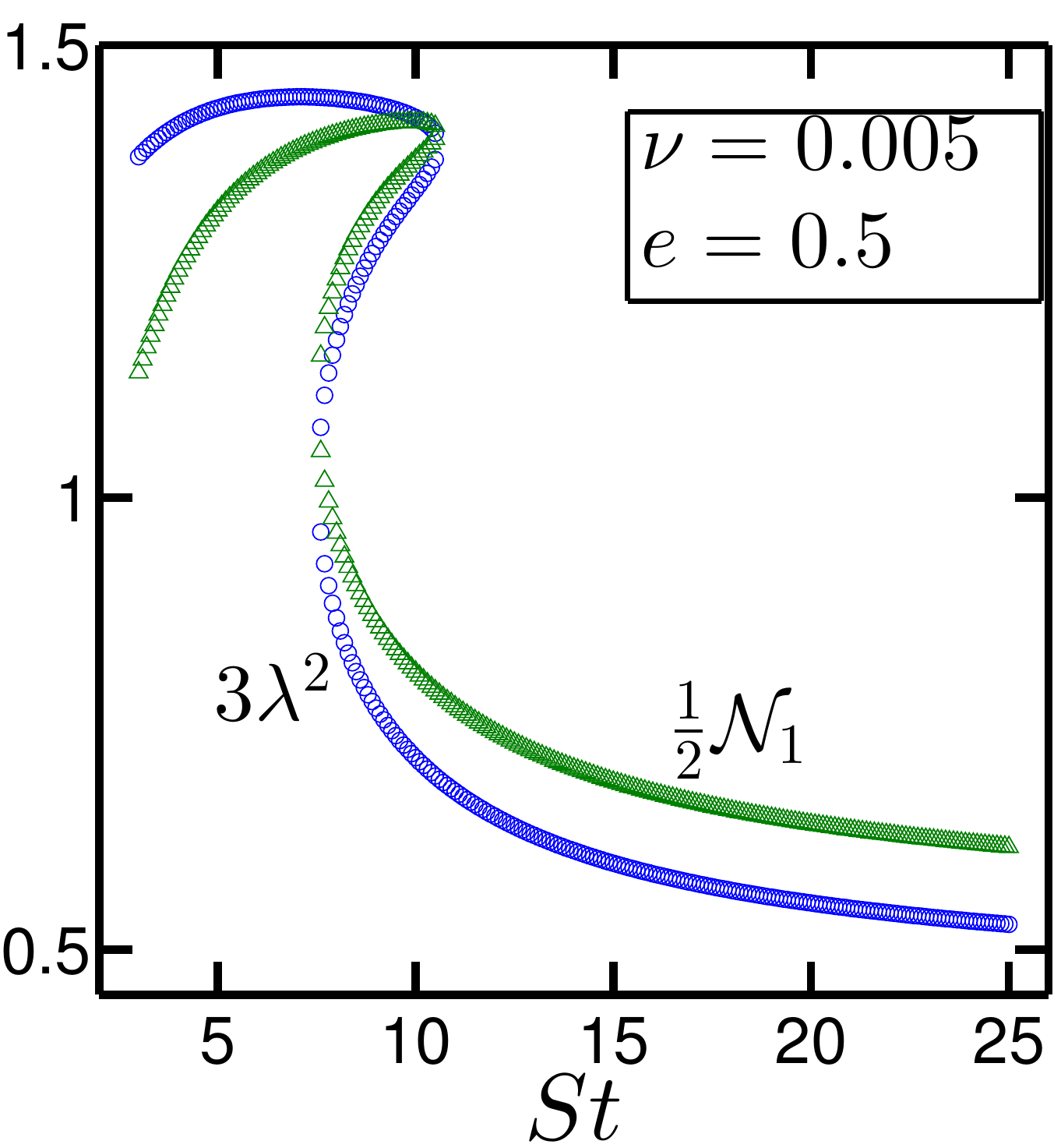}\\
   (c)
   \includegraphics[scale=0.4]{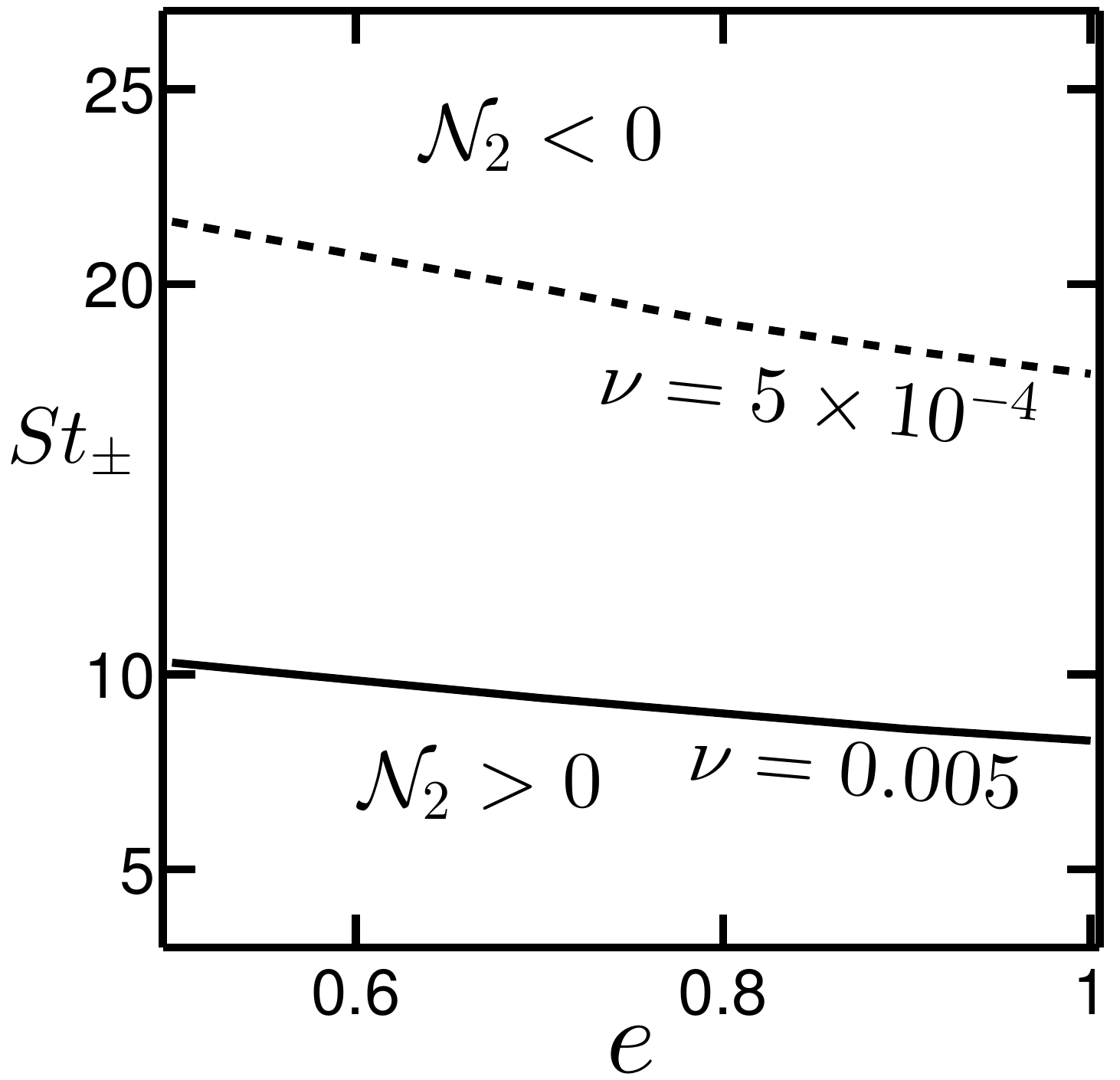}
  \end{center}
  \caption{
 (a)  Variation of the second $({\mathcal N}_2)$ normal stress difference against Stokes number $(St)$ for different values of the restitution coefficient;
 the particle volume fraction is  $\nu=0.005$.
 (b) Variations of $3\lambda^2$ (blue circles) and ${\mathcal N}_1/2$ (green triangles) with $St$ for $e=0.5$, with other parameters as in panel $a$.
 (c) Variations of the critical Stokes number $St_{\pm}$ (at which  ${\mathcal N}_2=0$) with $e$ for $\nu=0.005$ (solid line) and $\nu=0.0005$ (dashed line).
  }
\label{fig:fig_N2}
\end{figure}
 
The  scaled second normal stress difference is given by
\begin{eqnarray}
   {\mathcal N}_2 &=& \frac{P_{yy}-P_{zz}}{p} 
   = 3\lambda^2  -  \eta\sin{2\phi} = 3\lambda^2  - \frac{1}{2}{\mathcal N}_1.
   \label{eqn:N2}
  \end{eqnarray}
The variation of (\ref{eqn:N2}) with $St$ is shown in figure~\ref{fig:fig_N2}(a) for different values of the restitution coefficient $e$.
Similar to ${\mathcal N}_1$, the effect of inelasticity is to increase the magnitude of the second normal stress difference on the ignited branch,
but the quenched-branch  ${\mathcal N}_2$ remains  unaffected (expectedly) by changing $e$.
It is noteworthy in  figure~\ref{fig:fig_N2}(a) that ${\mathcal N}_2$ is positive and negative in the quenched and ignited states, respectively.
This sign-change can be understood from figure~\ref{fig:fig_N2}(b) which display the variations of two terms in (\ref{eqn:N2}) with $St$.
In the quenched state the excess temperature ($3\lambda^2\propto T_z^{ex}$) dominates over the shear-plane anisotropies ($\eta\sin{2\phi}\equiv {\mathcal N}_1/2$),
whereas the latter dominates over the former in the ignited state,
resulting in the sign-change of ${\mathcal N}_2$ at some finite value of $St$.

The parameter combinations ($St, e, \nu$) at which ${\mathcal N}_2$ undergoes   sign-reversal can be  calculated by solving the following equation
\begin{equation}
   {\mathcal N}_1 - 6\lambda^2 =0,
\end{equation}
along with (\ref{eqn:N1}) and (\ref{eqn:lambda1}).
Figure~\ref{fig:fig_N2}(c) shows the variation of $St_{\pm}$ with restitution coefficient: ${\mathcal N}_2$ is positive and negative, respectively,
below and above  each line for a specified density $\nu$.
It is seen that the effect of inelastic dissipation is to increase the critical value of  $St_{\pm}$ at which ${\mathcal N}_2$  changes its sign;
reducing the mean-density increases   $St_{\pm}$ at any $e$.

It may be noted that for a `dense' sheared granular gas ($St\to\infty$), 
the second normal-stress difference undergoes sign-change \citep{AL2005,SA2016} at some critical density ($\nu_{\pm}\sim 0.2$),
with ${\mathcal N}_2$ being negative and positive in the dilute and dense limit, respectively;
the competition between (i) the collisional anisotropies in a dense system (that makes  the particle-motion increasingly streamlined~\citep{AL2005}
with increasing density) and (ii) the second-moment anisotropies ($\phi, \eta, \lambda^2$)  is known to be responsible for this  sign-change~\citep{SA2016}.
For the present case of a `dilute' suspension, 
 the behaviour of ${\mathcal N}_2$ in the quenched state resembles that in a sheared `dense' granular fluid;
 this could possibly be due to the `streamlined'  particle motion in both systems, characterizing the underlying anisotropy.

\section{Discussion: Comparison with  Grad's moment-expansion (GME)}

Recall that in figure~\ref{fig:fig2}, we have made a detailed comparison between the predictions of two moment theories:
(i) the standard Grad's moment-expansion (GME) around a Maxwellian~\citep{Grad1949,TK1995,Sangani1996,CRG2015}
using Hermite polynomials and (i) the  present anisotropic-Maxwellian moment-expansion (AME).
 Overall,  the AME predictions for granular temperature 
are found to be  more accurate  (see insets in figure~\ref{fig:fig2}) than that of GME, especially at lower values of restitution coefficient,
via a  comparison with available simulation data.
This conclusion holds for shear viscosity too (not shown) since $\mu\propto \sqrt{T}$ -- in the following we focus on the
predictive abilities of the present theory (AME) with reference to two normal-stress differences.
(The reader is referred to  \cite{SA2014} for details on AME that has been used to derive a generalized Fourier law for heat-flux vector,
along with conductivity tensors; the heat-flux, however, vanishes in uniform shear flow as in the present case.)

\subsection{Suspension of elastic and inelastic hard spheres: ${\mathcal N}_1$ and ${\mathcal N}_2$}

From the present AME theory, the normal stress differences for elastic ($e=1$) hard-sphere suspensions  in the ``ignited'' state are given by (Appendix A)
\begin{equation}
 \mathcal{N}_1 = \frac{18}{6 + \Omega St^2}
 \quad
  \mbox{and}
  \quad
-  \mathcal{N}_2 = \frac{9(9 +\Omega St^2)\Omega St^2}{(6+\Omega St^2)\left[252 + {87}\Omega St^2 + {7}\Omega^2 St^4\right]}>0.
  \label{eqn:N12-AME}
\end{equation}
with
\begin{equation}
 \label{dilute_omega1}
   \Omega = \frac{1}{2 St^2}\left[\left(St^2 - \frac{171}{7}\right) 
       +  \left( \left(St^2 - \frac{3}{7}\right)^2 - (12\sqrt{2})^2 \right)^{1/2}\right].
\end{equation}
The last quantity $\Omega$  is positive for $St> St_{c_1}=\sqrt{171/7}$ (the critical Stokes number for ``ignited-to-unstable'' transition, viz.~eqn.~(3.7)),
and asymptotically approaches unity, $\Omega(St\to\infty) =1$, and hence $\Omega\in (0,1)$ at any $St>St_{c_1}$.

 The AME-predictions (\ref{eqn:N12-AME})  can be compared with the  corresponding GME predictions
 for ${\mathcal N}_1$ and ${\mathcal N}_2$:
\begin{equation}
  {\mathcal N}_1 = \frac{18}{6 + \Theta St^2}
  \quad
  \mbox{and}
  \quad
 -  {\mathcal N}_2 =  \frac{\frac{9}{14}\Theta }{6 + \Theta St^2} >0,
 \label{eqn:N12-SME}
  \end{equation}
  where
  \begin{equation}
  \Theta = \frac{1}{2St^{2}}\left[ \left(St^2 - \frac{171}{14}\right) +  \left( \left(St^2 - \frac{171}{14}\right)^2 - 12^2 \right)^{1/2} \right].
  \label{eqn:Theta}
  \end{equation}
  In (\ref{eqn:N12-SME}) that there is a minor correction in the expression for ${\mathcal N}_2$: the numerical factor $9/14$ in the numerator 
   was taken as $9/7$ in \cite{TK1995}.
  The positivity of (\ref{eqn:Theta}) follows from the positivity of its discriminant, resulting in $St> St_{c_1}=\sqrt{169.5/7}$,
 which is very close to  $\sqrt{171/7}$ for the positivity of (\ref{dilute_omega1}).
It is worth pointing out that the functional dependence of both (\ref{dilute_omega1}) and (\ref{eqn:Theta}) yields
almost identical values for $\Omega$ and $\Theta$ at any $St>St_{c_1}$.

\begin{figure}
\begin{center}
\includegraphics[scale=0.5]{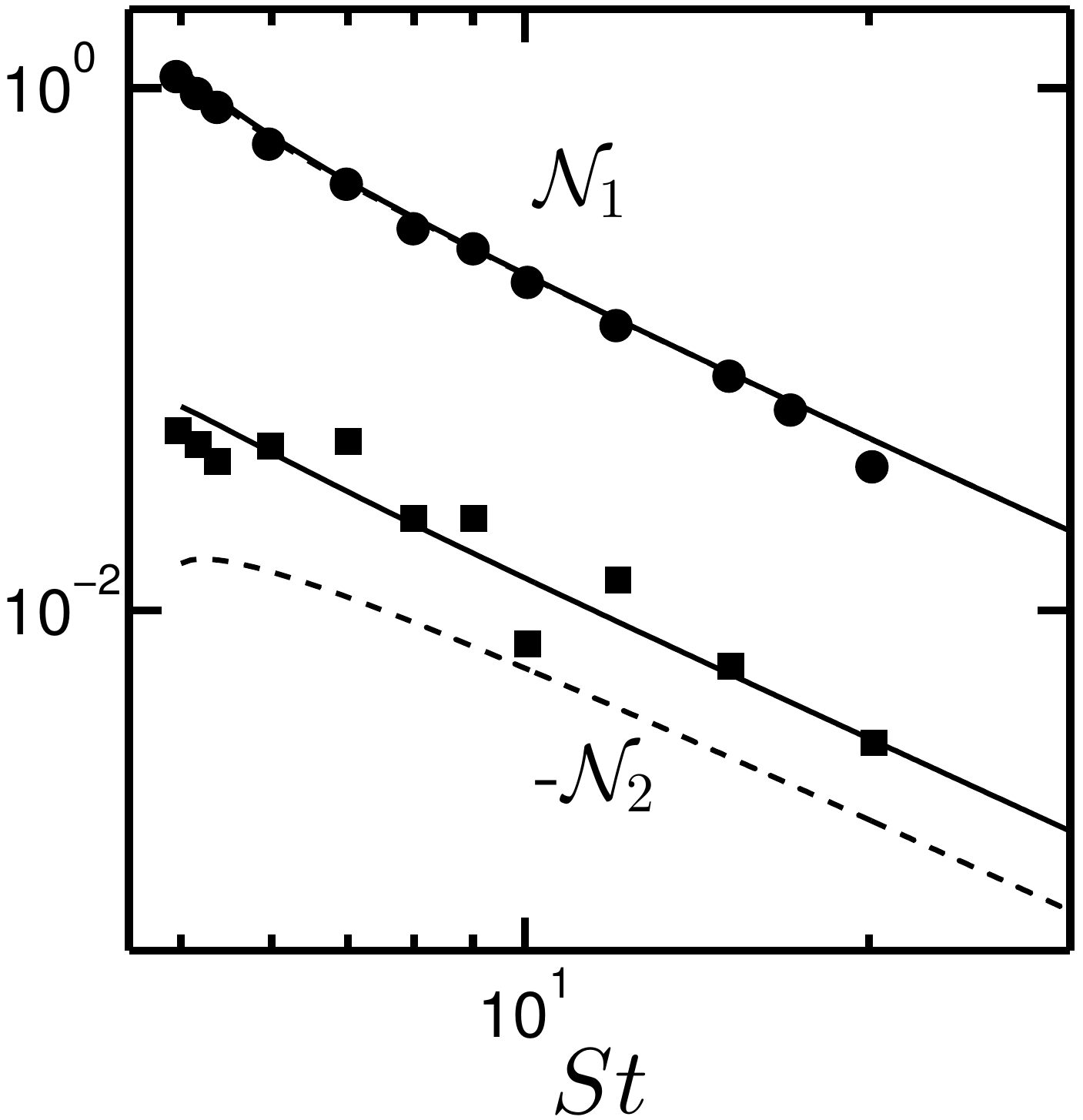}
\end{center}
 \caption{
 Variations of  the first (circles) and second (squares) normal-stress differences with Stokes number  for a suspension of elastic ($e=1$) hard-spheres --
the particle volume fraction is $\nu=0.01$, representing a `dilute' suspension.
The solid lines represent the present theory (\ref{eqn:N12-AME})   and the dashed lines represent the standard Grad's  moment theory (\ref{eqn:N12-SME});
the  DSMC simulation data  \citep{TK1995} are denoted by symbols.
 }
\label{fig:fig-N12-com}
\end{figure}

Figure~\ref{fig:fig-N12-com} shows a comparison of (\ref{eqn:N12-AME}) (denoted by solid lines) for ${\mathcal N}_1$ and ${\mathcal N}_2$
with (i) the DSMC simulation data (symbols) of \cite{TK1995} and (ii) the GME theory (\ref{eqn:N12-SME}) (dashed lines) --
the particle volume fraction is set  to $\nu=0.01$, representing a `dilute' gas-solid suspension.
 It is seen that both (\ref{eqn:N12-AME}) and (\ref{eqn:N12-SME}) predict the correct behaviour of  ${\mathcal N}_1$ -- two theories are almost indistinguishable from each other,
 with excellent quantitative agreement with simulation.
 However, there is a significant disagreement  (by a factor of about $2$) between  (\ref{eqn:N12-SME}) and  the DSMC data for the second normal-stress difference ${\mathcal N}_2$;
in contrast, the predictions of AME (\ref{eqn:N12-AME}) are uniformly good for both ${\mathcal N}_1$ and ${\mathcal N}_2$ over a range of Stokes number.

\begin{figure}
\begin{center}
(a)
\includegraphics[scale=0.38]{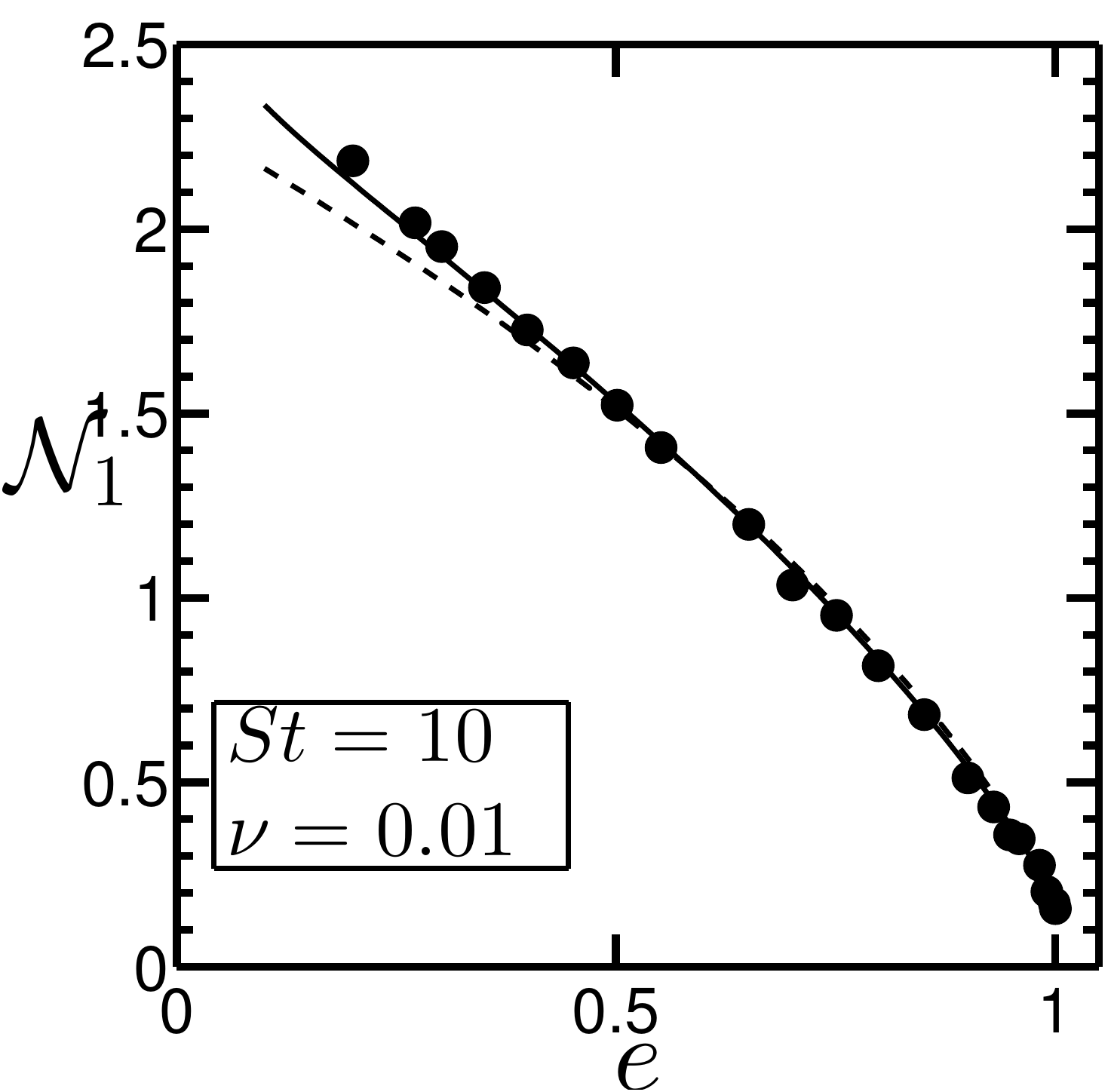}
 (b)
 \includegraphics[scale=0.38]{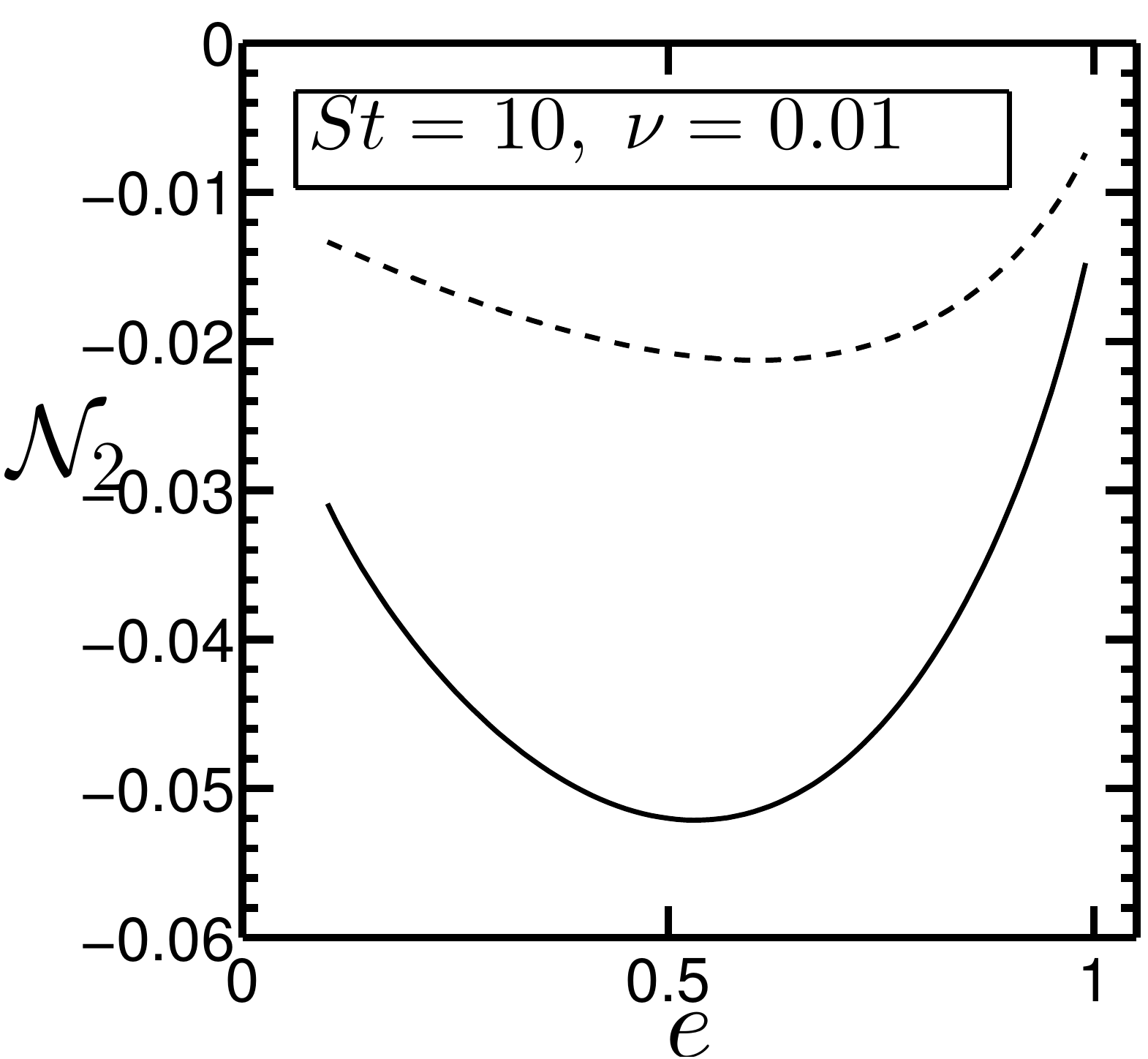}
\end{center}
 \caption{
 Comparisons of   (a) first  and (b) second normal-stress differences at $St=10$:
 (i) DSMC simulation (filled circles, \cite{Sangani1996}), (ii) present  theory (solid lines),
(iii) the standard Grad's moment expansion [dashed lines,  see Appendix E].
 }
\label{fig:fig-N12-com1}
\end{figure}

  It may be noted that in GME the quadratic nonlinear-terms (proportional to $P_{\alpha\beta}^2$)  need to be
   taken into account while evaluating the source term  $\aleph_{\alpha\beta}$ (\ref{eqn:source_second_moment_usf}) in order to obtain  `non-zero' 
   second normal-stress difference as  suggested by \cite{HH1982} for a Boltzmann (dilute) gas.
   A brief account of the related analysis for a gas-solid suspension of inelastic particles is provided in Appendix E  --
  the resulting expressions for ${\mathcal N}_1$ and ${\mathcal N}_2$  reduce   to (\ref{eqn:N12-SME}) 
   for elastically-colliding particles. On the other hand, the analysis of \cite{Sangani1996} did not include such
   nonlinear Grad-terms, resulting in ${\mathcal N}_2=0$;  the recent work of \cite{CRG2015}  also confirmed that
   the nonlinear Grad-terms are necessary for   ${\mathcal N}_2\neq 0$.
 It has been verified that the quadratic non-linear terms do not noticeably affect the value of ${\mathcal N}_1$ as well as  the shear viscosity.

 The effect of inelasticity on ${\mathcal N}_1$  and ${\mathcal N}_2$  can be ascertained from  figures~\ref{fig:fig-N12-com1}(a)
 and ~\ref{fig:fig-N12-com1}(b), respectively,  for a suspension with small Stokes number ($St=10$);
 other parameters are as in figure~\ref{fig:fig-N12-com}.
 It is clear from panel $a$ that the present predictions of ${\mathcal N}_1$ (solid line)  agree well with simulation data
 for the whole range of $e$, but the GME-predictions (dashed and dot-dashed lines) are slightly lower at $e<0.5$. 
 On the other hand, the GME theory grossly under-predicts (by a factor of $3$) the value of ${\mathcal N}_2$ for dissipative particles,
 see  figure~\ref{fig:fig-N12-com1}(b).

\subsection{From sheared suspension to `dry' ($St\to\infty$) granular gas}

To further understand  the predictions of  normal stress differences (${\mathcal N}_1$ and ${\mathcal N}_2$) from two theories (GME and AME) 
for dissipative particles ($e<1$), we  focus on the uniform shear flow of 
a dilute  granular gas ($St\to\infty$) --  the  molecular-dynamics (MD) simulations of inelastic hard-spheres with Lees-Edward boundary conditions
have  been carried out for a range of restitution coefficients $e\in(1, 0.3)$ at a particle volume fraction of $\nu=0.01$; a
relatively small system with $N=1000$ particles was simulated-- other simulation details can be found in \citep{AL2005,GA2008}. 
From these simulations, it is easy to extract data on two anisotropy parameters, namely, (i) the shear-plane temperature anisotropy $\eta$ [see (\ref{eqn:eta0})]
and (ii) the excess temperature $T^{ex}_z/T=\lambda^2$ [see (\ref{eqn:Texcess1})], which are marked by filled-circles in 
figures~\ref{fig:fig-Texcess-com}(a) and \ref{fig:fig-Texcess-com}(b), respectively.
In each panel, the theoretical predictions of \cite{SA2016} are shown by solid lines.
Overall, there is excellent agreement between AME theory and MD simulation.

\begin{figure}
\begin{center}
(a)
\includegraphics[scale=0.4]{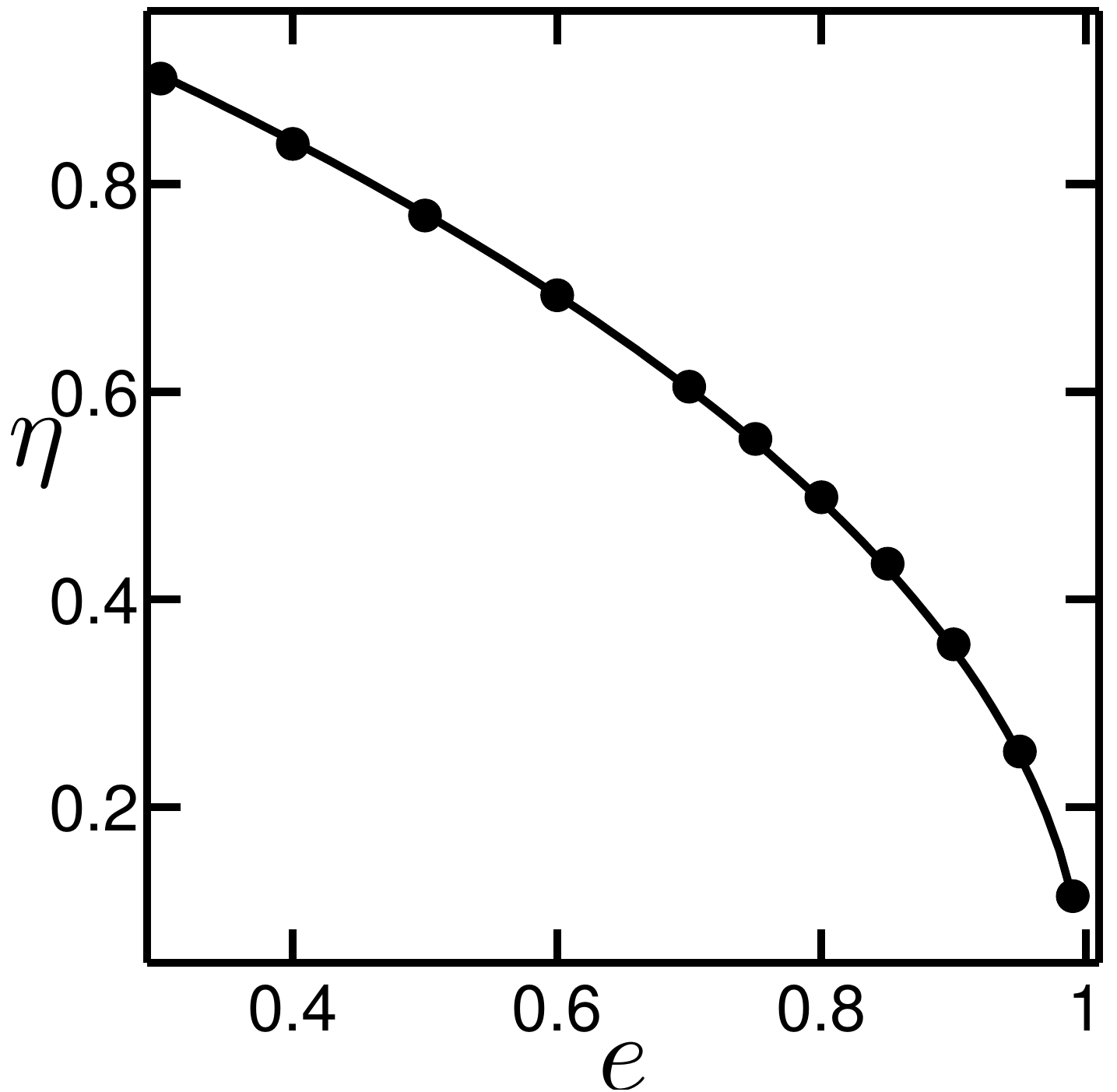}
(b)
\includegraphics[scale=0.4]{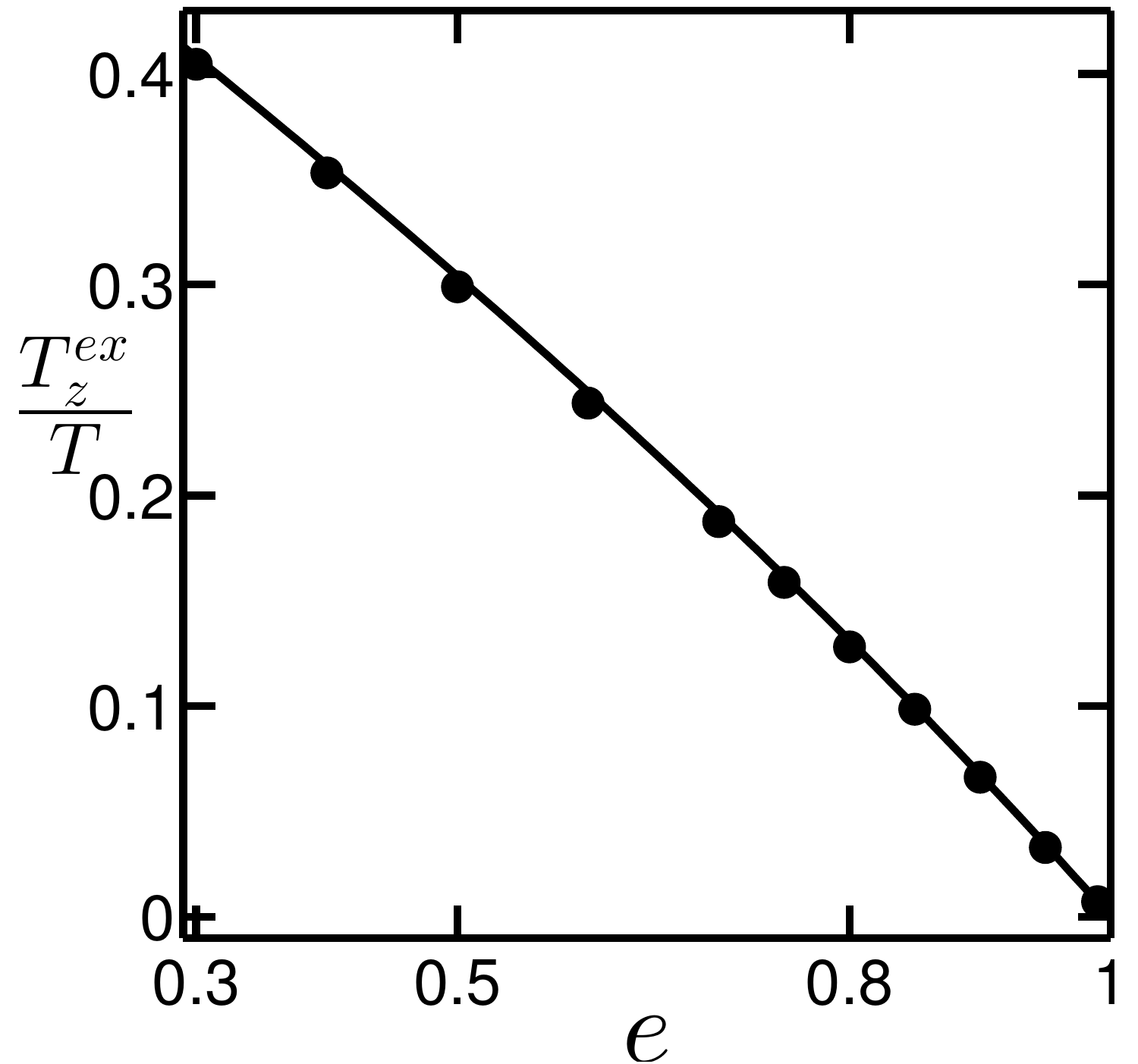}
\end{center}
 \caption{
 Comparisons of   (a) shear-plane temperature-anisotropy $\eta$ and (b)  the excess temperature $T^{ex}_z/T\equiv 2\lambda^2$
  in uniform shear flow of a granular gas ($St=\infty$):
  MD simulation (symbols) and   theory [solid line,~\cite{SA2016}].
  The particle volume fraction is $\nu=0.01$ and the number of particles is $N=1000$ in simulations.
 }
\label{fig:fig-Texcess-com}
\end{figure}

Figures~\ref{fig:fig-N12-com2}(a) and ~\ref{fig:fig-N12-com2}(b) compare the  MD simulation data (symbols) for ${\mathcal N}_1$ and ${\mathcal N}_2$,
respectively, with theory;
the AME predictions, denoted by solid lines, are calculated  from (\ref{eqn:N1}) and (\ref{eqn:N2}) by setting $St\to \infty$~\citep{SA2016},
 and the corresponding GME-predictions (Appendix E) are denoted by  dashed lines.
In addition, the dot-dashed line in each panel represents the super-Burnett-order  solution  of \cite{SG1998}, obtained from  the Chapman-Enskog expansion
of inelastic Boltzmann equation.
It is clear that both GME and AME theories predict almost the same value of ${\mathcal N}_1$ for a range of restitution coefficient $e\in (0.3, 1)$,
but the GME-prediction for ${\mathcal N}_2$ is consistently lower than that of AME and can be off by a factor of $3$ at $e=0.3$.
On the other hand, the AME-predictions for both ${\mathcal N}_1$ and ${\mathcal N}_2$ are comparable to those of Chapman-Enskog solution for $e\geq 0.8$,
but the latter becomes increasingly inaccurate  for $e<0.8$.
Therefore, the quantitative predictions of the AME  for two normal stress differences
are better than those of GME and Chapman-Enskog solution -- this overall conclusion holds for both gas-solid and dry granular suspensions of inelastic particles.

\begin{figure}
\begin{center}
(a)
\includegraphics[scale=0.38]{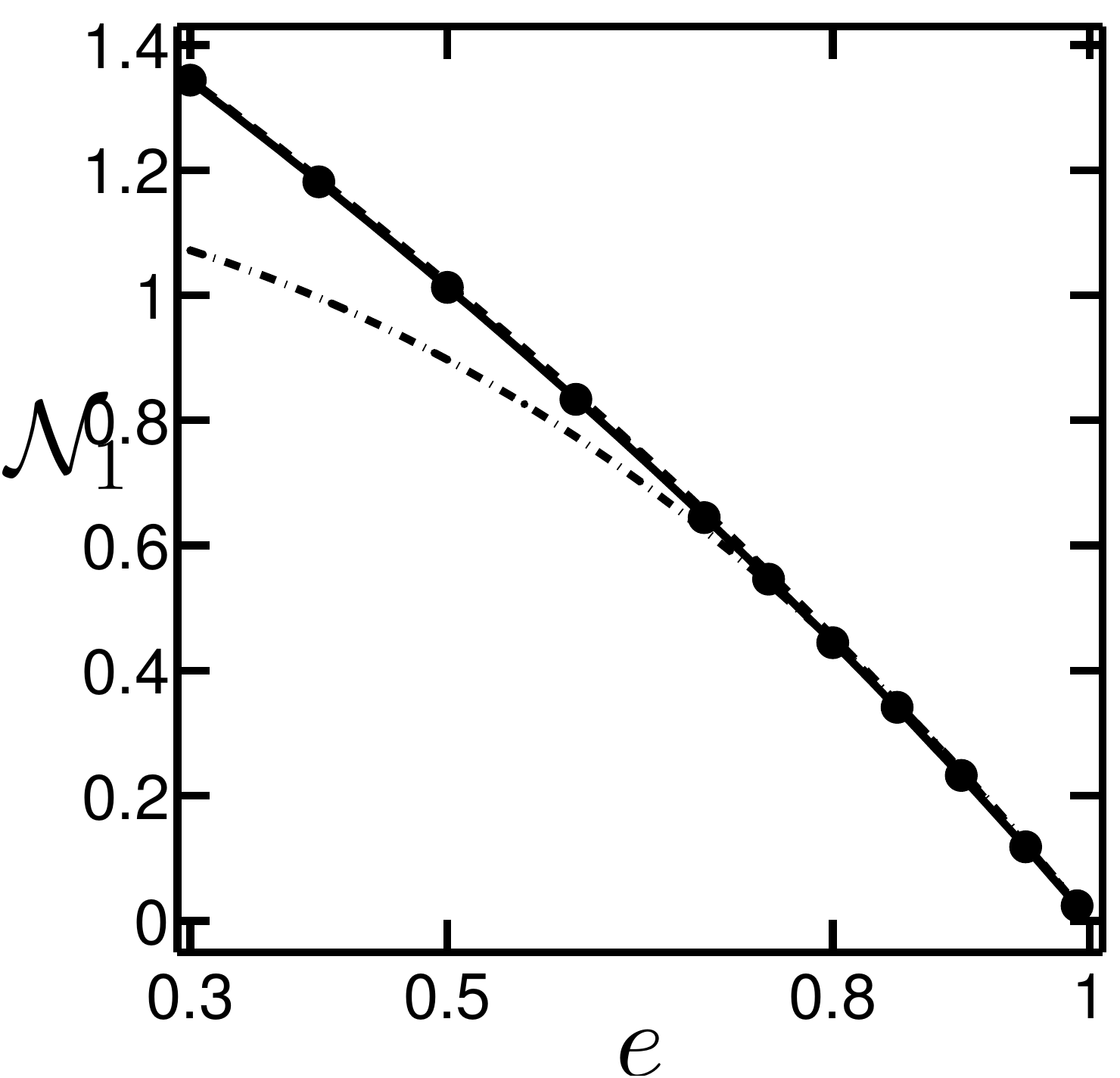}
 (b)
 \includegraphics[scale=0.38]{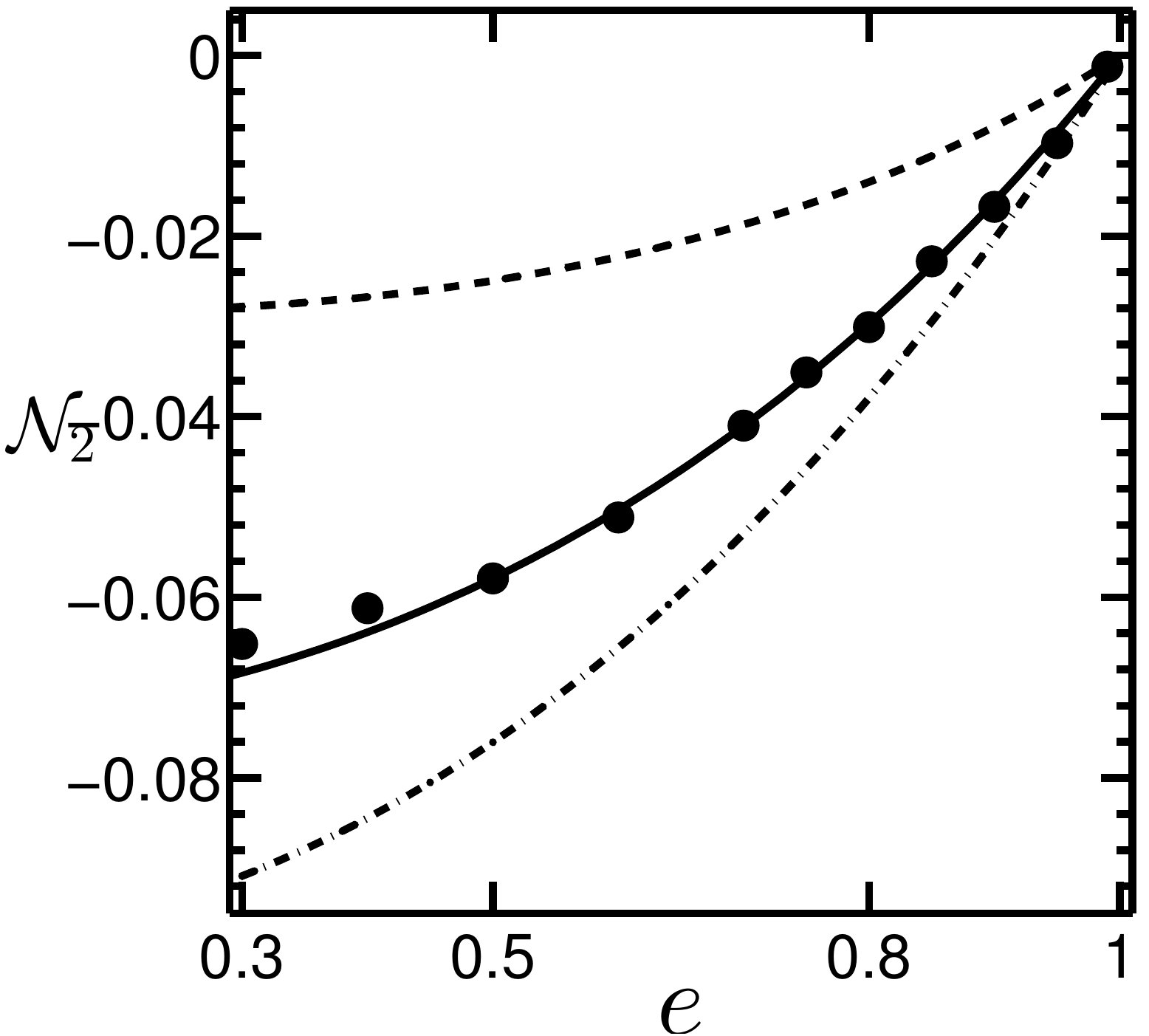}
\end{center}
 \caption{
 Comparisons of   (a) ${\mathcal N}_1$ and (b) ${\mathcal N}_2$ in uniform shear flow of a granular gas ($St=\infty$):
 (i) MD simulation (symbols), (ii) present  theory [solid lines,~\cite{SA2016}] and (iii) the standard Grad's moment theory (dashed line).
The dot-dash line in each panel represent the super-Burnett-order Chapman-Enskog solution   of \cite{SG1998}, see Appendix F.
Parameter values as in figure~\ref{fig:fig-Texcess-com}.
 }
\label{fig:fig-N12-com2}
\end{figure}

\section{Summary and Conclusion}

 The rheology of a  dilute gas-solid suspension, consisting of  inelastic spheres  suspended in a Newtonian fluid,
 undergoing simple shear flow is analysed, with  the effect of the gas-phase  being modelled via a Stokesian drag force.
 The pertinent  inelastic Boltzmann equation is solved using an anisotropic Gaussian as the single particle distribution function which is known to be appropriate for a sheared system.
 The resulting hydrodynamic model for the particle-phase consists of a 10-moment system $(\rho, {\boldsymbol u}, {\mathsfb M})$
 of density ($\rho$),  hydrodynamic velocity ($\bf u$) and the second-moment  (${\mathsfb M}=\langle {\boldsymbol C}{\boldsymbol C}\rangle$) of fluctuation/peculiar velocity.
 One focus of the present work has been  to analyse the anisotropy of ${\mathsfb M}$ in the  simple shear flow of a dilute gas-solid suspension
 and subsequently tie and explain the rheological quantities in terms of them.

 The seond-moment tensor has been characterized by three parameters: (i) the non-coaxiality angle ($\phi$, the angle between the principal eigen-direction
 of ${\mathsfb M}$ and the shear tensor ${\mathsfb D}$), (ii) the shear-plane temperature-anisotropy ($\eta$, the difference between the principal eigenvalues
 of ${\mathsfb M}$ on the shear plane, $\eta \propto T_x-T_y$, where $T_i$ is the granular temperature along $i$-th direction)  
 and (iii) the excess temperature ($\lambda^2\propto T -T_z$) along the vorticity direction;
 the first two [$\phi$ and $\eta$] are dubbed `shear-plane' anisotropies and the last-one ($\lambda^2$) is dubbed vorticity-plane anisotropy.
 The closed-form expressions for  three anisotropy parameters ($\phi$, $\eta$, $\lambda^2$) and  the granular temperature ($T$) have been obtained
  as functions of the Stokes number ($St$), the mean density ($\nu$)   and the restitution coefficient ($e$)  by solving the  second-moment balance equation;
  these are used to obtain analytical expressions for the particle-phase viscosity and two normal-stress differences.
Scaling relations have been obtained in the limits of small and large $St$ as well as small inelasticity $(1-e)$.

 Static multiple states of high and low temperatures are found when the Stokes number is small enough,
 thereby recovering the original ``ignited'' ($I$) and ``quenched'' ($Q$) states of \cite{TK1995} --
 the role of inelasticity on these states has been   examined.
  The high-temperature ignited state, in which the randomness of the particle motion is  high giving rise to a large value of granular temperature ($T$),
 exists above some minimum Stokes number $(St_{c_1})$ whose value increases with increasing  $e$.
  In contrast,  the low-temperature quenched state, in which most of the particles in the system follow the local fluid velocity,
 appears below a critical value of Stokes number $(St_{c_2})$  which is a decreasing function of both $e$ and  $\nu$.
 Both  these Stokes numbers ($St_{c_1}$ and $St_{c_2}$) have been determined analytically as  functions of $\nu$ and $e$, and
  the regions of co-existence of two states (quenched and ignited) along with the transition regimes have been 
  identified in a three-dimensional ($St, e, \nu$) phase diagram.

 The effect of inelasticity is found to reduce the particle-phase viscosity on both ignited and quenched states,
 with shear-thickening behaviour (increasing viscosity with increasing shear rate) being found in both states.
At any $e$, the shear-viscosity  undergoes a discontinuous jump with increasing $St$  at  ``$Q \rightarrow I$''  transition,
 which  can be interpreted as ``discontinuous shear thickening'' (DST).
 The two normal stress differences also undergo similar first-order jump-transitions: (i) ${\mathcal N}_1$ from large to small positive values 
 and (ii) ${\mathcal N}_2$ from positive to negative values.
 The sign-change of ${\mathcal N}_2$ (figure 10) has been identified with the system being making a ``$Q \leftrightarrow I$''  transition.
 The origin of this sign-change has been tied to  a competition between (i) the excess temperature ($T_z^{ex} \propto 3\lambda^2$)
 and (ii) the shear-plane anisotropies  ($\eta\sin{2\phi}\equiv {\mathcal N}_1/2$) of the second-moment tensor:
 while  the former dominates over the latter in the quenched state, the latter dominates  in the ignited state,
resulting in the sign-change of ${\mathcal N}_2$ at some finite value of $St$.
For both granular and gas-solid suspensions, the excess temperature along the vorticity direction is responsible
 for the origin of ${\mathcal N}_2\neq 0$, while  the temperature anisotropy $\eta$ and the non-coaxiality angle $\phi$ are responsible for ${\mathcal N}_1\neq 0$.
 

The comparative analyses in figures~\ref{fig:fig2}, \ref{fig:fig-N12-com}, \ref{fig:fig-N12-com1}, \ref{fig:fig-Texcess-com} and \ref{fig:fig-N12-com2} can be summarized as follows:
 the moment expansion about an anisotropic-Maxwellian (AME) yields  accurate transport coefficients (shear viscosity and normal stress differences)
 for dissipative particles ($e<1$) in both small and large Stokes number limits,
representative of gas-solid and dry granular suspensions, respectively.
The standard Grad's moment-expansion (GME) significantly under-predicts the value of the second normal stress difference ${\mathcal N}_2$,
although it is comparable with AME with respect to ${\mathcal N}_1$ up-to a  restitution coefficient of $e=0.5$.
On the other hand, the latter theory (GME) also over-predicts the shear viscosity ($\mu\propto \sqrt{T}$, viz.~figure~\ref{fig:fig2})
of small-$St$ suspensions even for moderately dissipative ($e=0.8$) particles;
the mismatch between GME and simulation increases with decreasing $e$.
Based on the present work we conclude  that the  superior predictive ability of the AME theory for hydrodynamics and 
rheology of `dry' ($St\to\infty$) sheared granular gases~\citep{SA2014,SA2016}
carries over to small-$St$ gas-solid suspensions of highly inelastic particles.

It would  be interesting to check the applicability of this theory to dense gas-solid suspensions of inelastic particles (with frictional interactions)
which can be taken up in future. 
The present work can  also be extended to include a `non-linear' drag law (dependence on particle Reynolds number) by modifying (\ref{eqn:stokesDrag1}) via 
well-known empirical correlations.
Lastly, the anisotropies ($\phi, \eta, \lambda^2$) of the second-moment tensor should be 
measured from simulations of finite-$St$ suspensions so that one-to-one comparisons with theory can be made in this regard.

\appendix

\section{Analysis in the ignited state for elastic hard-spheres}

For a gas-solid suspension of elastic hard-spheres ($e=1$), the collisional source of second-moment in the ignited state is given by
\begin{eqnarray}
    \aleph_{\alpha\beta} &=& \frac{-24\rho_p\nu^2}{\sigma\pi^\frac{3}{2}}\int(k_\alpha j_\beta+k_\beta j_\alpha)({\textbf{\emph k}}\cdot{{\mathsfb M}}\cdot{\textbf{\emph j}})({\textbf{\emph k}}\cdot{{\mathsfb M}}\cdot{\textbf{\emph k}})^\frac{1}{2} d{\boldsymbol k}
\nonumber\\
 & =&  -\frac{32\rho_p\nu^2 T^{3/2}}{35\sigma\sqrt{\pi}} \times \nonumber\\
&& \left[
 \begin{array}{ccc}
  \eta^2+21\lambda^2+21\eta\sin2\phi & -21\eta\cos2\phi & 0\\
  -21\eta\cos2\phi & \eta^2+21\lambda^2-21\eta\sin2\phi & 0\\
   0 & 0 & -2(\eta^2+21\lambda^2)
 \end{array}
\right] ,
\label{eqn:source_ignited_explicit-e1}
\end{eqnarray}
which is a function of $\nu$,$T$, $\eta$, $\phi$ and $\lambda^2$.

Four independent equations of second-moment balance,
\begin{equation}
    P_{\delta\beta}u_{\alpha,\delta}+P_{\delta\alpha}u_{\beta,\delta} +\frac{2\gamma}{St} P_{\alpha\beta} =\aleph_{\alpha\beta},
 \label{eqn:balance_second_moment=e1}
\end{equation}
can be rearranged  to yield a quartic-order equation,
\begin{equation}
 \omega^2\Big[12096 St^2\omega^2+\left(10260 St-420 St^3\right)\omega+3225-175 St^2\Big]=0,
 \label{eqn:cubicT}
\end{equation}
where $\omega$ is the rescaled temperature
\begin{equation}
    \omega = \frac{\nu}{\sqrt{\pi}} \frac{\sqrt{T}}{(\dot\gamma \sigma/2)}.
\end{equation}
In the following,  the temperature has been made dimensionless by dividing it by $(\dot\gamma \sigma/2)^2$. 
Three distinct solutions of (\ref{eqn:cubicT}) are
\begin{eqnarray}
   \label{dilute_T_3}
  \sqrt{T_{is}} &= & \frac{5\pi^\frac{1}{2}}{144}\frac{St}{\nu}\Omega(St),
\\
 \label{dilute_T_2}
 \sqrt{T_{us}} &=& \frac{5\pi^\frac{1}{2}}{144}\frac{St}{\nu}\Big[\frac{7 St^2-171-\sqrt{49 St^4-42 St^2-14103}}{14 St^2}\Big],\\
 \label{dilute_T_1}
 T_{qs} &=& 0,
 \end{eqnarray}
 with $T_{is}>T_{us}> T_{qs}$, where
 \begin{equation}
 \label{dilute_omega}
  \Omega(St)=\Big[\frac{7 St^2-171+\sqrt{49 St^4-42 St^2-14103}}{14 St^2}\Big] \equiv \frac{144}{5}\omega_{is} St^{-1}.
\end{equation}
In the above expressions, $T_{qs}$ corresponds to the quenched state temperature,
 $T_{us}$ corresponds to an unstable temperature and $T_{is}$ corresponds to the temperature in the ignited state.
 It is clear from (\ref{dilute_T_2}) that  a positive value for  $T_{us}$ requires the following condition on the Stokes number:
\begin{equation}
 \label{condition_st_c}
 7 St^2-171\geqslant 0, \qquad \Rightarrow \quad St\geqslant \sqrt{\frac{171}{7}} \approx 4.9425 \equiv St_{c_1}.
\end{equation}
Therefore, $St$ must be greater than or equal to $St_{c_1}$, and (\ref{condition_st_c}) provides a lower bound on $St$
for the existence of the ignited state in a dilute sheared gas-solid suspension.

The remaining   equations of (\ref{eqn:balance_second_moment=e1}) can be solved to yield
solutions for $\eta^2$ and $\lambda^2$ in the ignited state:
 \begin{equation}
 \left.
 \begin{array}{lcl}
 \eta^2 &=& \frac{(9 + \Omega St^2)}{4\left(1 + \frac{29}{84}\Omega St^2 + \frac{1}{36} \Omega^2 St^4\right)}\\
 \lambda^2 &=& \frac{(7 + \Omega St^2)}{14\left(1 + \frac{29}{84}\Omega St^2 + \frac{1}{36} \Omega^2 St^4\right)}
  \end{array}
 \right\} ;
 \end{equation}
  the solution for the non-coaxilality angle is
 \begin{equation}
   \sin(2\phi) = \frac{\eta}{1+\lambda^2}.
 \end{equation}
 Therefore, the normal stress differences in the ignited state are given by
\begin{eqnarray}
 \mathcal{N}_1& =& \frac{15}{5+24 St\omega} \equiv \frac{18}{6 + \Omega St^2}, \\
 \label{nsd2_elastic_dilute}
 -\mathcal{N}_2&=& \frac{270 St\omega(5+16 St\omega)}{(5+24 St\omega)(175+1740 St\omega+4032 St^2\omega^2)} .
\end{eqnarray}

In the ignited state, the expression for the  shear viscosity of the particle phase is
 \begin{equation}
 \label{viscosity_elastic_dilute}
\mu=-P_{xy}/\dot\gamma =\mu_N\Omega(St),
\end{equation}
where
\begin{equation}
     \mu_N =\frac{5\sqrt\pi}{96}\rho_p\sigma \sqrt{T}
\end{equation}
is the Newtonian viscosity of a dilute gas. 
Therefore, $\Omega(St)$ [(\ref{dilute_omega})]  is a measure of the deviation of particle-phase viscosity from the Newtonian viscosity of a dilute hard-sphere gas.

\section{Coefficients $a_i$}

Explicit expressions of the individual coefficients $a_i$ appearing in  (\ref{eqn:energy_balance}) are given by:
\begin{eqnarray}
 a_{10} &=& 86416243200 (3-e)^4 (1-e)^3 (1+e)^7\pi St^6\nu^7,\\
 a_9 &=& 28805414400 (3-e)^3 (1-e)^2 (1+e)^6 (19-13 e)\pi^{(3/2)} St^5\nu^6,\\
 a_8 &=& 28576800 (3-e)^2 (1-e) (1+e)^5\pi^2 St^4\nu^5 \Big(252 (197-278 e+93 e^2)\nonumber\\
         &&\qquad\qquad\qquad\qquad\qquad\qquad+5 (1747-1438 e+363 e^2) St^2\Big),\\
 a_7 &=& 3810240 (3-e) (1+e)^4 \sqrt{\pi} St^3\nu^4 \Big(2100 (1-e) (241-284 e+79 e^2)\pi^2\nonumber\\
         &&\qquad\qquad\qquad\qquad+25 (12607-19952 e+10099 e^2-1746 e^3)\pi^2 St^2\nonumber\\
         &&\qquad\qquad\qquad\qquad\qquad\qquad-3456 (3-e)^3 (1-e)^2 (1+e)^4 St^3\nu^3\Big),\\
 a_6 &=& 79380 (1+e)^3\pi St^2\nu^3 \Big(21000 (1-e) (871-854 e+199 e^2)\pi^2\nonumber\\
         &&\qquad\qquad\qquad\qquad+500 (56617-78677 e+35629 e^2-5361 e^3)\pi^2 St^2\nonumber\\
         &&\qquad\qquad\qquad\qquad-125 (1691+539 e-1223 e^2+337 e^3)\pi^2 St^4\nonumber\\
         &&\qquad\qquad\qquad\qquad-27648 (3-e)^3 (1-e) (1+e)^4 (29-23 e) St^3\nu^3\Big) ,\\
 a_5 &=& 18900 (1+e)^2\pi^{(3/2)} St\nu^2 \Big(441000 (1-e) (23-11 e)\pi^2\nonumber\\
         &&+10500 (3437-3093 e+688 e^2)\pi^2 St^2-875 (477+442 e-247 e^2)\pi^2 St^4\nonumber\\
         &&\qquad\qquad-580608 (3-e)^2 (1-e) (1+e)^4 (11-7 e) St^3\nu^3\nonumber\\
         &&\qquad\qquad-1152 (3-e)^2 (1+e)^4 (991-934 e+279 e^2) St^5\nu^3\Big) ,\\
 a_4 &=& 63 (1+e)\nu \Big(165375000 (1-e)\pi^4+656250 (2437-1069 e)\pi^4 St^2\nonumber\\
         &&-109375 (107+193 e)\pi^4 St^4-48384000 (3-e) (1-e) (1+e)^4 (37-19 e)\pi^2 St^3\nu^3\nonumber\\
         &&-288000 (3-e) (1+e)^4 (3917-3368 e+843 e^2)\pi^2 St^5\nu^3\nonumber\\
         &&-3024000 (3-e)^3 (1+e)^4\pi^3 St^6\nu^3+7962624 (3-e)^4 (1-e) (1+e)^8 St^6\nu^6\Big)  ,\\
 a_3 &=& 2520 \sqrt{\pi} St \Big(2296875\pi^4-504000 (1-e) (1+e)^4 (41-17 e)\pi^2 St\nu^3\nonumber\\
         &&-6000 (1+e)^4 (5617-4438 e+933 e^2)\pi^2 St^3\nu^3-189000 (3-e)^2 (1+e)^4\pi^3 St^4\nu^3\nonumber\\
         &&-1000 (1+e)^4 (1203-1002 e+247 e^2)\pi^2 St^5\nu^3\nonumber\\
         &&+663552 (3-e)^3 (1-e) (1+e)^8 St^4\nu^6\Big)   ,\\
 a_2 &=& -2400 (1+e)^3\pi St\nu^2 \Big(1323000 (1-e)\pi^2+15750 (383-151 e)\pi^2 St^2\nonumber\\
         &&+165375 (3-e)\pi^3 St^3+875 (789-305 e)\pi^2 St^4\nonumber\\
         &&-870912 (3-e)^2 (1-e) (1+e)^4 St^3\nu^3-1728 (3-e)^2 (1+e)^4 (47-39 e) St^5\nu^3\Big),\\
 a_1 &=& -2000 (1+e)^2\pi^{(3/2)} St^2\nu \Big(441000\pi^2+55125\pi^3 St+98000\pi^2 St^2\nonumber\\
         &&-580608 (3-e) (1-e) (1+e)^4 St\nu^3-3456 (3-e) (1+e)^4 (47-39 e) St^3\nu^3\Big),\\
 a_0 &=& 1440000 (1+e)^5\pi^2 St^2 (4+St^2)\nu^3 \Big(42 (1-e)+(13-9 e) St^2\Big).
 \end{eqnarray}

\section{Ordering analysis to determine three temperatures}
\label{sec:three_temperatures}

We will solve (\ref{eqn:energy_balance}) analytically in the asymptotic limit $\nu\ll1$, $St\gg1$, and $St^3\nu\ll1$~\citep{TK1995},
and  three feasible solutions have been found as described below.

\subsection{Temperature in the quenched state}

For $\xi\sim O(St^{3/2}\sqrt{\nu})$, the leading order term in (\ref{eqn:energy_balance}) is 
 $O(St^\frac{11}{2}\nu^\frac{3}{2})$ and consequently we have
\begin{equation}
 a_3\xi^3+a_1\xi=0,
\end{equation}
where
\begin{equation}
 a_3=5788125000 \upi^\frac{9}{2}St,\qquad a_1=-196000000 \upi^\frac{7}{2} (1+e)^2 St^4 \nu.
\end{equation}
The solution at this level of approximation is
\begin{equation}
 T_{qs}=\xi^2=\frac{32(1+e)^2}{945\upi}St^3\nu,
 \label{eqn:temperature_quenchedA}
\end{equation}
which corresponds to the temperature in the quenched state.
Note that the quenched temperature increases with increasing both $St$ and $\nu$.

\subsection{Unstable temperature}

When $\xi\sim O(St^3\nu)^{-1}$, the highest-order term in (\ref{eqn:energy_balance}) is $O(1/St^8\nu^3)$, and
on neglecting terms smaller than this, we have at leading order
\begin{equation}
 a_4\xi^4+a_3\xi^3=0,
\end{equation}
where
\begin{equation}
 a_4=-6890625 (1+e)(107+193e) \upi^4 St^4 \nu,\qquad a_3=5788125000 \upi^\frac{9}{2} St.
\end{equation}
Therefore, we have
\begin{equation}
   \sqrt{T_{us}}=\xi=\frac{840\sqrt{\upi}}{(1+e)(107+193e)}\left(\frac{1}{St^3\nu}\right),
 \label{eqn:temperature_unstableA}
\end{equation}
This is the temperature of an intermediate state which is unstable -- note that $T_{us}$ decreases with  increasing $St$ and $\nu$.

\subsection{Temperature in the ignited state}

In the asymptotic limit of $\xi\sim O(St/\nu)$, the leading order term of $a_i\xi^i\;i=0(1)11$ is $O(St^{12}/\nu^3)$ and consequently we have from (\ref{eqn:energy_balance})
\begin{equation}
 a_7\xi^7+a_6\xi^6=0,
\end{equation}
where
\begin{equation}
\left.
\begin{array}{lcl}
 a_7&=&95256000(3-e)(1+e)^4(12607-19952e+10099e^2-1746e^3)\upi^\frac{5}{2}St^5\nu^4,\\
 a_6&=&-9922500(1+e)^3(1691+539e-1223e^2+337e^3)\upi^3 St^6 \nu^3.
 \end{array}
 \right\}.
\end{equation}
Therefore, the temperature at this order of approximation is
\begin{equation}
 \sqrt{T_{is}}=\xi =\frac{5(1691+539e-1223e^2+337e^3)\sqrt{\upi}}{48(3-e)(1+e)(12607-19952e+10099e^2-1746e^3)}\left(\frac{St}{\nu}\right),
 \label{eqn:temperature_ignitedA}
\end{equation}
which corresponds to the temperature in the ignited state.
While $T_{is}$ increases with increasing $St$, it deceases with increasing the particle volume fraction $\nu$.

\section{Analytical determination of limit-points  $St_{c_1}$ and $St_{c_2}$}
\label{sec:critical_points}

At the critical/limit points, two solution branches of  (\ref{eqn:energy_balance}) corresponding to two different states 
[(i) quenched $(T_{qs})$ and unstable $(T_{us})$ states and (ii) unstable $(T_{us})$ and ignited $(T_{is})$ states]
meet and consequently we have  saddle-node bifurcations  from one stable state to another. Therefore, these limit points correspond 
to the double roots of (\ref{eqn:energy_balance}) at which  the following conditions must be satisfied:
\begin{equation}
 \mathcal{G}(\xi_c)=0
 \quad
 \mbox{and}
 \quad
 \mathcal{G}'(\xi_c)=0.
\end{equation}

\subsection{Determining $St_{c_1}$: discontinuous transition from ``ignited'' to ``quenched'' states}

The critical Stokes number, $St_{c_1}$, for the transition from the ignited to quenched states corresponds to the limit point at which
 the temperatures corresponding to the ignited ($T_{is}$) and unstable ($T_{us}$) branches overlap with each other.
Considering  $\xi\sim O(\nu St)^{-1} \gg 1$, and retaining the highest-order terms, (\ref{eqn:energy_balance}) reduces to
\begin{eqnarray}
 \mathcal{G} &\approx& a_{7}\xi^{7}+a_{6}\xi^{6}+a_{5}\xi^{5}+a_{4}\xi^{4}+a_{3}\xi^{3}=0 = a_{7}\xi^{4}+a_{6}\xi^{3}+a_{5}\xi^{2}+a_{4}\xi+a_{3},
             \label{eqn:4thdegree}\\
{\rm and}    && 4a_{7}\xi^{3}+3a_{6}\xi^{2}+2a_{5}\xi+a_{4}=0,         
\end{eqnarray}
where 
\begin{equation}
\left.
\begin{array}{rcl}
 a_7 &=& 95256000 (3-e) (1+e)^4 (12607-19952e+10099e^2-1746e^3)\pi^\frac{5}{2} St^5\nu^4,\\
 a_6 &=& 9922500 (1+e)^3\pi^3 St^4 \Big(4 (56617-78677e+35629e^2-5361e^3) \\
      &&  \qquad\qquad\qquad\qquad\qquad\qquad-(1691+539e-1223e^2+337e^3) St^2\Big)\nu^3,\\
 a_5 &=& 16537500 (1+e)^2\pi^\frac{7}{2} St^3 \Big(12(3437-3093e+688e^2)\\
      &&  \qquad\qquad\qquad\qquad\qquad\qquad-(477+442e-247e^2)St^2\Big)\nu^2,\\
 a_4 &=& 6890625 (1+e)\pi^4 St^2 (6(2437-1069e)-(107+193e)St^2)\nu,\\
 a_3 &=& 5788125000\pi^\frac{9}{2} St.
 \end{array}
\right\}
\end{equation}
Using the condition of equal roots of a fourth-degree polynomial (\ref{eqn:4thdegree}), we obtain an expression for the critical Stokes number
for the ``ignited-to-unstable'' transition:
\begin{equation}
 St_{c_1}\approx9.9-4.91e.
 \label{eqn:critical_stokes_1A}
\end{equation}
While decreasing the Stokes number along the ignited-state branch (see figure~\ref{fig:fig2}), the system jumps from the
ignited to the quenched state at $St<St_{c_1}$ for all $\nu < \nu^l_{us}$ (3.8). Therefore, (\ref{eqn:critical_stokes_1A}) represents the minimum/critical Stokes number
below which (\ref{eqn:energy_balance}) admits the unique ``quenched'' state solution.



\subsection{Determining $St_{c_2}$: discontinuous transition from ``quenched'' to ``ignited'' state}

The limit point corresponding to the overlap of the quenched and unstable branches of the system is denoted by the Stokes number $St_{c_2}$ at which 
 the  temperatures associated with the quenched ($T_{qs}$) and unstable ($T_{us}$) states coincide --
above this critical value of Stokes number the quenched state ceases to exist.
Mathematically, $St_{c_2}$ is the point of the double root $T_{is}=T_{us}$ of   (\ref{eqn:energy_balance}).
above which there exists only one feasible solution $T_{is}$ (corresponding to the ignited state) and the system jumps from the quenched state into the ignited state
At this order of approximation $\xi\sim O(1)$ and the highest order terms are of the orders of $\nu St^4$ and $St$. 
Therefore on neglecting the terms of $O(St^4\nu^2)$ and using the statement of $T_{is}=T_{us}$, we have from (\ref{eqn:energy_balance})
\begin{eqnarray}
 \mathcal{G}(\xi_c) &\approx& a_4\xi^4+a_3\xi^3+a_1\xi=0 = a_4\xi^3+a_3\xi^2+a_1,
         \label{eqn:1_St_c}
         \\
{\rm and}\quad
 \mathcal{G}'(\xi_c) &\approx& 3a_{4}\xi^2+2a_{3}\xi=0,
          \label{eqn:2_St_c}
\end{eqnarray}
where
\begin{equation}
\left.
\begin{array}{rcl}
 a_4 &=& -6890625 (1+e) (107+193e)\pi^4 St^4\nu,\\
 a_3 &=& 5788125000\pi^\frac{9}{2} St,\\
 a_1 &=& - 196000000 (1+e)^2\pi^\frac{7}{2} St^4\nu.
 \end{array}
\right\}
\end{equation}
It  follows from (\ref{eqn:2_St_c}) that
\begin{equation}
 \xi_c=\frac{-2a_3}{3a_4}=\frac{560\sqrt{\pi}}{(1+e)(170+193e)St^3\nu}.
 \label{eqn:3_St_c}
\end{equation}
On substituting (\ref{eqn:3_St_c}) into (\ref{eqn:1_St_c}) we obtain the {\it critical-surface}
\begin{equation}
 St_{c_2}^3\nu_c=\Bigg(\frac{3087000 \pi^2}{(1+e)^4 (107+193e)^2}\Bigg)^\frac{1}{3},
 \label{eqn:final_St_cA}
\end{equation}
above which only the ignited state exists.

\section{Grad's moment expansion (GME) for inelastic gas-solid suspension}

The standard Grad's moment expansion (GME) in terms of a truncated Hermite series around the Maxwellian~\citep{Grad1949}
has been employed by many researchers  \citep{HH1982,TK1995,CRG2015} to analyse the Boltzmann equation for  a ``sheared'' hard-sphere gas 
as well as  gas-solid suspensions.
\begin{itemize}
\item
\cite{HH1982}$\Rightarrow$ $e=1,\;St=\infty$ \quad (Dilute gas of elastic hard-spheres)
\item
\cite{TK1995}$\Rightarrow$ $e=1,\;St$ finite  \quad (Suspension of elastic hard-spheres)
\item
\cite{CRG2015}$\Rightarrow$ $e\neq1,\;St$ finite \quad (Suspension of inelastic hard-spheres)
\end{itemize}

For the case of a dilute gas-solid suspension of ``inelastic'' hard-spheres,
the  collisional production term of the second moment has been evaluated as:
\begin{eqnarray}
 \aleph_{\alpha\beta}&=&-\frac{8\rho_p\nu^2(1-e^2)T^\frac{3}{2}}{\sqrt{\pi}\sigma}\delta_{\alpha\beta}-\frac{24\nu(1+e)(3-e)T^\frac{1}{2}}{5\sqrt{\pi}\sigma}P_{\langle\alpha\beta\rangle}\nonumber\\
 &&+\frac{(1+e)}{35\sqrt{\pi}\sigma \rho_p T^\frac{1}{2}}\underline{\Big\{(5+3e)P_{\langle kl\rangle}P_{\langle kl\rangle}\delta_{\alpha\beta}+12(e-3)P_{\langle \alpha l\rangle}P_{\langle l \beta\rangle}\Big\}},
 \label{eqn:nonlinGrad1}
\end{eqnarray}
 where the underlined terms represent the quadratic nonlinearity in the pressure deviator $P_{\langle \alpha\beta\rangle}=P_{\alpha\beta} - p\delta_{\alpha\beta}$,
with $p=P_{\alpha\alpha}/3$; $\rho_p=m/(\pi\sigma^3/6)$ is the intrinsic/material density of particles, $\nu$ is the particle volume fraction and $e$ is the restitution coefficient.
 In fact,  the second normal-stress difference is zero (${\mathcal N}_2=0$) in the absence of the underlined non-linear terms in (\ref{eqn:nonlinGrad1}),
 see the proof at the end of this appendix.

Defining the non-dimensional quantities as
\begin{eqnarray}
 P^*=\frac{P}{\rho_p\nu (\dot\gamma \sigma/2)^2}, \qquad T^*=\frac{T}{(\dot\gamma \sigma/2)^2},
 \qquad \aleph^* = \frac{\aleph}{\rho_p\nu {\dot\gamma}^3 (\sigma/2)^2}, 
\end{eqnarray}
and  on omitting the $*$ signs, for convenience, 
the dimensionless second-moment balance for steady homogeneous shear flow,
\begin{eqnarray}
  P_{\delta\beta}u_{\alpha,\delta}+ P_{\delta\alpha}u_{\beta,\delta}+\frac{2}{St}P_{\alpha\beta}=\aleph_{\alpha\beta},
\end{eqnarray}
can be written in component form as follows:
\begin{eqnarray}
 &&(1+e)(5+3e)\Big(P_{\langle xx\rangle}^{2}+P_{\langle yy\rangle}^{2}
 +P_{\langle zz\rangle}^{2}+2P_{ xy}^{2}\Big)
 -12(1+e)(3-e)\Big(P_{\langle xx\rangle}^{2}+P_{ xy}^{2}\Big) \nonumber\\
 && \qquad \qquad -280(1-e^2)T^2-168(1+e)(3-e)T P_{\langle xx\rangle}
 -\frac{140\sqrt{\pi}\sqrt{T}P_{ xy}}{\nu} \nonumber\\
 &&\hspace{7cm}-\frac{140\sqrt{\pi}\sqrt{T}}{St\nu}(T+P_{\langle xx\rangle})=0,\\
 &&(1+e)(5+3e)\Big(P_{\langle xx\rangle}^{2}+P_{\langle yy\rangle}^{2}
 +P_{\langle zz\rangle}^{2}+2P_{ xy}^{2}\Big)
 -12(1+e)(3-e)\Big(P_{\langle yy\rangle}^{2}+P_{ xy}^{2}\Big) \nonumber\\
 && \qquad -280(1-e^2)T^{2}-168(1+e)(3-e)T P_{\langle yy\rangle}-\frac{140\sqrt{\pi}\sqrt{T}}{St\nu}(T+P_{\langle yy\rangle})
  =0,\\
 &&(1+e)(5+3e)\Big(P_{\langle xx\rangle}^2+P_{\langle yy\rangle}^2
 +P_{\langle zz\rangle}^2+2P_{ xy}^2\Big)  -12(1+e)(3-e)P_{\langle zz\rangle}^2\nonumber\\
 &&\qquad -280(1-e^2)T^2-168(1+e)(3-e)T P_{\langle zz\rangle}-\frac{140\sqrt{\pi}\sqrt{T}}{St\nu}(T+P_{\langle zz\rangle})
  =0,\\
 &&12(1+e)(3-e)P_{ xy} P_{\langle zz\rangle}
 -168(1+e)(3-e) T P_{ xy}
 -\frac{70\sqrt{\pi}\sqrt{T}}{\nu}(T+P_{\langle yy\rangle})\nonumber\\
 &&\hspace{8cm}-\frac{140\sqrt{\pi}T^\frac{1}{2}}{St\nu}P_{ xy} =0,
\end{eqnarray} 
along with  constraint $\widehat{P}_{\alpha\alpha}=0$.
These equations have been solved numerically for specified values of $e$, $St$ and $\nu$ to yield $T$, $P_{\langle \alpha\alpha\rangle}$ and $P_{xy}$;
two normal stress differences
 ${\mathcal N}_1$ and  ${\mathcal N}_2$ can be expressed in terms of $P_{\langle \alpha\alpha\rangle}$.  These are dubbed ``GME'' solutions
and their comparisons with the present theory (\S4) based on anisotropic-Maxwellian expansion (AME) are shown in figures~
\ref{fig:fig-N12-com}, \ref{fig:fig-N12-com1} and \ref{fig:fig-N12-com2}, as discussed in  \S5.1 and \S5.2.

\begin{theorem}
The source term is uniquely decomposed as  $\aleph_{\alpha\beta} = \left(\frac{1}{3}\aleph_{\gamma\gamma}\right) \delta_{\alpha\beta} + \aleph_{\langle\alpha\beta\rangle}$.
If $\aleph_{\langle\alpha\beta\rangle} =  B P_{\langle\alpha\beta\rangle}$,
then ${\mathcal N}_2=0$.
\end{theorem}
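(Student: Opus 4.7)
The plan is to project the second-moment balance onto its traceless part and then exploit the special index-structure of the shear-production tensor in USF, which treats the gradient direction $y$ and the vorticity direction $z$ symmetrically. Once one sees that $-\tfrac{2}{3}\dot\gamma P_{xy}$ is the common value of the $(yy)$ and $(zz)$ components of the deviatoric shear source, any hypothesis of the form $\aleph_{\langle\alpha\beta\rangle}=B P_{\langle\alpha\beta\rangle}$ (a scalar-multiplier/relaxation ansatz for the traceless source) immediately forces $P_{\langle yy\rangle}=P_{\langle zz\rangle}$ and hence $\mathcal{N}_2=0$.

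Concretely, I would first take the trace-free part of the second-moment balance $P_{\delta\beta}u_{\alpha,\delta}+P_{\delta\alpha}u_{\beta,\delta}+\tfrac{2}{St}P_{\alpha\beta}=\aleph_{\alpha\beta}$. Using the uniqueness of the decomposition $P_{\alpha\beta}=p\,\delta_{\alpha\beta}+P_{\langle\alpha\beta\rangle}$ and the hypothesis $\aleph_{\langle\alpha\beta\rangle}=B P_{\langle\alpha\beta\rangle}$, the deviatoric balance becomes $\mathrm{dev}[P_{\delta\beta}u_{\alpha,\delta}+P_{\delta\alpha}u_{\beta,\delta}]=(B-2/St)\,P_{\langle\alpha\beta\rangle}$.

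Next I would compute the left-hand side for USF, where $u_{\alpha,\delta}=\dot\gamma\,\delta_{\alpha x}\delta_{\delta y}$. A short index manipulation yields $P_{\delta\beta}u_{\alpha,\delta}+P_{\delta\alpha}u_{\beta,\delta}=\dot\gamma\,(P_{y\beta}\delta_{\alpha x}+P_{y\alpha}\delta_{\beta x})$, whose trace is $2\dot\gamma P_{xy}$. Component-by-component this gives $\dot\gamma(2P_{xy},0,0; P_{yy},P_{yz},0)$ in the $(xx,yy,zz;xy,xz,yz)$ slots, so after subtracting $\tfrac{1}{3}\cdot 2\dot\gamma P_{xy}\,\delta_{\alpha\beta}$ the $(yy)$ and $(zz)$ deviatoric components are \emph{both} equal to $-\tfrac{2}{3}\dot\gamma P_{xy}$. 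This is the algebraic fingerprint of USF: the only shear-induced source on the diagonal lives in the shear plane, leaving $y$ and $z$ on equal footing.

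Finally, equating the $(yy)$ and $(zz)$ components with the right-hand side yields $(B-2/St)P_{\langle yy\rangle}=-\tfrac{2}{3}\dot\gamma P_{xy}=(B-2/St)P_{\langle zz\rangle}$, so in the generic non-degenerate case $B\neq 2/St$ (i.e.\ whenever the solution carries a nonzero viscous stress) one concludes $P_{\langle yy\rangle}=P_{\langle zz\rangle}$, equivalently $P_{yy}=P_{zz}$, hence $\mathcal{N}_2=(P_{yy}-P_{zz})/p=0$. The only real obstacle is conceptual rather than calculational: recognising that $\mathcal{N}_2$ is precisely the combination of stress components that the shear-production tensor leaves untouched, so that no isotropic relaxation ansatz for the traceless source can ever generate it — this is why the nonlinear quadratic Grad terms underlined in (E\,1), which break the $\aleph_{\langle\alpha\beta\rangle}\propto P_{\langle\alpha\beta\rangle}$ structure, are indispensable for obtaining $\mathcal{N}_2\neq 0$ within the standard GME framework.
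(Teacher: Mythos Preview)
Your argument is correct and follows essentially the same route as the paper: both exploit that in USF the shear-production tensor has equal $(yy)$ and $(zz)$ diagonal entries (both zero before deviatoric projection), so a linear ansatz $\aleph_{\langle\alpha\beta\rangle}=B P_{\langle\alpha\beta\rangle}$ forces $P_{\langle yy\rangle}=P_{\langle zz\rangle}$. The paper's proof is terser --- it works with the full components $\aleph_{22}=\aleph_{33}=0$ of the \emph{granular-gas} balance (drag omitted) and then subtracts --- whereas you project onto the traceless part first and retain the $2/St$ drag term, which leads to the slightly sharper non-degeneracy condition $B\neq 2/St$ instead of the paper's implicit $B\neq 0$; this is a cosmetic refinement rather than a different idea.
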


\begin{proof}
For the case of homogeneous shear $u_x=\dot{\gamma}y$, $u_y=0$, $u_z=0$; the balance of second moment for a granular gas is
\begin{equation}
 {\rm P}_{\delta\beta}{\rm u}_{\alpha,\delta}+{\rm P}_{\delta\alpha}
                               {\rm u}_{\beta,\delta} = \aleph_{\alpha\beta}.
\end{equation}
Now, upon substituting $\alpha=2$, $\beta=2$ and $\alpha=3$, $\beta=3$ we have
\begin{equation}
  \aleph_{22} =  0= \aleph_{33}.
\end{equation}
From ${P}_{ij}=p\delta_{ij}+{P}_{\langle ij\rangle}$, we can write
\begin{equation}
  {\mathcal{N}}_2 = \left({P}_{\langle 22\rangle} - {P}_{\langle 33\rangle}\right) =B^{-1}\left({\aleph}_{\langle 22\rangle} - {\aleph}_{\langle 33\rangle}\right) = 0.
\label{eqn:N2=0}
\end{equation}
\end{proof}

Of course, (\ref{eqn:N2=0}) is in  contradiction with (i) the nonlinear expression  (\ref{eqn:nonlinGrad1})  obtained from the standard Grad-moment expansion 
as well as with (ii)  our choice of anisotropic Maxwellian distribution function, both  yielding ${\mathcal N}_2\neq 0$.


\end{document}